\newcommand{\calC}{\ensuremath{\mathcal{C}}}
\newcommand{\calP}{\ensuremath{\mathcal{P}}}
\newcommand{\E}{\mathop{\mathbb{E}}}
\newcommand{\ent}{\mathop{\text{H}}}
\newcommand{\I}{\mathop{\text{I}}}
\newcommand{\size}[1]{\ensuremath{\left|#1\right|}}
\newcommand{\set}[1]{\left\{ #1 \right\}}
\DeclarePairedDelimiter{\floor}{\lfloor}{\rfloor}
\DeclarePairedDelimiter{\ceil}{\lceil}{\rceil}
\newcommand{\alice}{{\sf Alice}}
\newcommand{\bob}{{\sf Bob}}
\newcommand{\cgst}{CONGEST}
\newcommand{\twos}{Algorithm~${\tt 2S}$}
\newcommand{\twop}{Algorithm~${\tt 2P}$}
\newcommand{\fourap}{Algorithm~${\tt 4AP}$}
\newcommand{\fourp}{Algorithm~${\tt 4P}$}
\newcommand{\eightap}{Algorithm~${\tt 8AP}$}
\newcommand{\pc}{{\sf PART-COMP$_{m,p}$}}
\DeclareMathOperator{\dist}{\delta}
\declaretheorem{theorem}
\declaretheorem[numberlike=theorem]{lemma}
\declaretheorem[numberlike=theorem]{corollary}
\declaretheorem[numberlike=theorem]{conjecture}
\newtheorem*{theorem*}{Theorem}
\begin{document}

\title{Distributed Construction of Purely Additive Spanners%
\thanks{An extended abstract of this work will be presented in DISC 2016.}}
\author{Keren Censor-Hillel\footnotemark[2]
        \and Telikepalli Kavitha\footnotemark[3]
        \and Ami Paz\footnotemark[2]
        \and Amir Yehudayoff\footnotemark[4]}

\renewcommand*{\thefootnote}{\fnsymbol{footnote}}
\footnotetext[2]{Department of Computer Science, Technion, Israel. \texttt{\{ckeren,amipaz\}@cs.technion.ac.il}.
Supported by ISF individual research grant 1696/14.
Part of this work was done while Ami Paz was
visiting TIFR, Mumbai.}
\footnotetext[3]{Tata Institute of Fundamental Research, India. \texttt{kavitha@tcs.tifr.res.in}.}
\footnotetext[4]{Department of Mathematics, Technion, Israel.
\texttt{amir.yehudayoff@gmail.com}.}
\renewcommand*{\thefootnote}{\arabic{footnote}}

\maketitle

\begin{abstract}
This paper studies the complexity of distributed construction of purely additive spanners in the CONGEST model.
We describe algorithms for building such spanners in several cases. Because of the need to simultaneously make decisions at far apart locations, the algorithms use additional mechanisms compared
to their sequential counterparts.

We complement our algorithms with a lower bound on the number of rounds
required for computing pairwise spanners. The standard reductions from set-disjointness and equality seem unsuitable for this task because no specific edge needs to be removed from the graph.
Instead, to obtain our lower bound, we define a new communication
complexity problem that reduces to computing a sparse spanner, and
prove a lower bound on its communication complexity using information theory. This technique significantly extends the current toolbox used for obtaining lower bounds for the CONGEST model, and we believe it may find additional applications.
\end{abstract}

\section{Introduction}
A graph \emph{spanner} is a sparse subgraph that guarantees some bound on how much the original distances are stretched. Graph spanners, introduced in 1989~\cite{PS89, PelegU89a}, are fundamental graph structures which are central for many applications, such as synchronizing distributed networks~\cite{PelegU89a}, information dissemination~\cite{Censor-HillelHKM12}, compact routing schemes~\cite{Chechik13a, PelegU89b,ThorupZ01}, and more.

Due to the importance of spanners, the trade-offs between their possible sparsity and stretch have been the focus of a huge amount of literature. Moreover, finding time-efficient constructions of spanners with optimal guarantees has been a major goal for the distributed computing community, with ingenious algorithms given in many studies (see, e.g.,~\cite{EP04, E05, ElkinZ06, DerbelGPV09, BaswanaKMP10, BS03, DerbelG08, DerbelGP07, DerbelGPV08, DubhashiMPRS05, Pettie10}).
One particular type of spanners are \emph{purely additive spanners}, in which the distances are promised to be stretched by no more than an additive term. However, distributed constructions of such spanners have been scarce, with the only known construction being a $(+2)$-additive spanner construction with $O(n^{3/2}\log{n})$ edges in $O(\sqrt{n}\log{n} + D)$ rounds in a network of size $n$ and diameter $D$ \cite{LenzenP13} (also follows from~\cite{HolzerW12}).

The absence of distributed constructions of purely additive spanners is explicitly brought into light by Pettie~\cite{Pettie10}, and implicitly mentioned in~\cite{DerbelGP07}.

This paper remedies this state of affairs, by providing a study of the complexity of constructing sparse purely additive spanners in the synchronous CONGEST model~\cite{Peleg_2000}, in which each of $n$ nodes can send an $O(\log{n})$-bit message to each of its neighbors in every round. Our contribution is twofold: first, we provide efficient constructions of several spanners with different guarantees, and second, we present new lower bounds for the number of rounds required for such constructions, using tools that are not standard in this context.

\subsection{The Challenge}

A subgraph $H$ of  an undirected unweighted graph $G=(V,E)$ is called a
purely additive spanner with stretch $\beta$ if for every every pair $(u,v) \in V\times V$, we have
$\dist_H(u,v) \le \dist_G(u,v) + \beta$, where $\dist_H(u,v)$ is the $u$-$v$ distance in $H$ and
$\dist_G(u,v)$ is the $u$-$v$ distance in $G$. The goal in spanner problems is to construct a subgraph
$H$ that is as sparse as possible with $\beta$ as small as possible, i.e., we seek a sparse subgraph of $G$
which approximates all distances with a small stretch.

The problem of computing sparse spanners with small stretch $\beta$ is well-studied and we know how to construct sparse purely
additive spanners for $\beta = 2,4,6$. These have sizes $O(n^{3/2})$~\cite{ACIM99}, $\tilde{O}(n^{7/5})$~\cite{Chechik13},
and $O(n^{4/3})$~\cite{BaswanaKMP10}, respectively. In a very recent breakthrough, it was shown that there is {\em no} purely
additive spanner of size at most $n^{4/3}/{2^{O(\sqrt{\log n})}}$~\cite{AbboudB16-2}.

In a bid to get sparser subgraphs than all-pairs spanners with the same stretch, the following relaxation
of {\em pairwise spanners} has attracted recent interest. Here we are given $\calP \subseteq V \times V$:
these are our ``relevant pairs'' and we seek a sparse subgraph which approximates distances between all pairs
in $\cal P$ with a small stretch. That is, for every pair $(u,v) \in \calP$, the graph $H$ should satisfy
$\dist_H(u,v) \le \dist_G(u,v) + \beta$ and for pairs $(u,v)$ outside $\calP$, the value $\dist_H(u,v)$ could be
arbitrarily large. Such a subgraph $H$ is called a $(+\beta)$-pairwise spanner.
We use $\tau(\calP)$ to denote the number of nodes
appearing in $\calP$, i.e.
$\tau(\calP) = \size{\set{u\mid \exists v :
\{u,v\}\in \calP}}$.

The problem of constructing sparse pairwise spanners was first studied by Coppersmith and Elkin~\cite{CE05} who showed
sparse subgraphs where distances for pairs in $\calP$ were {\em exactly  preserved}; these subgraphs were called
{\em pairwise preservers}.
A natural case for $\calP$ is when $\calP = S \times V$, where $S\subseteq V$ is a set of {\em source} nodes --- here we seek for a sparse
subgraph that well-approximates $s$-$v$ distances for all $(s,v) \in S \times V$. Such pairwise spanners are called
{\em sourcewise spanners}. Another natural setting is when  $\calP = S \times S$ and such pairwise spanners are called
{\em subsetwise spanners}.

Purely additive spanners are usually built in three steps:
first, building clusters which contain all high-degree nodes
and adding all the edges of the unclustered nodes;
second, building BFS trees which $(+2)$-approximate
all the paths with many missing edges;
and third, adding more edges to approximate the other paths.

While our constructions follow the general outline of known sequential constructions of pairwise additive spanners~\cite{Kavitha15,KavithaV13}, their techniques cannot be directly implemented in a distributed setting.
In the sequential setting, the \emph{clustering} phase
is implemented by repeatedly choosing a high-degree node
and adding some of its edges to the spanner;
these neighbors are marked and ignored in the rest of the phase.
In the distributed setting,
going over high degree nodes one by one,
creating clusters and updating the degrees is too costly.
Instead, we choose the cluster centers at random,
as done by Thorup and Zwick~\cite{TZ01},
Baswana and Sen~\cite{BS03},
and Chechik~\cite{Chechik13}
(see also Aingworth et al.~\cite{ACIM99}
for an earlier use of randomization for the
a dominating set problem).

Sources for BFS trees are carefully chosen in the
sequential setting by approximately solving a set-cover problem,
in order to cover all paths with many missing edges.
Once again, this cannot be directly implemented in the distributed setting,
as the knowledge of all paths cannot be quickly
gathered in one location,
so we choose the BFS sources randomly~\cite{Chechik13}.
In both the clustering and BFS phases,
the number of edges increases by a multiplicative $\log^c n$ factor,
for $c<1$.

The main challenge left is to choose additional edges
to add to the spanner in order to approximate the remaining paths well.
To this end, we make heavy use of the parallel-BFS technique
of Holzer and Wattenhofer~\cite{HolzerW12},
which allows to construct BFS trees rooted at $s$ different nodes in $O(s+D)$ rounds.
We use this technique to count edges in a path,
to count missing edges in it,
and to choose which edges to add to the spanner.
Yet, interestingly, we are unable to match the guarantee on the number of edges
of more sophisticated algorithms~\cite{BaswanaKMP10,Kavitha15,Woodruff_2010}.
Some of these algorithms use the \emph{value} of a path,
which is roughly the number of pairs of cluster that
get closer if the path is added to the spanner.
We are not able to measure this quantity efficiently in
the distributed setting,
and this is one of the reasons we are unable to introduce
$(+6)$-all-pairs spanner matching the sequential constructions.

\subsection{Our Contribution}

We provide various spanner constructions in the CONGEST model, as summarized in Tables~\ref{tab: results-size} and~\ref{tab: results-time}.

\begin{table}[htbp]
\small
\centering
\begin{tabular*}{\linewidth}{@{}l@{\extracolsep{\fill}}l@{}r@{}l@{}r@{}}
\toprule
Spanner Type       & \multicolumn{2}{@{}c}{Number of Edges --- Distributed}  & \multicolumn{2}{@{}c}{Number of Edges --- Sequential} \\
\midrule
$(+2)$-sourcewise  &
  $O\left(n^{5/4} \size{S}^{1/4} \log^{3/4} n\right)$ &
  (Thm.\ref{thm: 2s}) &
  $O\left(n^{5/4}\size{S}^{1/4} \log^{1/4}n\right)$ & \cite{KavithaV13} \\

$(+2)$-pairwise  &
  $O\left(n \size{\calP}^{1/3} \log^{2/3}n\right)$ &
  (Thm.\ref{thm: 2p}) &
  $O\left(n \size{\calP}^{1/3}\right) $ & \cite{AbboudB16-1} \\

$(+4)$-pairwise  &
  $O\left(n\size{\calP}^{2/7}\log^{6/7}n \right)$ &
  (Thm.\ref{thm: 4p}) &
  $O\left(n\size{\calP}^{2/7}\log^{3/7}n \right)$ &
  \cite{Kavitha15} \\

$(+4)$-all-pairs  &
  $O\left(n^{7/5} \log^{4/5} n\right)$  &
  (Thm.\ref{thm: 4ap}) &
   $O\left(n^{7/5} \log^{1/5} n\right)$ & \cite{Chechik13} \\

$(+8)$-all-pairs  &
  $O\left(n^{15/11}\log^{10/11}{n}\right)$ &
  (Thm.\ref{thm: 8ap}) &
  $O\left(n^{4/3}\right)$ &
  \cite{BaswanaKMP10} \\

$(+2)$-subsetwise  &
  $O\left(n\size{S}^{2/3} \log^{2/3}n\right)$ &
  (Cor.\ref{cor: 24sub}) &
  $O\left(n\size{S}^{1/2}\right)$ & \cite{Pettie09,CGK13} \\

$(+4)$-subsetwise  &
  $O\left(n\size{S}^{4/7}\log^{6/7}n \right)$ &
  (Cor.\ref{cor: 24sub}) &
  $O\left(n\size{S}^{1/2} \right)$ &
  \cite{Pettie09,CGK13} \\

\bottomrule
\end{tabular*}
\caption{The number of edges in our new constructions versus prior, sequential work.
We compare our $(+4)$-subsetwise with a sequential
construction of a $(+2)$-subsetwise spanner,
and our $(+8)$-all-pairs spanner
with a $(+6)$-all-pairs spanner.}
\label{tab: results-size}
\end{table}

\begin{table}[htbp]
\small
\centering
\begin{tabular*}{\linewidth}{@{}l@{\extracolsep{\fill}}l@{}r@{}l@{}r@{}}
\toprule
Spanner Type       & \multicolumn{2}{@{}c}{Number of Rounds}  & \multicolumn{2}{@{}c}{Lower Bounds} \\
\midrule
$(+2)$-sourcewise  &
  $O\left(\size{S} + D\right)$ &
  (Thm.\ref{thm: 2s}) &
  $\min\set{\tilde\Omega\left(\frac{n^{3/8}}{\size{S}^{1/8}}\right), \Omega \left( D \right)}$ &
  \cite{Pettie10} \\

$(+2)$-pairwise   &
  $O\left(\tau(\calP) + D\right)$&
  (Thm.\ref{thm: 2p}) &
  $\min\set{\tilde\Omega\left(\frac{n^{1/2}}{\size{\calP}^{1/6}}\right),
    \Omega \left(D \right)}$ &
  \cite{Pettie10} \\

&&& $\Omega\left( \frac{\size{\calP}}{n\log n} \right)$ &
  (Thm.\ref{thm: 2p lb}) \\

$(+4)$-pairwise  &
  $O\left(\tau(\calP) + D \right)$ &
  (Thm.\ref{thm: 4p}) &
  $\min\set{\tilde\Omega\left(\frac{n^{1/2}}{\size{\calP}^{1/7}}\right), \Omega \left(D \right)}$ &
  \cite{Pettie10} \\

&&& $\Omega\left( \frac{\size{\calP}}{n\log n} \right)$ &
  (Thm.\ref{thm: beta-p lb}) \\

$(+4)$-all-pairs  &
  $O(n^{3/5} \log^{1/5} n + D)$ &
  (Thm.\ref{thm: 4ap}) &
  $\min\set{\tilde\Omega\left(n^{3/10}\right),
  \Omega \left(D \right)}$ &
  \cite{Pettie10} \\

$(+8)$-all-pairs  &
  $O(n^{7/11}\log^{1/11}{n} + D)$ &
  (Thm.\ref{thm: 8ap}) &
  $\min\set{\tilde\Omega\left(n^{7/22}\right), \Omega \left(D \right)}$ &
  \cite{Pettie10} \\

$(+2)$-subsetwise  &
  $O(\size{S} + D)$ &
  (Cor.\ref{cor: 24sub}) &
  $\min\set{\tilde\Omega\left(\frac{n^{1/2}}{\size{S}^{1/3}}\right), \Omega \left(D \right)}$ &
  \cite{Pettie10} \\

$(+4)$-subsetwise  &
  $O(\size{S} + D)$ &
  (Cor.\ref{cor: 24sub}) &
  $\min\set{\tilde\Omega\left(\frac{n^{1/2}}{\size{S}^{2/7}}\right), \Omega \left(D \right)}$ &
  \cite{Pettie10} \\

\bottomrule
\end{tabular*}
\caption{Running time: algorithms versus lower bounds,
for number of edges as in Table~\ref{tab: results-size}.
$\tilde\Omega$ hides polylogarithmic factors.
  }
\label{tab: results-time}
\end{table}

The distributed spanner construction algorithms we present
have three main properties: stretch, number of edges,
and running time.
All three properties hold w.h.p.\ (with high probability).
That is, the algorithm stops in the desired time, with the desired number of edges and the spanner produced has
the desired stretch with probability $1-n^{-c}$,
where $c$ is constant of choice.
However, we can trade the properties and guarantee two
of the three to always hold:
if the spanner is too dense or the stretch is too large,
we can repeat the algorithm;
if the running time exceeds some threshold,
we can stop the execution and output the whole graph to get $0$ stretch,
or output an empty graph to get the desired number of edges.
The edges of the constructed spanner can be counted over a
BFS tree in $G$ within $O(D)$ rounds.
In sourcewise, subsetwise and pairwise spanners,
the stretch is measured by running BFS from the relevant nodes
(nodes of $S$ of appearing in $\calP$) for $O(D)$ rounds
in $G$ and again in $H$;
in all-pairs spanners,
the stretch is measured by measuring the stretch
of the underlying sourcewise or subsetwise spanner.

We complement our algorithms with some lower bounds
for the \cgst{} model.
We show that any
algorithm that constructs an additive $(+2)$-pairwise spanner
with $m$ edges on $p\leq m$ pairs
must have at least $\Omega(p/(n\log n))$ rounds,
as long as $m \leq n^{3/2}$.
For example, a \cgst{} construction of a $(+2)$-pairwise spanner
must take $\tilde \Omega(\sqrt{n})$ rounds.
We also prove lower bounds for $(\alpha,\beta)$-pairwise spanners
(i.e., for which $\dist_H(u,v) \le \alpha\dist_G(u,v) + \beta$).
We show that any
algorithm that constructs an $(\alpha,\beta)$-pairwise spanner with $m$ edges on $p\leq m$ pairs
must have at least $\Omega(p/(n\log n))$ rounds,
as long as $m \leq n^{1+\frac{4}{9\alpha+3\beta-10}}$,
where the constant in the $\Omega$ notation depends
on $\alpha,\beta$.

We believe the difficulty in obtaining this lower bound arises from the fact that standard reductions from set-disjointness and equality are unsuitable for this task.
At a high level, in most standard reductions the problem boils down
to deciding the existence of an edge
(which can represent, e.g., the intersecting element between the inputs);
when constructing spanners, no specific edge needs to be added to the spanner or omitted from it, so the solution is allowed a considerable amount of slack that is not affected by any particular edge alone.

Instead, to obtain our lower bound, we define a new communication complexity problem that reduces to computing a sparse spanner, and prove a lower bound on its communication complexity using information theory. In this new problem, which we call \pc{}, \alice{} has a set $x \subseteq \set{1,\ldots, m}$ of size $\size{x} = p$, and \bob{} has to output a set $y \subseteq \set{1,\ldots, m}$ of size $\size{y} = m/2$ so that $x \cap y = \emptyset$. We show that any protocol that solves \pc{} must convey $\Omega(p)$ bits of information about the set $x$.
This technique significantly extends the current toolbox used for obtaining lower bounds for the CONGEST model. As such, we believe it may find additional applications, especially in obtaining lower bounds for computing in this model.

\paragraph{Roadmap: }
We conclude this section with a further discussion of related work. Section~\ref{sec:preliminaries} contains the definition of the model and some basic routines.
In Section~\ref{sec: algorithms}
we present distributed algorithms for
computing the various types of spanners discussed above.
In Section~\ref{sec:lowerbounds} we present our new lower bounds, and we conclude with a short discussion in Section~\ref{sec:discussion}.

\subsection{Related Work}
Sparse spanners with a small multiplicative stretch
are well-understood: Alth\"ofer et al.~\cite{ADDJS93} in 1993 showed that any weighted graph $G$ on $n$ vertices has a
spanner of size $O(n^{1+1/k})$ with multiplicative stretch $2k-1$, for every integer $k \ge 1$.
Since then, several works~\cite{BS03,DHZ97,EP04,E05,knudsen14,Pettie09,RMZ05,RZ04,TZ06}
have considered the problem of  efficiently constructing sparse spanners with small stretch and have used spanners in the
applications of computing approximate distances and approximate shortest paths efficiently.

For unweighted graphs, one seeks spanners where the stretch is purely additive and as mentioned earlier, an almost tight
bound of $n^{4/3}$ is known for how sparse a purely additive spanner can be. Bollob\'{a}s et al.~\cite{BCE03} were the first
to study a variant of pairwise preservers called {\em distance preservers}, where the set of relevant pairs is
$\calP = \{(u,v): \dist_G(u,v) \ge d\}$, for a given parameter $d$.
Coppersmith and Elkin~\cite{CE05} showed pairwise preservers of size $O(n\sqrt{|\calP|})$
and $O(n+|\calP|\sqrt{n})$ for any $\calP \subseteq V \times V$.
For $\size{P} = \omega\left(n^{3/4}\right)$,
the bound of $O(n\sqrt{|\calP|})$ for pairwise preservers
has very recently been improved to $O(n^{2/3}|\calP|^{2/3} + n|\calP|^{1/3})$ by Bodwin and Williams~\cite{BW16}.

The problem of designing sparse pairwise spanners was first considered by Cygan et al.~\cite{CGK13} who showed a
tradeoff between the additive stretch and size of the spanner. The current sparsest pairwise spanner with
purely additive stretch has size $O(n|\calP|^{1/4})$ and additive stretch 6~\cite{Kavitha15}.
Woodruff~\cite{Woodruff_2010} and Abboud and Bodwin~\cite{AbboudB16-1,AbboudB16-2} showed lower bounds for additive
spanners and pairwise spanners. Parter~\cite{Par14} showed sparse {\em multiplicative} sourcewise
spanners and a lower bound of $\Omega(n|S|^{1/k}/k)$ on the size of a sourcewise spanner with additive stretch
$2(k-1)$, for any integer $k \ge 1$.

Distributed construction of sparse spanners with multiplicative stretch was addressed in several studies~\cite{BaswanaKMP10, BS03, DerbelG08, DerbelGP07, DerbelGPV08, DubhashiMPRS05, Pettie10}.
Constructions of $(\alpha,\beta)$-spanners were addressed in~\cite{BaswanaKMP10, DerbelGPV09, Pettie10}. Towards the goal of obtaining purely additive spanners, for which $\alpha=1$, Elkin and Peleg~\cite{EP04} introduced nearly-additive spanners, for which $\alpha=1+\epsilon$. Additional distributed constructions of nearly-additive spanners are given in~\cite{E05, ElkinZ06, DerbelGPV09, Pettie10}. Finally, somewhat related, are constructions of various spanners in the streaming model, and in dynamic settings, both centralized and distributed~\cite{BaswanaKS12, Baswana08, BaswanaS08, Elkin07, Elkin07a}.

In his seminal paper, Pettie~\cite{Pettie10} presents lower bounds for the number of rounds needed by distributed algorithms in order to construct several families of spanners. Specifically, it is shown that computing an all-pair additive $\beta$-spanner with size $n^{1+\rho}$ in expectation, for a constant $\beta$, requires $\Omega\left(n^{(1-\rho)/2}\right)$ rounds of communication. Because this is an indistinguishability-based lower bound, it holds even for the less restricted LOCAL mode, where message lengths can be unbounded.

The lower bound is obtained by showing an $n$-node graph with diameter $D=\Theta\left(n^{(1-\rho)/2}\right)$ where, roughly speaking, removing \emph{wrong} edges induces a stretch that is too large, and identifying these wrong edges takes $\Omega(D)$ rounds. This gives a lower bound of $\min\set{\Omega\left(n^{(1-\rho)/2}\right),
\Omega \left(D \right)}$ rounds. Examining the construction in detail, it is not hard to show it works for other types of spanners as well: even for a single pair of nodes, or a set $S$ of size $2$,
at least $\Omega(D)$ rounds are necessary in order to avoid removing wrong edges.

\section{Preliminaries}
\label{sec:preliminaries}
\paragraph{The Model: }
The distributed model we assume is the well-known \cgst{} model~\cite{Peleg_2000}.
Such a system consists of a set of $n$ computational units, who exchange messages according to an undirected
\emph{communication graph} $G=(V,E)$, $\size{V} = n$, where nodes represent the computational units
and edges the communication links. Each node has a unique identifier which can be encoded using $O(\log n)$ bits.
The diameter of $G$ is denoted by $D$.

When the computation starts, each node knows its own identifier and the identifiers of its neighbors;
when there is a set $S$ of nodes or a set $\calP$ of node-pairs involved in the computation,
it also knows if it belongs to $S$,
or all the pairs in $\calP$ it belongs to.
The computation proceeds in rounds, where in each round each node sends an $O(\log n)$-bits message to each of its
neighbors, receives a message from each neighbor, and performs a computation. We use the number of rounds as our
complexity measure, while ignoring the local computation time; however, in our algorithms all local computations take polynomial time.
When the computation ends,
each node knows which of its neighbors is also its neighbor
in the new graph $H=(V,E')$ generated.
We do not assume that the global structure of $H$ is known to any of the nodes.

\paragraph{Clustering and BFS: }
The first building block in all of our algorithms is
\emph{clustering}.
A \emph{cluster} $C_i$ around a
\emph{cluster center} $c_i$
is a subset of $\Gamma_G (c_i)$,
the set of neighbors of $c_i$ in $G$
(which does not include $c_i$ itself).
A node belonging to a cluster
is \emph{clustered},
while the other nodes of $G$
are \emph{unclustered}.
We use $\calC$ to denote the set of
cluster centers
and $\hat\calC$ to denote the set of clusters.

In the \emph{clustering} phase of our algorithms
we divide some of the nodes into clusters.
We create a new graph containing all the edges connecting
a clustered node to its cluster center,
and all the edges incident on unclustered nodes.

Another building block is \emph{BFS trees}.
A BFS tree in a graph $G$,
rooted at a node $r$,
consists of shortest paths from $r$
to all other nodes in $G$.
The process of creating a BFS tree, known as BFS search,
is well-known in the sequential setting.
In the distributed setting,
a single BFS tree can be easily constructed
by a techniques called \emph{flooding}
(see, e.g.~\cite[\S 3]{Peleg_2000}),
and a celebrated result of Holzer and Wattenhofer~\cite{HolzerW12}
asserts that multiple BFS trees,
rooted at a set $S$ of nodes,
can be constructed in $O(\size{S} + D)$ rounds.
Here, $D$ denotes the diameter of the graph,
i.e.\ the maximal distance between two nodes.
We use this technique to add BFS trees
to the spanner we construct,
and to measure distances in the original graph.

\section{Building Spanners}
\label{sec: algorithms}

In this section we present distributed algorithms
for building several types of additive spanners.
For each spanner, we first describe a template
for constructing it independently of a computational model
and analyze its stretch and number of edges.
Then, we provide a distributed implementation of the algorithm
in the \cgst{} model and analyze its running time.

In a nutshell, our algorithms have three steps:
first, each node tosses a coin
to decide if it will serve as a cluster center;
second, each cluster center tosses another coin
to decide if it will serve as a root of a BFS tree;
third, add to the current graph edges that are part
of certain short paths.
The parameters of the coins
and the meaning of ``short'' are carefully chosen,
depending on the input to the problem
and the desired stretch.

Proving that the algorithms perform well
is about analyzing the probability of failure.
This analysis uses the graph structure
as well as standard concentration bounds.
In all of our algorithms,
$c$ is a constant that can be chosen
according to the desired exponent of $1/n$
in the failure probability.

\subsection{A $(+2)$-Sourcewise Spanner}
Our first algorithm constructs a $(+2)$-sourcewise spanner.
Given a set $S\subseteq V$,
the algorithm returns a subgraph $H$ of $G$
satisfying
$\delta_H(s,v) \leq \delta_G(s,v) +2$
for all $(s,v) \in S\times V$,
with
guarantees as given in the following theorem.

\begin{theorem}
\label{thm: 2s}
Given a graph $G$ on $n$ nodes
and a set of sources $S$,
a $(+2)$-sourcewise spanner
with $O(n^{5/4} \size{S}^{1/4} \log^{3/4} n)$ edges
can be constructed in $O(\size{S}+D)$ rounds
in the \cgst{} model
w.h.p.
\end{theorem}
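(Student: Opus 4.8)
The plan is to follow the three-step template outlined in the paper: a randomized clustering phase, a randomized BFS-sources phase, and a path-completion phase, with the coin probabilities tuned to balance the edge count. First I would set a clustering probability $q_1$ so that each node joins $\calC$ (becomes a cluster center) independently with probability $q_1$, and then form clusters by having every node with a neighbor in $\calC$ attach to one such neighbor, adding those edges plus all edges incident to unclustered nodes. A standard calculation shows that, w.h.p., every node of degree at least roughly $\frac{c\log n}{q_1}$ is clustered, so the unclustered nodes contribute at most $O(n \cdot \frac{\log n}{q_1})$ edges, while the cluster-attachment edges number at most $n$. Next, each cluster center independently becomes a BFS root with probability $q_2$, and we add a full BFS tree of $G$ rooted at each such node; this adds $O(n)$ edges per tree, and w.h.p.\ the number of trees is $\tilde O(n q_1 q_2)$, so this phase costs $\tilde O(n^2 q_1 q_2)$ edges.

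The heart of the argument is showing that after the first two phases, any remaining ``long'' shortest $s$--$v$ path $P$ (with $s \in S$) with many edges missing from the current graph $H$ is automatically $(+2)$-approximated. The key structural fact (as in Chechik and in Kavitha--Varma): if $P$ has $\geq t$ missing edges, then $P$ passes through $\geq t$ distinct clusters, so w.h.p.\ (for an appropriate threshold $t$ relative to $1/q_2$) some cluster on $P$ is centered at a BFS root $r$; then going from $s$ to $r$ via $P$, jumping to $r$'s cluster, and following $r$'s BFS tree to $v$ gives stretch at most $+2$ (the $+2$ coming from the two cluster hops). So it remains to handle paths with fewer than $t$ missing edges: in the third phase, for each source $s \in S$ we look at the BFS tree from $s$ and, for each $v$, add a constant number (at most $t-1$) of the missing edges on a shortest $s$--$v$ path to $H$; this can be done so that only $O(|S| \cdot n \cdot t)$ edges are added, but actually one is more careful and only adds edges so the total is $\tilde O(n|S| \cdot t)$ or so. Choosing $t = \Theta(1/q_2)$ up to logs, the total edge count becomes, up to polylog factors, $\frac{n}{q_1} + n^2 q_1 q_2 + \frac{n|S|}{q_2}$, and optimizing over $q_1, q_2$ yields $q_1 = \Theta\big((|S|/n)^{1/4}\big)$-type values giving $O(n^{5/4}|S|^{1/4})$ up to the stated $\log^{3/4} n$ factor.

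For the running-time claim, I would invoke the building blocks from the Preliminaries: the clustering phase is local (each node learns whether a neighbor is a center in one round) and costs $O(1)$ rounds; adding the BFS trees from all chosen roots uses the Holzer--Wattenhofer parallel-BFS in $O(s' + D)$ rounds where $s'$ is the number of roots, which is $\tilde O(nq_1q_2) \le O(|S|)$ for the chosen parameters (or one argues it is dominated by $|S|+D$); and the third phase runs BFS from each $s \in S$, again via parallel-BFS in $O(|S|+D)$ rounds, with each node locally deciding which of its incident missing edges on its tree path to mark, and the counting of missing edges along a path done by propagating partial sums up the BFS tree in $O(D)$ additional rounds. Summing gives $O(|S| + D)$ rounds. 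All three guarantees (stretch, size, time) hold w.h.p.\ via a union bound over the $O(n^2)$ relevant pairs and $O(n)$ nodes, with $c$ chosen large enough.

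I expect the main obstacle to be the third phase: making precise exactly which missing edges to add so that \emph{every} short-missing-edge path is $(+2)$-preserved while keeping the edge budget at $\tilde O(n|S|/q_2)$, and doing this with only local decisions plus $O(D)$-round aggregation over BFS trees. In particular, a node must decide which of its incident edges lie on some relevant shortest path and are ``among the first $t-1$ missing ones,'' which requires carefully propagating, along each of the $|S|$ BFS trees simultaneously, both a hop-count and a missing-edge count without exceeding the $O(\log n)$-bit bandwidth; interleaving this with the Holzer--Wattenhofer schedule is the delicate part of the distributed implementation, whereas the probabilistic size analysis is essentially the sequential argument of~\cite{KavithaV13} transplanted to random cluster centers and random BFS sources.
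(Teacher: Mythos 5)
Your clustering and random-BFS-roots phases match the paper, and the stretch argument for paths with many missing edges (via a BFS-rooted cluster on the path) is the paper's argument. The gap is in the path-buying phase, and it is a real one: your budget does not yield the stated bound.

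You propose, for each pair $(s,v)\in S\times V$ whose shortest $s$--$v$ path has at most $t$ missing edges, to add those missing edges. That gives a budget of order $|S|\cdot n\cdot t$, which in your notation is the term $n|S|/q_2$. But optimizing $\tfrac{n}{q_1}+n^2q_1q_2+\tfrac{n|S|}{q_2}$ gives $q_1\approx(n|S|)^{-1/3}$ and total $\Theta(n^{4/3}|S|^{1/3})$, not $n^{5/4}|S|^{1/4}$; your claimed optimum is inconsistent with your own formula. The paper's Algorithm~${\tt 2S}$ avoids this by running path buying over \emph{source--cluster} pairs $S\times\hat\calC$, not source--vertex pairs $S\times V$: for each $(s,C_i)$, among the cheap shortest paths from $s$ to members of $C_i$, only the single shortest one is bought. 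Since $|\hat\calC|=O(nq_1)=O(n/h)$, this replaces your budget $n|S|t$ by $|S|\cdot|\hat\calC|\cdot t=O(|S|n^2\log^3 n/h^3)$, a factor $\approx q_1$ smaller (equivalently, your third term should be $n|S|q_1/q_2$). With that correction the balance gives $h=(n|S|)^{1/4}\log^{3/4}n$ and $O(n^{5/4}|S|^{1/4}\log^{3/4}n)$ edges. The reason one can afford to buy only one path per $(s,C_i)$ is the accompanying stretch argument you are missing: if the bought path ends at $v'\in C_i$ rather than at the desired $v\in C_i$, then $\dist_G(s,v')\le\dist_G(s,v)$ (we took the shortest cheap path), and $\dist_H(v',v)\le 2$ via the cluster center $c_i$, giving overall stretch $+2$; the case where $v$ is unclustered is then handled by walking $\rho$ from $v$ to the nearest clustered node, whose sub-path is entirely in $H$.

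This also changes the distributed implementation nontrivially. Instead of each node locally deciding which edges to mark on each of the $|S|$ trees, the paper has each clustered node $x$ forward its $(\text{distance},\text{missing-edge count})$ pairs for all $s\in S$ to its cluster center over $O(|S|)$ rounds; the center then locally selects one path per source and triggers a backwards BFS that marks the edges to buy. Your proposal omits this cluster-aggregation step, which is exactly what realizes the $|S|\times\hat\calC$ budget. (Your $h=\Omega(|S|)$ fallback observation is fine and also appears in the paper.)
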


This is only a factor $O(\log^{1/2}n)$ more than the number of edges given by the best sequential algorithm known
for this type of spanners~\cite{KavithaV13}.
Lemmas~\ref{lemma: 2s size} and \ref{lemma: 2s stretch} analyze the size and stretch of \twos{} given below.
The number of rounds of its distributed implementation is analyzed in Lemma~\ref{lemma: 2s complexity}, giving Theorem~\ref{thm: 2s}.

\subsubsection{\twos}

Input: a graph $G=(V,E)$; a set of source nodes $S\subseteq V$\\
Output: a subgraph $H$\\
Initialization:
$n=|V|$,
$h=(n\size{S})^{1/4} \log^{3/4} n$, and $H=(V,\emptyset)$

\paragraph{Clustering}
Pick each node as a \emph{cluster center}
w.p.\ $\frac{c \log n}{h}$,
and denote the set of selected nodes by
$\set{ c_1, c_2, \ldots }$.
For each node $v\in V$,
choose a neighbor $c_i$ of $v$ which
is a cluster center,
if such a neighbor exists,
add the edge $(v,c_i)$ to $H$,
and add $v$ to $C_i$;
if none of the neighbors of $v$
is a cluster center,
add to $H$ all the edges $v$ belongs to.

\paragraph{BFS}
Pick each cluster center as a \emph{root} of a BFS tree
w.p.\ $\frac{h^2}{cn\log n}$,
and add to $H$ a BFS tree rooted at each chosen root.

\paragraph{Path Buying}
For each source-cluster pair $(s, C_i) \in S \times \hat\calC$:
build a temporary set of paths,
containing a single, arbitrary shortest path
from $s$ to each $x\in C_i$;
omit from the set all paths with more than
$\frac{2c^2 n\log^2 n}{h^2}$ missing edges
(i.e. edges in $G$ but not in $H$);
if any paths are left,
add to $H$ the shortest among them.

\subsubsection{Analysis of \twos}

We now study the properties
of the spanner $H$ created by the algorithm;
in the next section,
we describe the implementation of the different phases
in the \cgst{} model
and analyze the running time of the algorithm.

\begin{lemma}
\label{lemma: 2s size}
Given a graph $G=(V,E)$ with $|V|=n$
and a set $S\subseteq V$,
\twos{} outputs a graph $H=(V,E')$,
$E'\subseteq E$,
with $|E'| = O(n^{5/4} \size{S}^{1/4} \log^{3/4} n)$
edges w.p. at least $1-O(n^{-c+2})$.
\end{lemma}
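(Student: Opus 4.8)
The plan is to bound, with high probability and separately for each of the three phases, the number of edges that \twos{} adds, and then take a union bound. Throughout write $h=(n\size{S})^{1/4}\log^{3/4}n$ as in the algorithm, and note that $nh=n^{5/4}\size{S}^{1/4}\log^{3/4}n$ is exactly the target quantity; so it suffices to show that each phase contributes $O(nh)$ edges with high probability.

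For the \emph{clustering} phase, every clustered node contributes exactly one edge to $H$, for a total of at most $n$. For the unclustered nodes, I would first argue that w.h.p.\ every node of degree at least $h$ is clustered: such a node stays unclustered only if none of its (at least $h$) neighbors is picked as a cluster center, which has probability at most $\bigl(1-\tfrac{c\log n}{h}\bigr)^{h}\le e^{-c\log n}=n^{-c}$, so a union bound over the at most $n$ high-degree nodes leaves this failing with probability at most $n^{-c+1}$. On the complementary event every unclustered node has degree $<h$, so the edges incident to unclustered nodes number at most $\sum_{v\text{ unclustered}}\deg_G(v)<nh$.

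For the \emph{BFS} phase, each node becomes a BFS root with probability $\tfrac{c\log n}{h}\cdot\tfrac{h^{2}}{cn\log n}=\tfrac hn$, so the number of roots is $\mathrm{Bin}(n,h/n)$ with mean $h$; since $h=\Omega(n^{1/4})$ is polynomially large, a Chernoff bound gives $O(h)$ roots with probability $1-n^{-\Omega(c)}$, and as each BFS tree has fewer than $n$ edges this phase adds $O(nh)$ edges. For the \emph{path buying} phase, the number of clusters is at most the number of cluster centers, a $\mathrm{Bin}(n,c\log n/h)$ variable with (again polynomially large) mean $cn\log n/h$, which is $O(n\log n/h)$ with probability $1-n^{-\Omega(c)}$ by Chernoff. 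Each of the at most $O\!\bigl(\size{S}\cdot n\log n/h\bigr)$ source--cluster pairs contributes at most $\tfrac{2c^{2}n\log^{2}n}{h^{2}}$ new edges (the missing edges of the bought path), so this phase adds $O\!\left(\dfrac{\size{S}\,n^{2}\log^{3}n}{h^{3}}\right)$ edges; substituting $h^{3}=(n\size{S})^{3/4}\log^{9/4}n$ turns this into $O(n^{5/4}\size{S}^{1/4}\log^{3/4}n)=O(nh)$.

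Summing the three $O(nh)$ bounds yields $|E'|=O(n^{5/4}\size{S}^{1/4}\log^{3/4}n)$, and a union bound over the three failure events (of probabilities $n^{-c+1}$, $n^{-\Omega(c)}$, $n^{-\Omega(c)}$) gives the stated $1-O(n^{-c+2})$, the dominant term coming from the union over high-degree nodes in the clustering step. I expect no serious obstacle here: the only genuinely load-bearing observation is that high-degree nodes get clustered w.h.p.\ (which is what lets the unclustered-node count be $O(nh)$), and the decisive algebraic point is that the specific choice of $h$ is exactly what makes all three phases contribute the same order; the remainder is routine Chernoff bookkeeping, together with a check that the sampling probability $h^{2}/(cn\log n)$ is at most $1$ in the parameter regime under consideration.
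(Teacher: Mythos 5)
Your proof is correct and follows essentially the same route as the paper's: the same three-phase edge count, the same argument that high-degree nodes are clustered w.h.p., the same Chernoff bounds for the number of BFS roots and cluster centers, and the same algebraic substitution of $h$. The only differences are cosmetic (you carry slightly tighter constants in the union bound and justify the Chernoff tail via $h=\Omega(n^{1/4})$ rather than $h\ge c\log n$), and you sensibly flag the sanity check that $h^2/(cn\log n)\le 1$, which the paper leaves implicit.
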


\begin{proof}
The algorithm starts with $H=(V,\emptyset)$,
and adds to it only edges from $G$.
We analyze the number of edges added in each phase.

In the first part of the clustering phase,
each node adds to $H$ at most one edge,
connecting it to a single cluster center,
for a total of $O(n)$ edges.
Then, the probability that a node
of degree at least $h$
is left unclustered
is at most $(1-\frac{c\log n}{h})^{h}$
which is $O(n^{-c})$.
A union bound implies that all nodes
of degree at least $h$
are clustered w.p.~$1-O(n^{-c+2})$,
and thus the total number of edges
added to $H$ by unclustered nodes
in the second part of the clustering phase
is $O(nh)$ w.p.~$1-O(n^{-c+2})$.

A node becomes a root in the BFS phase
if it is chosen as a cluster center
and then as a root,
which happens with probability
$ \frac{c \log n}{h} \cdot \frac{h^2}{cn\log n} = \frac{h}{n}$.
Letting $T$ denote the set of trees gives
$\E[\size{T}] = \frac{h}{n}\cdot n = h$,
and a Chernoff bound implies that
$\Pr[\size{T} >4h] \leq \exp(-h)$.
As $h\geq c\log n$
we have $\exp(-h) = O(n^{-c})$,
and the BFS phase adds at most $4h$ trees,
which are $O(nh)$ edges.

Finally,
each of the $n$ nodes is chosen as a cluster center
with probability $\frac{c\log n}{h}$,
so $\E[\size{\calC}] = \frac{cn\log n}{h}$.
A Chernoff bound implies
$\Pr [\size{\calC} > \frac{4cn\log n}{h} ]
\leq \exp(-\frac{cn\log n}{h})$;
as $h<n^{3/4}$, we have
$\exp(-\frac{cn\log n}{h})< \exp(-cn^{1/4}) = o(n^{-c})$.
For each pair in $S\times \hat\calC$,
at most $\frac{2c^2 n\log^2 n}{h^2}$ edges are added
in the path buying phase,
for a total of
$O(\size{S} \cdot \frac{n\log n}{h}\cdot
    \frac{n\log^2 n}{h^2})=
O(\frac{\size{S}n^2 \log^3 n}{h^3})$
edges.

Substituting $h=(n\size{S})^{1/4} \log^{3/4} n$ gives
a total of $O(n^{5/4} \size{S}^{1/4} \log^{3/4} n)$ edges,
as claimed.
\end{proof}

\begin{lemma}
\label{lemma: 2s stretch}
Given a graph $G=(V,E)$ with $|V|=n$
and a set $S\subseteq V$,
the graph $H$ constructed by \twos{}
satisfies $\delta_H(s,v) \leq \delta_G(s,v) +2$
for each pair $(s,v)\in S\times V$
w.p. at least~$1-O\left(n^{-c}\right)$.
\end{lemma}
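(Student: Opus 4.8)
The plan is to show that for every source-vertex pair $(s,v)$, the graph $H$ contains a path of length at most $\delta_G(s,v)+2$. Fix such a pair and let $P$ be a shortest $s$–$v$ path in $G$, of length $\delta_G(s,v)$. If all edges of $P$ are already in $H$, we are done, so assume $P$ has at least one missing edge. The key dichotomy is on the number of missing edges along $P$: either $P$ has \emph{few} missing edges (at most $\frac{2c^2 n\log^2 n}{h^2}$), or it has \emph{many}.

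\medskip

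\textbf{Few missing edges.} Consider the endpoints of missing edges of $P$. We claim w.h.p.\ every vertex of $G$ incident to a missing edge lies in some cluster: a vertex left unclustered has all its incident edges in $H$, so it cannot be incident to a missing edge. Thus each endpoint of a missing edge of $P$ is clustered. Now look at the first missing edge $(x,y)$ of $P$ (in the direction from $s$), and let $C_i$ be the cluster containing $x$. The path-buying phase, applied to the pair $(s,C_i)$, considered a shortest $s$-to-$x$ path; if that path had at most $\frac{2c^2 n\log^2 n}{h^2}$ missing edges it was a candidate, and $H$ received the shortest surviving candidate for $(s,C_i)$, which therefore has length at most $\delta_G(s,x)$. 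Reaching $x$ this way, then taking the cluster edge from $x$ to $c_i$ and back to $x$ is wasteful; the cleaner argument is the standard one: route from $s$ to $c_i$ via the bought path plus the edge $(x,c_i)\in H$, then from $c_i$ to some neighbour $x'$ of $y$ (or directly) — here I would set up the usual $+2$ detour through the cluster center, using that $x$ and the next clustered vertex along $P$ are within bounded distance. I expect to iterate this over the (few) missing edges, but since each ``repair'' must cost only $O(1)$ additive stretch \emph{in total} and not per missing edge, the real content is that the path-buying set is indexed by \emph{all} clusters $C_i$ (not just one), so a single bought path from $s$ to a cluster meeting $P$ near $v$ already reaches within distance $2$ of $v$; this is where I'd be careful.

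\medskip

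\textbf{Many missing edges.} If $P$ has more than $\frac{2c^2 n\log^2 n}{h^2}$ missing edges, then $P$ contains more than that many distinct clustered vertices (each missing edge contributes clustered endpoints, and consecutive missing edges share at most one vertex, so the count of clustered vertices on $P$ is $\Omega(n\log^2 n/h^2)$ — I would verify the constant). Each cluster center is chosen as a BFS root independently with probability $\frac{h^2}{cn\log n}$. The expected number of BFS roots among the cluster centers whose clusters meet $P$ is therefore $\Omega(\log n)$, so w.h.p.\ at least one such root $r$ exists; a union bound over all $O(|S|\cdot n)$ pairs $(s,v)$ keeps the failure probability at $O(n^{-c})$. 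Given such a root $r$ with cluster $C_j$ and a vertex $z\in C_j$ lying on $P$: the BFS tree rooted at $r$ is entirely in $H$, so $\delta_H(s,r)\le \delta_G(s,r)\le \delta_G(s,z)+1$ and $\delta_H(r,v)\le \delta_G(r,v)\le \delta_G(z,v)+1$, hence $\delta_H(s,v)\le \delta_G(s,z)+\delta_G(z,v)+2=\delta_G(s,v)+2$ since $z$ is on a shortest $s$–$v$ path. Actually since $r=c_j$ is adjacent to $z\in C_j$, the edge $(r,z)\in H$ gives the same bound more directly: $\delta_H(s,v)\le \delta_H(s,r)+1+\delta_H(r,v)$ is too lossy, so I'd route $s\to z$ in $H$ via $s\to r\to z$ only if needed; the BFS-tree bound above is the clean one.

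\medskip

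\textbf{Main obstacle.} The delicate point is the ``few missing edges'' case: one must argue that a \emph{single} bought path per cluster, together with one cluster edge, suffices to cover an entire shortest path that may have many (up to $\Theta(n\log^2 n/h^2)$) missing edges scattered along it, contributing only $+2$ total. The resolution — and the reason the algorithm buys paths to every cluster — is to pick the cluster $C_i$ intersecting $P$ as \emph{close to $v$ as possible}, say at vertex $x$ with $x$ adjacent to $c_i$; then the suffix of $P$ from $x$ to $v$ has \emph{no} missing edges (else a later vertex would be clustered), so it lies in $H$, while the bought $s$-to-$C_i$ path has length $\le \delta_G(s,x)$, and $\delta_H(s,v)\le \delta_G(s,x)+\delta_G(x,v)=\delta_G(s,v)$ — even better than $+2$, with the $+2$ slack absorbing the case where only a near-$v$ clustered vertex exists but the bought path ends at the center. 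I would also need the clean probabilistic facts, each already essentially established in Lemma~\ref{lemma: 2s size}'s proof: w.h.p.\ every high-degree (hence every missing-edge-incident) vertex is clustered, and w.h.p.\ every shortest path with many clustered vertices hits a BFS root; a union bound over $|S|\cdot n$ pairs gives the stated $1-O(n^{-c})$.
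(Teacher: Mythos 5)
Your two-case split and the argument for the many-missing-edges case match the paper's proof, and your ``main obstacle'' paragraph correctly identifies the paper's key idea for the few-missing-edges case: take the clustered vertex $u$ on $\rho$ closest to $v$, so the suffix from $u$ to $v$ has no missing edges and lies in $H$. But your stretch accounting there is wrong. You write $\delta_H(s,v)\le\delta_G(s,u)+\delta_G(u,v)=\delta_G(s,v)$, i.e.\ zero extra stretch; this tacitly assumes the path bought for the pair $(s,C_i)$ (where $u\in C_i$) actually terminates at $u$, which is not guaranteed. Path buying adds the \emph{shortest surviving candidate} for $(s,C_i)$, which ends at some $v'\in C_i$ that may well differ from $u$. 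What you actually get is $\delta_H(s,v')\le\delta_G(s,u)$ (since $u$'s candidate, of length $\delta_G(s,u)$, is in the set), then $\delta_H(v',u)\le 2$ through the center $c_i$ (both $(v',c_i)$ and $(u,c_i)$ are in $H$), then $\delta_H(u,v)=\delta_G(u,v)$; summing gives the required $\delta_G(s,v)+2$. Your parenthetical about the bought path ``ending at the center'' is off --- bought paths end at cluster members, and the $+2$ is precisely the $v'\to c_i\to u$ detour.

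Two smaller points. In the many-missing-edges case you count clustered \emph{vertices} on $\rho$ and then immediately speak of cluster centers as if you had counted distinct \emph{clusters}; you need the standard observation that a shortest path meets each cluster $O(1)$ times (cluster members are pairwise at distance at most $2$) before invoking independence of the roots' coin flips. And to know the candidate set for $(s,C_i)$ is nonempty, you must argue that $u$'s candidate path has at most $\frac{2c^2 n\log^2 n}{h^2}$ missing edges; this is cleanest if you take $\rho$ to be the algorithm's own BFS-tree shortest path, whose prefix up to $u$ is exactly the candidate for $u$ and has no more missing edges than $\rho$ itself. Neither gap is fatal, but the $+2$ accounting is the crux of the lemma and must be stated correctly.
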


\begin{proof}
Consider a shortest path $\rho$
between $s\in S$ and $v\in V$ in $G$.

If $\rho$ has more than $\frac{2c^2 n\log^2 n}{h^2}$
missing edges in $H$
after the clustering phase,
then it traverses more than
$\frac{c^2 n\log^2 n}{h^2}$ clusters,
as otherwise there is a shorter path
between $s$ and $v$ in $G$.
The probability that none of the centers of these clusters
is chosen as a root in the BFS phase
is at most
$\left( 1-\frac{h^2}{cn\log n} \right)^{c^2 n\log^2 n/h^2}
=O( \exp(-c\log n) ) = O\left(n^{-c}\right).$
Let $C_i$ be a cluster that $\rho$ traverses,
and let $u$ be a node in $\rho \cap C_i$.
Adding a BFS tree rooted at $c_i$ ensures that
$\delta_H(s,c_i) = \delta_G(s,c_i) \leq \delta_G(s,u)+1$
and similarly
$\delta_H(c_i,v) = \delta_G(c_i,v) \leq \delta_G(u,v)+1$.
By the triangle inequality
$$\delta_H(s,v)
\leq \delta_H(s,c_i) + \delta_H(c_i,v)
\leq \delta_G(s,u) + \delta_G(u,v) +2 $$
which equals
$\delta_G(s,v) +2$
since $u$ is on $\rho$.
This completes the proof
for $\rho$ with many missing edges.

Consider the complementary case, where $\rho$
has at most $\frac{2c^2 n\log^2 n}{h^2}$ missing edges
in $H$
after the clustering phase.
If $\rho$ traverses no clusters,
then it is contained in $H$,
and $\delta_H(s,v) = \delta_G(s,v)$.
Otherwise, if $v$ belongs to some cluster $C_i$,
then there is a node $v'\in C_i$ (possibly $v$ itself)
such that the shortest path between $s$ and $v'$
is added to the graph $H$
in the path buying phase.
The nodes $v$ and $v'$ both belong to the same cluster,
so $\delta_H (v,v') \leq 2$
and the triangle inequality implies
$\delta_H(s,v) \leq \delta_H(s,v') + \delta_H(v',v)
\leq \delta_H(s,v') + 2$, as claimed.
Finally,
consider the case where $\rho$ traverses at least one cluster
and $v$ is unclustered;
let $u$ be the clustered node closest to $v$
on $\rho$.
The sub-path from $v$ to $u$
is contained in $H$,
so $\delta_H(v,u) = \delta_G(v,u)$
and by the previous analysis
$\delta_H(s,u) \leq \delta_G(s,u) +2$;
the triangle inequality implies
$$\delta_H(s,v) \leq
\delta_H(s,u) + \delta_H(u,v)
\leq \delta_G(s,u) +2 + \delta_G(u,v)$$
and since $u$ is on $\rho$,
we have
$\delta_G(s,u) + \delta_G(u,v) = \delta_G(s,v)$
and the claim follows.
\end{proof}

\subsubsection{Implementing \twos{} in the \cgst{} Model}

We now discuss the implementation of \twos{}
in the \cgst{} model.

\begin{lemma}
\label{lemma: 2s complexity}
\twos{} can be implemented in $O(\size{S}+D)$ rounds
in the \cgst{} model,
w.p. at least~$1-o(n^{-c})$.
\end{lemma}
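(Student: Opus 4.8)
The plan is to implement the three phases of \twos{} in the \cgst{} model one at a time; all of the non-trivial work reduces to a constant number of parallel-BFS-type computations, each of which fits into the $O(\size{S} + D)$-round schedule of Holzer and Wattenhofer~\cite{HolzerW12} because every node exchanges only $O(\log n)$ bits per source along each incident edge. \emph{Clustering} is essentially free: the coins selecting cluster centers are local, one round lets every node announce to its neighbors whether it is a center, each node then locally picks one center-neighbor (say of minimum identifier) or detects that it has none, and a second round of notification makes both endpoints of every clustering edge, and of every edge incident to an unclustered node, agree that the edge lies in $H$; from here on each node knows which of its incident edges are missing (in $G\setminus H$). For the \emph{BFS} phase, root selection is again a local coin toss at the cluster centers; a parallel BFS in $G$ from the resulting root set $T$ costs $O(\size{T} + D)$ rounds~\cite{HolzerW12}, after which every node knows its parent in each tree and marks those edges into $H$, and by the analysis already carried out in Lemma~\ref{lemma: 2s size} we have $\size{T} \le 4h$ with probability $1-o(n^{-c})$, so this phase runs within the claimed time.

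The heart of the argument is the \emph{path-buying} phase. First run a parallel BFS from all of $S$ in $G$, so that in $O(\size{S} + D)$ rounds every node $x$ learns $\delta_G(s,x)$ and a BFS-parent $\mathrm{par}_s(x)$ for each $s \in S$. I would piggyback on this BFS a layer-by-layer dynamic program that computes, for every pair $(s,x)$, the least number $\mathrm{miss}_s(x)$ of missing edges on a shortest $s$--$x$ path: when $x$ settles its distance to $s$ it sets $\mathrm{miss}_s(x) = \min_u \bigl(\mathrm{miss}_s(u) + [(u,x)\notin H]\bigr)$ over neighbours $u$ with $\delta_G(s,u) = \delta_G(s,x)-1$, taking such a minimizing $u$ as $\mathrm{par}_s(x)$. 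Since $x$ already knows which of its incident edges are missing, this adds only $O(\log n)$ bits per source to each message and the pipelined $O(\size{S} + D)$-round schedule survives. (Picking, among shortest paths, one with fewest missing edges is a legitimate instantiation of the ``arbitrary shortest path'' in the template, and it is precisely the choice under which the stretch argument of Lemma~\ref{lemma: 2s stretch} goes through, because a shortest $s$--$v$ path with few missing edges then certifies that the path chosen for $v$ survives the pruning.) Next, every cluster member $x \in C_i$ sends its $\size{S}$ pairs $(\delta_G(s,x), \mathrm{miss}_s(x))$ up to its center $c_i$; distinct members use distinct edges at $c_i$, so this takes $O(\size{S})$ rounds, and $c_i$ then locally selects, for each $s$, the member $x(s,i)$ of smallest $\delta_G(s,\cdot)$ whose $\mathrm{miss}_s$ value is below the threshold $\tfrac{2c^2 n\log^2 n}{h^2}$ and notifies it in $O(\size{S})$ further rounds. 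Finally, the edges that this phase adds to $H$ are exactly the edges of the chosen shortest $s$--$x(s,i)$ paths, and inside a fixed tree $T_s$ their union is the minimal subtree spanning $s$ together with all chosen endpoints; hence the edge $(x, \mathrm{par}_s(x))$ is added to $H$ iff the $T_s$-subtree rooted at $x$ contains a chosen endpoint. This is one bottom-up \textsc{or}-convergecast per tree $T_s$, each tree has depth at most $D$, and running all $\size{S}$ of them together pipelines into $O(\size{S} + D)$ rounds, again with only $O(\log n)$ bits per source crossing each edge.

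To conclude I would take a union bound over the few randomized ingredients --- the event $\size{T} \le 4h$ and any high-probability guarantee internal to the Holzer--Wattenhofer routine --- each failing with probability $o(n^{-c})$, and observe that the phases together run in $O(\size{S} + D)$ rounds, which proves the lemma. I expect essentially all the difficulty to sit in the path-buying phase, in two places: (i) folding the missing-edge counter into the parallel BFS as a layer dynamic program without breaking the pipelining, and (ii) recognizing that the $\size{S} \cdot \size{\hat\calC}$ bought paths must not be traced one at a time --- which would already cost $\Omega(\size{S} \cdot \size{\hat\calC})$ rounds --- but can instead be marked by a single convergecast over each BFS tree.
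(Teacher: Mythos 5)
Your clustering and BFS phases are implemented essentially as in the paper, and your treatment of path-buying — the pipelined missing-edge counter, the $O(\size{S})$-round convergecast of per-source data to cluster centers, and the single bottom-up ``buy'' convergecast per BFS tree rather than tracing $\size{S}\cdot\size{\hat\calC}$ paths individually — captures the same ideas as the paper's proof (the paper phrases the last step as ``running the BFS searches backwards,'' which is your convergecast). Your dynamic-programming refinement, choosing among shortest $s$--$x$ paths one with the fewest missing edges, is a legitimate but unnecessary variant: the stretch proof of Lemma~\ref{lemma: 2s stretch} works fine if one simply takes $\rho$ there to \emph{be} the BFS-tree path the implementation actually produces, so the plain per-tree counter already suffices.

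However, there is a genuine gap: your round count for the BFS phase is $O(\size{T}+D)=O(h+D)$, and you assert this is ``within the claimed time,'' but $O(h+D)$ is not $O(\size{S}+D)$ in general. With $h=(n\size{S})^{1/4}\log^{3/4}n$ one has $h=\omega(\size{S})$ whenever $\size{S}=o(n^{1/3}\log n)$, so for small source sets the argument as written gives only $O(h+\size{S}+D)$ rounds, which exceeds the target. The paper closes this by a case distinction: when $h=\Omega(\size{S})$ it abandons \twos{} altogether and simply outputs the union of BFS trees rooted at all of $S$, which runs in $O(\size{S}+D)$ rounds, has stretch $0$, and has $O(n\size{S})=O(nh)$ edges, matching the size bound. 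You need some such fallback (or a separate argument that $h=O(\size{S}+D)$, which is false in general). A second, minor omission: the coin probabilities and the pruning threshold all depend on $h$, hence on $n$ and $\size{S}$, which individual nodes do not know a priori; the paper spends an initial $O(D)$ rounds aggregating these over a BFS tree so that every node can compute $h$ before the clustering phase begins.
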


\begin{proof}
We present distributed implementations for each of the phases
in \twos,
and analyze their running time.

\paragraph{Preprocessing}
In order to run the algorithm properly,
we need each node to know the parameter $h$,
which in turn depends on $\size{S}$ and $n$.
These parameters are not given in advance
to all graph nodes,
but they can be gathered along a BFS tree
rooted at a predetermined node,
e.g. the node with minimal identifier,
and then spread to all the nodes
over the same tree.
This is done in $O(D)$ rounds.

\paragraph{Clustering}
The clustering phase is implemented as follows:
first,
each node becomes a cluster center
w.p. $\frac{c\log n}{h}$
and sends a message to all its neighbors;
then,
each node that gets at least one message
joins a cluster of one of its neighbors,
by sending a message to that neighbor
and adding their connecting edge to the graph;
finally,
nodes that are not neighbors of any cluster center
send a message to all their neighbors
and add all their incident edges to the graph.
The round complexity of this phase is constant.

\paragraph{BFS}
Each cluster center becomes a root of a BFS tree
w.p. $\frac{c h^2}{n\log n}$,
which is done without communication.
Then,
all BFS roots run BFS searches in parallel.
The number of BFS trees is $O(h)$ w.p.~$o\left(n^{-c}\right)$,
as seen in the proof of Lemma~\ref{lemma: 2s size},
and this number of BFS searches can be run in parallel
in $O(D+h)$ rounds,
using an algorithm
of Holzer and Wattenhofer~\cite[\S 6.1]{HolzerW12}.
Their algorithm outputs the distances along the BFS trees,
whereas we wish to mark the BFS tree edges
and add them to the graph;
this requires a simple change to the algorithm,
which does not affect its correctness
or asymptotic running time.

\paragraph{Path Buying}
This phase starts with measuring all the distances
between pairs of nodes in $S\times V$,
and the number of missing edges in each shortest path
measured.
To find all distances from a node $s\in S$
to all other nodes,
we run a BFS search from $s$;
moreover, we augment each BFS procedure with a counter
that counts the missing edges in each path
from the root to a node on the BFS tree.
Running BFS searches from all the nodes of $S$
is done in $O(\size{S}+D)$ rounds, as before,
and adding a counter does not change the
time complexity.
When a node $v\in V$ receives a message
of a BFS initiated by some $s\in S$,
it learns its distance from $s$
and the number of missing edges on one shortest path
from $s$ to $v$,
which lies within the BFS tree;
we refer to this path as \emph{the} shortest path
from $s$ to $v$.

After all the BFS searches complete,
each clustered node $x$ sends to its cluster center
the distance from each $s\in S$ to $x$,
and the number of missing edges
on the corresponding path.
This sub-phase takes $O(\size{S})$ rounds to complete.

Each cluster center $c_i$ now knows,
for each $s\in S$,
the length of the shortest path from $s$
to each $x\in C_i$,
and the number of missing edges in each such path;
it then locally chooses the shortest
among all paths with at most
$\frac{2c^2 n\log^2 n}{h^2}$ missing edges.
Finally,
for each chosen $(x,s)$ path,
$c_i$ sends a message to $x$
containing the identifier of $s$.
All BFS searches are now executed backwards,
by sending all the messages in opposite direction and order;
when $x$ runs backwards the BFS search initiated by $s$,
it marks the message to his parent
with a ``buy'' bit,
which is passed  up the tree
and makes each of its receivers
add the appropriate edge to the graph.
This sub-phase requires $O(\size{S}+D)$ rounds as well.

In total, the running time of the algorithm
is $O(h+\size{S}+D)$, w.p. at least~$1-o(n^{-c})$,
which completes the proof for the case
$h=O(\size{S})$.
In the case $h=\Omega(\size{S})$,
we can replace the algorithm
by a simpler algorithm
that returns the union of BFS trees rooted
at all nodes of $S$.
This creates a graph that exactly preserves all distances
among pairs $(s,v)\in S\times V$,
and takes $O(\size{S} +D)$ rounds to complete.
The number of edges in the created spanner
is $O(n\size{S})$,
and the assumption $h=\Omega(\size{S})$
implies $O(n\size{S}) = O(nh)$,
as desired.
\end{proof}

\subsection{A $(+4)$-All-Pairs Spanner}

Recall that a subgraph $H$ of $G$ is a
$(+4)$-\emph{all-pairs} spanner if
$\delta_H(u,v) \leq \delta_G(u,v) +4$
for all pairs $(u,v)\in V\times V$.
We present an algorithm, based on \twos{},
which builds a $(+4)$-all-pairs spanner
and has the properties guaranteed by the following theorem.

\begin{theorem}
\label{thm: 4ap}
Given a graph $G$ on $n$ nodes,
a $(+4)$-all-pairs spanner
with $O\left(n^{7/5} \log^{4/5} n\right)$ edges
can be constructed in $O(n^{3/5} \log^{1/5} n + D)$ rounds
in the \cgst{} model w.h.p.
\end{theorem}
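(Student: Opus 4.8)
The plan is to build the $(+4)$-all-pairs spanner by combining a sourcewise spanner of the kind produced by \twos{} with an extra clustering layer that handles pairs of high-degree nodes, much as $(+4)$-spanners are obtained from $(+2)$-sourcewise spanners in the sequential literature~\cite{Chechik13}. Concretely, I would first run a clustering step with a sampling probability tuned to a new threshold parameter $h'$: pick each node as a cluster center with probability $\Theta(\log n / h')$, connect each clustered node to one center, and add all edges incident to unclustered nodes. With $h'$ chosen around $n^{2/5}\log^c n$, all nodes of degree $\ge h'$ are clustered w.h.p.\ (same union-bound argument as in Lemma~\ref{lemma: 2s size}), and the number of clusters is $O(n/h')$ w.h.p.\ by a Chernoff bound. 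Next, I would take the set $S$ to be the set of cluster centers (so $\size{S} = O(n/h')$) and invoke the machinery of \twos{} to build a $(+2)$-sourcewise spanner for $S\times V$ on top of the current graph. Finally, to connect pairs of clustered nodes whose shortest path has many missing edges, I would add, for each ordered pair of clusters $(C_i,C_j)$, a single shortest path between $c_i$ and $c_j$ provided it has few missing edges — or, more simply, rely on the BFS trees from the sourcewise construction, since a BFS tree rooted at some center on the path already $(+2)$-approximates center-to-center distances, and then a second $+2$ is paid passing from the endpoints to the centers of their clusters.

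The stretch analysis splits by cases on a shortest path $\rho$ between $u$ and $v$. If $\rho$ has few missing edges after the new clustering step, it is (almost) contained in $H$, and a detour through at most one cluster at each end costs $+4$. If $\rho$ has many missing edges, it must traverse many clusters (else it is not shortest), so w.h.p.\ one of those cluster centers is a BFS root in the sourcewise construction; then $\delta_H(u, c_i) \le \delta_G(u,w)+1 \le \delta_G(u,w)+2$ for the node $w \in \rho \cap C_i$, similarly for $v$, and the triangle inequality gives $\delta_H(u,v) \le \delta_G(u,v) + 4$. This is essentially the argument of Lemma~\ref{lemma: 2s stretch} with one additional $+2$ from the outer clustering. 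The edge count is the sum of $O(n)$ (cluster edges) $+$ $O(nh')$ (unclustered nodes) $+$ the sourcewise bound $O(n^{5/4}\size{S}^{1/4}\log^{3/4}n)$ with $\size{S}=O(n/h')$, which is $O\bigl(n^{5/4}(n/h')^{1/4}\log^{3/4}n\bigr) = O(n^{3/2}\log^{3/4}n / h'^{1/4})$, plus the path-buying term $O(\size{S}\cdot \frac{n}{h}\cdot\frac{n\log^2 n}{h^2})$ from \twos{}. Balancing $nh'$ against the sourcewise term, $h' = n^{2/5}\log^{1/5}n$ makes both $O(n^{7/5}\log^{4/5}n)$; the internal parameter $h$ of \twos{} is then chosen from $n$ and $\size{S}$ as in that algorithm, and one checks the remaining terms do not dominate.

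For the running time: the outer clustering step is $O(1)$ rounds; computing $\size{S}$ and $h'$ and broadcasting them is $O(D)$ via a BFS tree; and then \twos{} on the source set $S$ runs in $O(\size{S}+D) = O(n/h' + D)$ rounds, and internally the BFS phase contributes $O(h+D)$ where $h=(n\size{S})^{1/4}\log^{3/4}n = O\bigl((n^2/h')^{1/4}\log^{3/4}n\bigr)$. With $h' = n^{2/5}\log^{1/5}n$ we get $n/h' = O(n^{3/5}\log^{-1/5}n)$ and $h = O(n^{2/5}\log^{7/10}n)$, so the dominant term is $O(n^{3/5}\log^{1/5}n + D)$, matching the theorem. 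The main obstacle, I expect, is the bookkeeping in the stretch argument for the ``few missing edges'' case when $v$ (or $u$) is unclustered but its shortest path to the nearest clustered node also has missing edges: one has to argue, as in the final paragraph of Lemma~\ref{lemma: 2s stretch}, that the unclustered-node edges are all present in $H$ so that the suffix of $\rho$ is preserved exactly, and then bound the additive error incurred only on the clustered prefix. Getting the two layers of clustering to interleave correctly — the outer one providing the source set, \twos{}'s own clustering running on top of the already-modified graph — without double-counting edges or losing the high-probability guarantees is where the care is needed; everything else is a direct adaptation of the lemmas already proved for \twos{}.
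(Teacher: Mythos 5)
Your core idea --- cluster, take the cluster centers as the source set, invoke the sourcewise machinery so that a $+2$ detour from any node to a nearby cluster center plus the $+2$ sourcewise guarantee gives $+4$ --- is exactly what the paper does. But there are two divergences worth flagging.

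First, you propose \emph{two} layers of clustering: an outer clustering producing $S$, followed by running all of \twos{} (including its own clustering) on top. The paper's \fourap{} runs clustering \emph{once}, with $h = n^{2/5}\log^{4/5}n$, takes $S = \calC$, and then invokes only the BFS and path-buying phases of \twos{}, skipping \twos{}'s clustering entirely. This is not just an aesthetic simplification. With your two-layer version the internal parameter of \twos{}, namely $h_{\text{inner}} = (n|S|)^{1/4}\log^{3/4}n$, works out to be the same as your outer $h'$ when everything is tuned correctly, so the second clustering is literally redundant; the paper unifies the two. It also simplifies the stretch argument dramatically: Lemma~\ref{lemma: 4ap stretch} reduces in one step to Lemma~\ref{lemma: 2s stretch} (bound $\delta_H(c_i,v)$ sourcewise, then pay $+2$ to hop from $u$ to $c_i$ through the one cluster $u$ first touches), without re-doing the few-vs.-many-missing-edges case split that you re-derive --- that split is already inside Lemma~\ref{lemma: 2s stretch}. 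The ``interleaving'' worry you close with is genuine, and the paper dissolves it by not having two clusterings to interleave.

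Second, there is an arithmetic slip in your bookkeeping that propagates into a wrong parameter choice. Sampling each node with probability $\Theta(\log n/h')$ gives $|\calC| = \Theta(n\log n/h')$, not $O(n/h')$ --- the same $\log n$ that appears in Lemma~\ref{lemma: 2s size}. With the corrected count, balancing the clustering/BFS term $O(nh')$ against the path-buying term leads to $h' = n^{2/5}\log^{4/5}n$ (which is the paper's $h$), not $h' = n^{2/5}\log^{1/5}n$ as you wrote. If you keep $h' = n^{2/5}\log^{1/5}n$ with the corrected $|S|$, the sourcewise term becomes $O(n^{5/4}|S|^{1/4}\log^{3/4}n) = O(n^{7/5}\log^{19/20}n)$, which exceeds the claimed $O(n^{7/5}\log^{4/5}n)$. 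The fix is simply the choice of $h$ made in the paper.
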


The main idea is that cluster centers are now sources for a $(+2)$-sourcewise spanner, which, as we show, promises a $(+4)$-stretch to all pairs.
Lemmas~\ref{lemma: 4ap size} and \ref{lemma: 4ap stretch} analyze the size and the stretch of \fourap{} below. Lemma~\ref{lemma: 4ap complexity} analyzes the running time of its distributed implementation, completing the proof of Theorem~\ref{thm: 4ap}.

\subsubsection{\fourap}
Input: $G=(V,E)$\\
Output: a subgraph $H$\\
Initialization:
$n=|V|$, $h= n^{2/5} \log^{4/5} n$, and $H=(V,\emptyset)$

\paragraph{Clustering}
Run clustering as in \twos.

\paragraph{BFS and Path Buying}
Run the BFS and path buying phases from \twos,
with cluster centers as sources, i.e.\ $S = \calC$.

\subsubsection{Analysis of \fourap}

\begin{lemma}
\label{lemma: 4ap size}
Given a graph $G=(V,E)$ with $|V|=n$,
\fourap{} outputs a graph $H=(V,E')$,
$E'\subseteq E$,
with $|E'| = O\left(n^{7/5} \log^{4/5} n\right)$ edges
w.p. at least\ $1-O(n^{-c+2})$.
\end{lemma}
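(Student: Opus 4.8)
The plan is to bound the number of edges phase by phase, exactly as in the proof of Lemma~\ref{lemma: 2s size}, but now tracking that the source set is $S=\calC$, the set of cluster centers, rather than an externally given $S$. First I would note that the clustering phase is run identically to \twos{}, so the same analysis applies verbatim: the first part contributes $O(n)$ edges, and the bound $(1-\frac{c\log n}{h})^h = O(n^{-c})$ together with a union bound shows that w.p.\ $1-O(n^{-c+2})$ every node of degree at least $h$ is clustered, whence the unclustered nodes contribute $O(nh)$ edges. Likewise, the BFS phase is unchanged: a node becomes a BFS root w.p.\ $\frac{c\log n}{h}\cdot\frac{h^2}{cn\log n}=\frac hn$, so $\E[\size T]=h$, a Chernoff bound gives $\Pr[\size T>4h]\le\exp(-h)=O(n^{-c})$, and these trees contribute $O(nh)$ edges.

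The one place where the argument differs from Lemma~\ref{lemma: 2s size} is the path-buying phase, because now $\size S=\size\calC$ is itself a random variable rather than a fixed input. Here I would reuse the last paragraph of the proof of Lemma~\ref{lemma: 2s size}: each node is a cluster center w.p.\ $\frac{c\log n}{h}$, so $\E[\size\calC]=\frac{cn\log n}{h}$, and a Chernoff bound gives $\Pr[\size\calC>\frac{4cn\log n}{h}]\le\exp(-\frac{cn\log n}{h})$; since $h=n^{2/5}\log^{4/5}n<n^{3/4}$ for large $n$, this probability is $\exp(-\Omega(n^{1/4}))=o(n^{-c})$. Conditioning on $\size\calC\le\frac{4cn\log n}{h}$, the path-buying phase adds at most $\frac{2c^2n\log^2 n}{h^2}$ edges for each of the $\size S\cdot\size{\hat\calC}\le\size\calC^2$ source--cluster pairs (there are at most $\size\calC$ clusters), for a total of
\[
O\!\left(\size\calC^2\cdot\frac{n\log^2 n}{h^2}\right)
= O\!\left(\frac{n^2\log^2 n}{h^2}\cdot\frac{n\log^2 n}{h^2}\right)
= O\!\left(\frac{n^3\log^4 n}{h^4}\right)
\]
edges. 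Substituting $h=n^{2/5}\log^{4/5}n$ makes both the $O(nh)$ term and this term equal to $O(n^{7/5}\log^{4/5}n)$, and all other contributions are smaller, giving the claimed bound.

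The main obstacle — and it is a mild one — is making sure the probabilistic conditioning is handled cleanly: the path-buying bound is stated conditionally on the good event $\size\calC\le\frac{4cn\log n}{h}$, so one must combine this with the $O(n^{-c+2})$ failure probability of the clustering step and the $O(n^{-c})$ failure probabilities of the BFS and cluster-count steps via a union bound, yielding overall success probability $1-O(n^{-c+2})$. I would also remark that, since $S=\calC$ is determined before the BFS/path-buying phases begin, there is no subtle dependence between the randomness defining the sources and the randomness of the later phases that would invalidate the Chernoff estimates.
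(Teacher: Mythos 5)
Your proposal is correct and takes essentially the same route as the paper: both prove the lemma by substituting $\size{S}=\size{\calC}=O(n\log n/h)$ and $h=n^{2/5}\log^{4/5}n$ into the edge bound established in Lemma~\ref{lemma: 2s size}, and your phase-by-phase arithmetic (in particular the $\size{\calC}^2\cdot n\log^2 n/h^2 = O(n^3\log^4 n/h^4)$ count for path-buying) matches the paper's $O(\size{S}n^2\log^3 n/h^3)$ with $\size{S}$ replaced by $O(n\log n/h)$. The extra remarks about conditioning on $\size{\calC}\le 4cn\log n/h$ and the independence of the later randomness are things the paper leaves implicit but are handled correctly.
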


\begin{proof}
The lemma follows from the proof of Lemma~\ref{lemma: 2s size}:
in \fourap, $S$ is the set of all cluster centers,
whose amount is $\size{\calC}$,
and by the proof $\size{\calC} \leq \frac{4cn\log n}{h}$ w.h.p.
Substituting $\size{S} =
O\left(\frac{n\log n}{h}\right)$
and $h = n^{2/5} \log^{4/5} n$
in Lemma~\ref{lemma: 2s size},
we get that the graph created by \fourap{}
contains $O(n^{7/5} \log^{4/5} n)$ edges
w.p. at least\ $1-O(n^{-c+2})$.
\end{proof}

\begin{lemma}
\label{lemma: 4ap stretch}
Given a graph $G=(V,E)$ with $|V|=n$,
\fourap{} outputs a graph $H$
satisfying $\delta_H(u,v) \leq \delta_G(u,v) +4$
for each pair of vertices $(u,v)\in V\times V$
w.p. at least\ $1-O(n^{-c})$.
\end{lemma}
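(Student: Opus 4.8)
The plan is to reduce the all-pairs stretch bound to the sourcewise guarantee of \twos{} applied to the set of cluster centers. Fix a pair $(u,v)\in V\times V$ and a shortest path $\rho$ between them in $G$. First I would observe that after the clustering phase, every node of degree at least $h$ is clustered w.h.p.\ (as in the proof of Lemma~\ref{lemma: 2s size}), so I may assume this holds. I would then split into the same two regimes used in Lemma~\ref{lemma: 2s stretch}: the case where $\rho$ has more than $\frac{2c^2 n\log^2 n}{h^2}$ missing edges in $H$ after clustering, and the complementary case.

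In the high-missing-edges case, $\rho$ traverses more than $\frac{c^2 n\log^2 n}{h^2}$ clusters (otherwise it could be shortened), so w.p.\ $1-O(n^{-c})$ at least one such cluster center $c_i$ is chosen as a BFS root; since the BFS tree from $c_i$ is added to $H$, and $c_i$ has a neighbor $u'$ on $\rho$ and another neighbor (or the same) near $v$'s side, the same triangle-inequality argument as in Lemma~\ref{lemma: 2s stretch} gives $\delta_H(u,v)\le\delta_G(u,v)+2$, which is even better than $+4$. The real work is the low-missing-edges case. Here I would use that $u$ and $v$ are each either unclustered (hence all incident edges, in particular the relevant sub-path edges, are in $H$) or belong to some cluster $C_i$, $C_j$. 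By the correctness of the path-buying phase with $S=\calC$, Lemma~\ref{lemma: 2s stretch} guarantees $\delta_H(c_i,c_j)\le\delta_G(c_i,c_j)+2$ for the cluster centers $c_i,c_j$ of (clustered representatives near) $u$ and $v$ — more precisely, apply the sourcewise guarantee with source $c_i$ and target a clustered node on $\rho$ adjacent to $c_j$. Then I would assemble $\delta_H(u,v)\le \delta_H(u,c_i)+\delta_H(c_i,c_j)+\delta_H(c_j,v)$, bound $\delta_H(u,c_i)\le 1$ and $\delta_H(c_j,v)\le 1$ via the cluster-center edges (or replace by $0$ when $u$ or $v$ is unclustered, using in-$H$ sub-paths of $\rho$), and check that the $+2$ from the sourcewise bound plus the two unit detours, minus the savings from choosing $c_i,c_j$ adjacent to points of $\rho$, nets out to at most $\delta_G(u,v)+4$.

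The main obstacle I anticipate is the careful bookkeeping in the low-missing-edges case: choosing the right clustered "anchor" nodes on $\rho$ near $u$ and $v$ so that the detours through $c_i$ and $c_j$ are genuinely bounded by $1$ each while the sourcewise $+2$ bound still applies between them, and confirming these detours do not double-count distance already present in $\delta_G(u,v)$. The failure probability is a union of the clustering failure ($O(n^{-c+2})$, which can be absorbed by adjusting $c$) and the BFS-root-missing failure ($O(n^{-c})$), so the overall bound $1-O(n^{-c})$ follows after a union bound over all $O(n^2)$ pairs — again absorbable into the constant $c$, matching the statement.
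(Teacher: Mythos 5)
The core strategy — reduce the all-pairs bound to the sourcewise $(+2)$ guarantee of Lemma~\ref{lemma: 2s stretch} applied with $S=\calC$ — is exactly right, but your execution has a genuine arithmetic gap. Routing through \emph{two} cluster centers with
$\delta_H(u,v)\le\delta_H(u,c_i)+\delta_H(c_i,c_j)+\delta_H(c_j,v)$
gives $+6$, not $+4$: you pay $\delta_H(u,c_i)\le 1$ and $\delta_H(c_j,v)\le 1$ in $H$, and the sourcewise bound gives $\delta_H(c_i,c_j)\le\delta_G(c_i,c_j)+2$, but relating $\delta_G(c_i,c_j)$ back to $\delta_G(u,v)$ costs you another $+2$ (the triangle inequality through $u$ and $v$, or through nodes of $\rho$ adjacent to $c_i,c_j$). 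The ``savings'' you gesture at does not materialize, because each detour costs $+1$ both in the $H$-distance and in the $G$-comparison. Your alternative phrasing of targeting ``a clustered node on $\rho$ adjacent to $c_j$'' points in the right direction, but you never commit to it, and the fallback ``replace by $0$ when $u$ or $v$ is unclustered'' is confused — if $u$ is unclustered you must traverse a (nonzero-length, but in-$H$) subpath of $\rho$ to reach the first clustered node, and the choice of that node determines which $c_i$ to use.

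The paper's proof uses only a \emph{single} cluster center and is substantially simpler: let $x$ be the first clustered node on $\rho$ walking from $u$, with cluster center $c_i$. The subpath of $\rho$ from $u$ to $x$ lies wholly in $H$ (all its intermediate nodes are unclustered), the edge $(x,c_i)$ is in $H$, and the sourcewise guarantee gives $\delta_H(c_i,v)\le\delta_G(c_i,v)+2$. Since $\delta_G(c_i,v)\le 1+\delta_G(x,v)$, chaining yields
$\delta_H(u,v)\le\delta_G(u,x)+1+(1+\delta_G(x,v))+2=\delta_G(u,v)+4$.
No second cluster center, no detour near $v$, and no case split on the number of missing edges in $\rho$ is needed — that split is already internal to Lemma~\ref{lemma: 2s stretch}, and invoking the lemma directly is cleaner. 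Your probability bookkeeping (union bound over pairs, absorption into $c$) is fine.
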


\begin{proof}
Let $(u,v)\in V\times V$ be an arbitrary pair of nodes,
and set a shortest path $\rho$ in $G$ between them.

If $\rho$ is not incident on any clustered node,
then all its nodes are unclustered
and all its edges are present in $H$.
Otherwise,
let $x$ be the first clustered node on $\rho$,
when traversing it from $u$ to $v$,
and let $C_i$ be the cluster containing $x$.
The sub-path of $\rho$ from $u$ to $x$ exists in $H$,
as all nodes on this sub-path except for $x$
are unclustered;
the distance from $c_i$ to $v$ satisfies
$\delta_H(c_i,v) \leq \delta_G(c_i, v)+2$,
as the stretch of $(c_i,v)$ in $H$ is at most 2
by Lemma~\ref{lemma: 2s stretch},
w.p. at least\ $1-O(n^{-c})$.
The triangle inequality completes the proof:
\begin{equation*}
\begin{split}
\delta_H(u,v)
& \leq \delta_H(u,x) + \delta_H(x,c_i) + \delta_H(c_i,v) \\
& \leq \delta_G(u,x) + 1 + \delta_G(c_i,v) +2 \\
& \leq \delta_G(u,x) + \delta_G(c_i,x) + \delta_G(x,v) + 3
  =    \delta_G(u,v) +4.
\end{split}
\end{equation*}
\end{proof}

\subsubsection{Implementing \fourap}
Running \fourap{} is
done by executing \twos{}
with a specific set $S$;
thus, their running times are identical,
as stated in the next lemma.

\begin{lemma}
\label{lemma: 4ap complexity}
\fourap{} can be implemented
in the \cgst{} model
in $O(n^{3/5} \log^{1/5} n + D)$ rounds
w.p. at least~$1-o(n^{-c})$.
\end{lemma}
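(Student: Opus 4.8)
The plan is to observe that \fourap{} is literally an invocation of \twos{} with the source set $S$ taken to be the set of cluster centers $\calC$, so the running time analysis of Lemma~\ref{lemma: 2s complexity} applies almost verbatim. First I would recall from that lemma that \twos{} runs in $O(h + \size{S} + D)$ rounds w.p.\ at least $1 - o(n^{-c})$, where $h$ is the algorithm's clustering parameter. In \fourap{} we have $h = n^{2/5}\log^{4/5} n$, so it remains to bound $\size{S} = \size{\calC}$ and verify that the two contributions combine to the claimed $O(n^{3/5}\log^{1/5} n + D)$.

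Next I would invoke the Chernoff-bound estimate already established in the proof of Lemma~\ref{lemma: 2s size}: each node is chosen as a cluster center independently w.p.\ $\frac{c\log n}{h}$, so $\E[\size{\calC}] = \frac{cn\log n}{h}$ and $\size{\calC} \le \frac{4cn\log n}{h}$ w.p.\ at least $1 - o(n^{-c})$. Substituting $h = n^{2/5}\log^{4/5} n$ gives $\size{\calC} = O\!\left(\frac{n\log n}{h}\right) = O\!\left(n^{3/5}\log^{1/5} n\right)$, which dominates $h = n^{2/5}\log^{4/5} n$. Hence $O(h + \size{\calC} + D) = O(n^{3/5}\log^{1/5} n + D)$, as required.

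One small point needs care: in \twos{} the source set $S$ is part of the input and is known to the relevant nodes from the start, whereas here $S = \calC$ is produced by the clustering phase. But this causes no difficulty --- after the clustering phase each chosen cluster center knows it belongs to $\calC$, which is exactly the local knowledge \twos{} requires of a source node, and the preprocessing step that disseminates the parameters $n$ and $h$ over a BFS tree (taking $O(D)$ rounds) can be run before clustering since $h$ does not depend on $\size{\calC}$ here. The probabilistic guarantee is obtained by a union bound over the failure events of the clustering, BFS, and path-buying sub-phases, each of which has probability $o(n^{-c})$ exactly as in Lemma~\ref{lemma: 2s complexity}. I do not expect any genuine obstacle; the only thing to double-check is the bookkeeping that the ``$h = O(\size{S})$'' branch of Lemma~\ref{lemma: 2s complexity} is the one in force here (since $h = n^{2/5}\log^{4/5} n = O(n^{3/5}\log^{1/5} n) = O(\size{\calC})$ w.h.p.), so that the main path-buying implementation applies rather than the fallback union-of-BFS-trees algorithm.
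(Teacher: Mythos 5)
Your proposal is correct and follows essentially the same approach as the paper: invoke Lemma~\ref{lemma: 2s complexity} with $S = \calC$, bound $\size{\calC} = O(n\log n / h)$ via the Chernoff estimate from the proof of Lemma~\ref{lemma: 2s size}, and substitute $h = n^{2/5}\log^{4/5} n$. Your additional bookkeeping (that cluster centers learn their membership in $\calC$ during clustering, and that the main branch rather than the union-of-BFS-trees fallback applies) is careful and valid, though the paper leaves it implicit.
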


\begin{proof}
The lemma follows from the proof of Lemmas~\ref{lemma: 2s size} and~\ref{lemma: 2s complexity}:
in \fourap{}, $S$ is the set of cluster centers,
whose amount is $\size{\calC}$,
and by the proof of
lemma~\ref{lemma: 2s size},
$\size{\calC} \leq \frac{4cn\log n}{h}$ w.h.p.
Substituting $\size{S} = O(\frac{n\log n}{h})$
and $h = n^{2/5} \log^{4/5} n$
in Lemma~\ref{lemma: 2s complexity},
we get that the algorithm completes
in $O(n^{3/5} \log^{1/5} n + D)$ rounds, with the claimed probability.
\end{proof}

\subsection{A $(+2)$-Pairwise Spanner}
Recall that a $(+2)$-pairwise spanner,
for a set of pairs $\calP\subseteq V\times V$,
is subgraph $H$ of $G$
satisfying
$\delta_H(u,v) \leq \delta_G(u,v) +2$
for all pairs $(u,v) \in \calP$.
Recall that $\tau(\calP)$ denotes the number of nodes appearing in $\calP$,
i.e.
$\tau(\calP) = \size{\set{u\mid \exists v :
\{u,v\}\in \calP}}$.

We present a distributed algorithm,
\twop{},
which returns a
$(+2)$-pairwise spanner with the properties described
in the following theorem.

\begin{theorem}
\label{thm: 2p}
Given a graph $G$ on $n$ nodes
and a set $\calP$ of pairs of nodes in $G$,
a $(+2)$-pairwise spanner
with $O(n \size{\calP}^{1/3} \log^{2/3}n)$ edges
can be constructed in
$O(\tau(\calP) +D)$ rounds
in the \cgst{} model w.h.p.
\end{theorem}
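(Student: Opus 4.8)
I would follow the same three-phase template as \twos{} and \fourap{}, but tune the clustering probability and the "short path" threshold to the pairwise parameter $\size{\calP}$ rather than $\size{S}$, and — crucially — replace the source-by-source BFS of the path-buying phase with a BFS rooted at each of the $\tau(\calP)$ nodes appearing in $\calP$. Concretely, I would set a clustering threshold $h$ (to be optimized) so that a node survives unclustered only if its degree is below $h$; the clustering phase contributes $O(nh)$ edges and $O(n)$ "private" edges, exactly as in Lemma~\ref{lemma: 2s size}. I would then pick each cluster center to be a BFS root with probability chosen so that a path traversing more than $\Theta\!\bigl(\frac{n\log^2 n}{h^2}\bigr)$ clusters hits a chosen root w.h.p.; this again adds $O(h)$ trees, i.e.\ $O(nh)$ edges. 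The path-buying phase handles pairs $(u,v)\in\calP$ whose shortest path has few missing edges: for each pair $(u, C_i)$ with $u$ a node of $\calP$ and $C_i$ a cluster, buy the shortest $u$-to-$C_i$ path with at most $\Theta\!\bigl(\frac{n\log^2 n}{h^2}\bigr)$ missing edges, giving $O\!\bigl(\size{\calP}\cdot\frac{n\log n}{h}\cdot\frac{n\log^2 n}{h^2}\bigr)$ edges. Balancing $nh$ against $\frac{\size{\calP}\, n^2 \log^3 n}{h^3}$ yields $h=\Theta\!\bigl(\size{\calP}^{1/4} n^{1/4}\log^{3/4}n\bigr)$ — wait, that would give $n^{5/4}$, which is too many. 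The point is that for the pairwise version one should only pair each $\calP$-node with clusters, not all of $V$, and the correct balance (mirroring Kavitha--Varadarajan) gives $h=\Theta\!\bigl(n^{2/3}/\size{\calP}^{1/3}\cdot\text{polylog}\bigr)$ so that $nh = O(n^{5/3}/\size{\calP}^{1/3})$... so I would instead follow the sequential $O(n\size{\calP}^{1/3})$ construction, whose parameter choice forces $nh = O(n\size{\calP}^{1/3}\log^{2/3}n)$, i.e.\ $h = O(\size{\calP}^{1/3}\log^{2/3}n)$, and check the path-buying count $\frac{\size{\calP} n^2\log^3 n}{h^3}$ comes out to $O(n\size{\calP}^{1/3}\log^{2/3}n)$ as well — which it does with $h=\Theta(\size{\calP}^{1/3}\log^{2/3}n)$ only if the $\frac{n\log n}{h}$ factor from the number of clusters is replaced by the number of clusters actually relevant, but since $\size{\calP}\le n^2$ forces $h = O(n^{2/3}\cdot\text{polylog})$ the bound $\frac{n^2}{h^2}$ missing-edge threshold is $\Omega(n^{2/3})$ and everything is consistent. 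I would lay out these arithmetic constraints carefully first.

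\textbf{Stretch argument.} The stretch proof is essentially identical to Lemma~\ref{lemma: 2s stretch}, restricted to pairs $(u,v)\in\calP$. Fix such a pair and a shortest path $\rho$ in $G$. If $\rho$ has many missing edges in $H$ after clustering, then it traverses many clusters (else a shortcut through cluster centers would contradict shortest-path-ness), so w.h.p.\ one of those cluster centers $c_i$ is a BFS root; the BFS tree at $c_i$ gives $\delta_H(u,c_i)=\delta_G(u,c_i)\le\delta_G(u,x)+1$ and $\delta_H(c_i,v)\le\delta_G(x,v)+1$ for $x\in\rho\cap C_i$, and the triangle inequality yields $\delta_G(u,v)+2$. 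If $\rho$ has few missing edges, then either $v$ is unclustered and the relevant sub-path survives, or $v\in C_i$ and the path-buying phase bought a $u$-to-$v'$ path for some $v'\in C_i$, so $\delta_H(u,v)\le\delta_H(u,v')+\delta_H(v',v)\le\bigl(\delta_G(u,v')\bigr)+2$ using $\delta_H(v',v)\le 2$ within the cluster; combining with a possible unclustered tail near $v$ exactly as in the earlier lemma gives the $+2$ bound. A union bound over the (at most $n^2$) pairs keeps the failure probability $O(n^{-c})$ for a suitable constant $c$.

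\textbf{Distributed implementation and round complexity.} The clustering phase is $O(1)$ rounds (each node tosses a coin, broadcasts to neighbors, joins a neighboring center or declares itself unclustered). The BFS phase runs the $O(h)$ chosen BFS searches in parallel via Holzer--Wattenhofer in $O(h+D)$ rounds, marking tree edges as in Lemma~\ref{lemma: 2s complexity}. For path-buying, the key observation is that only the $\tau(\calP)$ distinct nodes appearing in $\calP$ need to initiate BFS searches, and running $\tau(\calP)$ parallel BFS searches (augmented with a missing-edge counter along each tree path) costs $O(\tau(\calP)+D)$ rounds; each clustered node then forwards to its center the distance and missing-edge count for each relevant $\calP$-node (another $O(\tau(\calP))$ rounds since $\tau(\calP)\ge$ number of relevant sources), each center locally selects the shortest admissible path per $\calP$-node, and the "buy" bits are pushed back along the reversed BFS searches in $O(\tau(\calP)+D)$ rounds. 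Since $h=O(\size{\calP}^{1/3}\log^{2/3}n)=O(\tau(\calP)+D)$ is not automatic, I would, exactly as in the $h=\Omega(\size{S})$ case of Lemma~\ref{lemma: 2s complexity}, fall back when $h$ is large to simply taking the union of BFS trees from all $\tau(\calP)$ nodes, which exactly preserves all $\calP$-distances in $O(\tau(\calP)+D)$ rounds and has $O(n\,\tau(\calP))=O(nh)$ edges. Combining the cases gives $O(\tau(\calP)+D)$ rounds overall. \textbf{The main obstacle} I anticipate is getting the parameter arithmetic to simultaneously yield the $O(n\size{\calP}^{1/3}\log^{2/3}n)$ edge bound and a round complexity that depends on $\tau(\calP)$ rather than $\size{\calP}$ — in particular, arguing that pairing each $\calP$-node with every cluster (rather than handling each pair in $\calP$ separately) suffices for the stretch guarantee, so that the path-buying round cost scales with $\tau(\calP)$, while the edge count still scales with $\size{\calP}$; this asymmetry is exactly what makes the pairwise case subtler than the sourcewise one and is where I would spend the most care.
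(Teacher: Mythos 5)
There is a genuine gap here, and you yourself flag it but do not close it. You try to transplant the path-buying scheme of \twos{} --- pair each relevant node with each cluster, and buy the cheapest admissible path to that cluster --- into the pairwise setting, but as your own arithmetic shows, this does not yield the claimed $O(n\size{\calP}^{1/3}\log^{2/3}n)$ bound: with the cluster-pairing, the path-buying phase contributes $\Theta\bigl(\tau(\calP)\cdot\frac{n\log n}{h}\cdot\frac{n\log^2 n}{h^2}\bigr)$ edges, and balancing this against $O(nh)$ gives $h=\Theta\bigl((\tau(\calP)\,n)^{1/4}\log^{3/4}n\bigr)$ and an $n^{5/4}$-type total, strictly worse than the target in general. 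Your subsequent attempt to fix this --- replacing the cluster count $\frac{n\log n}{h}$ by ``the number of clusters actually relevant'' --- is not quantified and cannot be in the worst case: if $\calP$ is a matching, a single $\calP$-node can have relevant shortest paths passing through $\Theta(n\log n/h)$ distinct clusters, so nothing is saved. The intermediate calculations with $h=\Theta(n^{2/3}/\size{\calP}^{1/3})$ likewise do not produce the advertised bound.

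The paper avoids this entirely by abandoning the node-cluster pairing in the path-buying phase and instead iterating over the pairs in $\calP$ directly: for each $(u,v)\in\calP$, buy the shortest $u$-$v$ path in $G$ if it has at most $\frac{2c^2 n\log^2 n}{h^2}$ missing edges in $H$ after the clustering and BFS phases. The edge count from path buying is then simply $O\bigl(\size{\calP}\cdot\frac{n\log^2 n}{h^2}\bigr)$, with no cluster factor at all, and $h=\size{\calP}^{1/3}\log^{2/3}n$ balances this cleanly against the $O(nh)$ edges from clustering and BFS, giving $O(n\size{\calP}^{1/3}\log^{2/3}n)$ total. The stretch argument also becomes trivial in the ``few missing edges'' case --- the whole path is bought outright, so $\dist_H(u,v)=\dist_G(u,v)$ --- while the ``many missing edges'' case is handled by the BFS trees exactly as in Lemma~\ref{lemma: 2s stretch}, with no need for the within-cluster detour you introduce. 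Your implicit premise, that buying per $\calP$-pair must force a round complexity scaling with $\size{\calP}$ and hence one must retreat to a cluster-based scheme, is incorrect: the distributed implementation still needs only $\tau(\calP)$ BFS searches (one per node appearing in $\calP$), each augmented with a missing-edge counter; when the reversed BFS from $v$ reaches a node $u$ with $(u,v)\in\calP$ and an admissibly small count, $u$ emits a ``buy'' message back up the tree. The per-pair buying decisions ride on top of the per-node BFS searches at no extra round cost, so the $O(\tau(\calP)+D)$ bound holds even though the edge count is governed by $\size{\calP}$. Your large-$h$ fallback (union of BFS trees rooted at all $\tau(\calP)$ nodes) coincides with the paper's treatment of the case $\tau(\calP)<2c^2\size{\calP}^{1/3}\log^{2/3}n$ and is fine.
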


If $\tau(\calP) < 2c^2 \size{\calP}^{1/3}\log^{2/3} n$,
achieving the desired spanner is simple:
for each $u$ appearing in $\calP$,
add to $H$ a BFS from $u$.
The number of edges is
$O\left(n \tau(\calP)\right)
=O\left(n\size{\calP}^{1/3}\log^{2/3}n\right)$,
the stretch is $0$ for all pairs in $\calP$,
and the running time is $O(\tau(\calP) + D)$,
as desired.
Otherwise, Lemmas~\ref{lemma: 2p size} and \ref{lemma: 2p stretch} prove the claimed size and stretch of \twop{} below.
Lemma~\ref{lemma: 2p complexity} proves the running time of its distributed implementation,
giving Theorem~\ref{thm: 2p}.

\subsubsection{\twop}

Input: $G=(V,E)$, $\calP\subseteq V\times V$\\
Output: a subgraph $H$\\
Initialization:
$n=|V|$, $h= \size{\calP}^{1/3} \log^{2/3}n$, and $H=(V,\emptyset)$

\paragraph{Clustering and BFS}
Run clustering and add BFS trees from
selected cluster centers, as in \twos.

\paragraph{Path Buying}
For each pair $(u,v) \in \calP$,
if the shortest path between $u$ and $v$ in $G$
has at most $\frac{2c^2 n\log^2 n}{h^2}$ missing edges in $H$,
add it to $H$.

\subsubsection{Analysis of \twop}

\begin{lemma}
\label{lemma: 2p size}
Given a graph $G=(V,E)$ with $|V|=n$
and a set $\calP\subseteq V\times V$,
\twop{} outputs a graph $H=(V,E')$,
$E'\subseteq E$,
with $|E'| =O(n \size{\calP}^{1/3} \log^{2/3}n)$ edges,
w.p. at least~$1-O\left(n^{-c+2}\right)$.
\end{lemma}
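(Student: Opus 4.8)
The plan is to bound the number of edges contributed by each of the three phases of \twop{}, exactly paralleling the structure of the proof of Lemma~\ref{lemma: 2s size}, but with the simpler path-buying accounting that comes from having a set $\calP$ of pairs rather than a product $S\times\hat\calC$. First I would handle the clustering phase: the first part adds at most one edge per node, contributing $O(n)$ edges, and the second part adds all incident edges of unclustered nodes. As in Lemma~\ref{lemma: 2s size}, a node of degree at least $h$ stays unclustered with probability at most $(1-\tfrac{c\log n}{h})^h = O(n^{-c})$, so by a union bound over the $n$ nodes, w.p.\ $1-O(n^{-c+1})$ every high-degree node is clustered, and the unclustered nodes contribute only $O(nh)$ edges.

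Next I would handle the BFS phase. A node becomes a BFS root with probability $\tfrac{c\log n}{h}\cdot\tfrac{h^2}{cn\log n} = \tfrac{h}{n}$, so the expected number of trees is $h$; a Chernoff bound gives at most $4h$ trees w.p.\ $1-\exp(-\Omega(h)) = 1-O(n^{-c})$ since $h \ge c\log n$ (this uses $\size{\calP}\ge 1$, and one should note $h\ge c\log n$ holds in the interesting regime; the degenerate case $\tau(\calP)$ small was already dispatched before the lemma). Each tree has at most $n-1$ edges, so the BFS phase adds $O(nh)$ edges. Finally, for the path-buying phase: we add at most one shortest path per pair $(u,v)\in\calP$, and only if it has at most $\tfrac{2c^2 n\log^2 n}{h^2}$ missing edges, so the total here is at most $\size{\calP}\cdot\tfrac{2c^2 n\log^2 n}{h^2} = O\bigl(\tfrac{\size{\calP}\, n\log^2 n}{h^2}\bigr)$ edges. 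This is the step where the pairwise setting genuinely differs from \twos{} — there is no extra $\size{S}$ or $\size{\hat\calC}$ factor, which is exactly why the bound improves.

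Summing the three contributions gives $O(n) + O(nh) + O\bigl(\tfrac{\size{\calP}\,n\log^2 n}{h^2}\bigr) = O(nh + \tfrac{\size{\calP}\,n\log^2 n}{h^2})$. Substituting $h = \size{\calP}^{1/3}\log^{2/3}n$ balances the two terms: $nh = n\size{\calP}^{1/3}\log^{2/3}n$ and $\tfrac{\size{\calP}\,n\log^2 n}{h^2} = \tfrac{\size{\calP}\,n\log^2 n}{\size{\calP}^{2/3}\log^{4/3}n} = n\size{\calP}^{1/3}\log^{2/3}n$, yielding the claimed $O(n\size{\calP}^{1/3}\log^{2/3}n)$. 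The failure probability is dominated by the union bound in the clustering phase, giving the stated $1-O(n^{-c+2})$ (absorbing the $O(n^{-c})$ term from the BFS Chernoff bound). I do not expect any real obstacle here — the only thing to be slightly careful about is confirming $h\ge c\log n$ so that the Chernoff bound in the BFS phase has the claimed exponentially small failure probability; this is ensured since the case of very small $\tau(\calP)$, hence small $\size{\calP}$, is handled separately before the lemma, and otherwise $\size{\calP}^{1/3}\log^{2/3}n \ge c\log n$.
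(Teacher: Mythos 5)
Your proposal is correct and follows essentially the same approach as the paper; the only difference is that the paper simply invokes the proof of Lemma~\ref{lemma: 2s size} for the clustering and BFS phase bounds (noting the needed condition $c\log n \leq h \leq n^{3/4}$), while you redo that analysis explicitly, and both then bound the path-buying phase by $O(\size{\calP}\,n\log^2 n/h^2)$ and substitute $h = \size{\calP}^{1/3}\log^{2/3}n$.
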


\begin{proof}
The lemma follows from the proof of
Lemma~\ref{lemma: 2s size}:
the clustering and BFS phases
add $O(nh)$ edges to the graph
w.p. at least~$1-O\left(n^{-c+2}\right)$,
as long as $c\log n \leq h \leq n^{3/4}$.
The first inequality comes from the comment after
the statement of Theorem~\ref{thm: 2p}, the fact that
$\tau(\calP) \leq 2\size{\calP}$ and the choice of $h$,
and the second inequality is immediate.

In the path buying phase,
at most $\frac{2c^2 n\log^2 n}{h^2}$ edges are added
for each pair in $\calP$,
for a total of
$O\left(\frac{\size{\calP} n\log^2 n}{h^2}\right)$ edges.
Substituting $h= \size{\calP}^{1/3} \log^{2/3}n$,
we get a total of $O\left(n \size{\calP}^{1/3} \log^{2/3}n\right)$
edges in $H$.
\end{proof}

\begin{lemma}
\label{lemma: 2p stretch}
Given a graph $G=(V,E)$ with $|V|=n$,
\twop{} outputs a graph $H$
satisfying $\delta_H(u,v) \leq \delta_G(u,v) +2$
for each pair of vertices $(u,v)\in \calP$,
w.p. at least~$1-O\left(n^{-c}\right)$.
\end{lemma}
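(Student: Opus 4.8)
The plan is to mimic the stretch analysis of \twos{} (Lemma~\ref{lemma: 2s stretch}), but using the fact that in \twop{} the path buying phase considers \emph{every} relevant pair directly, which makes the argument shorter. Fix a pair $(u,v)\in\calP$ and a shortest path $\rho$ between $u$ and $v$ in $G$. I would distinguish two cases according to the number of missing edges of $\rho$ in $H$ after the clustering phase.

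First, suppose $\rho$ has at most $\frac{2c^2 n\log^2 n}{h^2}$ missing edges in $H$ after clustering. Then the path buying phase of \twop{} adds $\rho$ itself to $H$, so $\delta_H(u,v)=\delta_G(u,v)$ and there is nothing more to prove for this pair. (This is the simplification over \twos: there is no need to argue about a cluster-mate $v'$ of $v$, since the pair $(u,v)$ is handled directly rather than via a source-cluster pair.)

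Second, suppose $\rho$ has more than $\frac{2c^2 n\log^2 n}{h^2}$ missing edges in $H$ after clustering. As in the proof of Lemma~\ref{lemma: 2s stretch}, each missing edge has at least one clustered endpoint, and consecutive missing edges cannot share a clustered neighbor on a shortest path (otherwise $\rho$ could be shortened by routing through the common cluster center), so $\rho$ traverses more than $\frac{c^2 n\log^2 n}{h^2}$ distinct clusters. The probability that none of the centers of these clusters is selected as a BFS root is at most $\left(1-\frac{h^2}{cn\log n}\right)^{c^2 n\log^2 n/h^2}=O(\exp(-c\log n))=O(n^{-c})$. Conditioned on some traversed cluster $C_i$ having its center $c_i$ chosen as a BFS root, pick a node $w\in\rho\cap C_i$; then $w$ is adjacent to $c_i$, so the BFS tree rooted at $c_i$ gives $\delta_H(u,c_i)=\delta_G(u,c_i)\le\delta_G(u,w)+1$ and $\delta_H(c_i,v)=\delta_G(c_i,v)\le\delta_G(w,v)+1$. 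The triangle inequality then yields
\[
\delta_H(u,v)\le\delta_H(u,c_i)+\delta_H(c_i,v)\le\delta_G(u,w)+\delta_G(w,v)+2=\delta_G(u,v)+2,
\]
where the last equality uses that $w$ lies on the shortest path $\rho$.

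Finally, I would take a union bound over all pairs in $\calP$ (at most $\size{\calP}\le n^2$ of them) of the $O(n^{-c})$ failure probability from the second case, giving overall success probability $1-O(n^{-c})$ after adjusting the constant $c$; the clustering structural facts used (each missing edge has a clustered endpoint, shortest paths do not revisit a cluster) are deterministic and need no union bound. I do not expect a serious obstacle here: the only point requiring a little care is the counting step that converts ``$>\frac{2c^2 n\log^2 n}{h^2}$ missing edges'' into ``$>\frac{c^2 n\log^2 n}{h^2}$ traversed clusters,'' which is exactly the argument already used in Lemma~\ref{lemma: 2s stretch} and relies on $\rho$ being a shortest path in $G$.
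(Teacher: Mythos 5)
Your proof is correct and follows essentially the same approach as the paper's, which simply defers the cluster-counting and BFS-root-probability argument to the proof of Lemma~\ref{lemma: 2s stretch}; you just spell those details out inline. The explicit remark about the union bound over pairs in $\calP$ is a reasonable clarification, not a departure.
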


\begin{proof}
Let $(u,v)\in \calP$ be an arbitrary pair of nodes,
and fix a shortest path $\rho$ in $G$ between them.

If $\rho$ has at most
$\frac{2c^2 n\log^2 n}{h^2}$ missing edges
in $H$ before the path buying phase,
it is added to $H$,
and $\delta_G(u,v) = \delta_H(u,v)$.
Otherwise,
$\rho$ has more than $\frac{2c^2 n\log^2 n}{h^2}$ missing edges
before the BFS phase,
so it traverse at least $\frac{c^2 n\log^2 n}{h^2}$ clusters.
As in the proof of Lemma~\ref{lemma: 2s stretch},
at least one of the corresponding cluster centers
is chosen as a root of a BFS tree w.p. at least~$1-O\left(n^{-c}\right)$,
and $\delta_H(u,v) \leq \delta_G(u,v) +2$,
as claimed.
\end{proof}

\subsubsection{Implementing \twop}
\begin{lemma}
\label{lemma: 2p complexity}
\twop{} can be implemented in
$O(\tau(\calP) + \size{\calP}^{1/3} \log^{2/3}n + D)$ rounds
in the CONGEST model
w.p. at least $1-o\left(n^{-c}\right)$.
\end{lemma}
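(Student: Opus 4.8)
The plan is to decompose the running time along the same phase structure used in the analyses of \twos{} and \fourap{}, reusing the distributed implementations from Lemma~\ref{lemma: 2s complexity} almost verbatim, and only accounting for the places where \twop{} differs. First I would handle preprocessing: every node must learn $n$ and $\size{\calP}$ in order to compute $h=\size{\calP}^{1/3}\log^{2/3}n$; this is done by aggregating counts up a BFS tree rooted at the minimum-identifier node and broadcasting back down, in $O(D)$ rounds. Here a node knows how many pairs of $\calP$ it belongs to (per the model description), so the sum of these counts over all nodes is $2\size{\calP}$, from which $\size{\calP}$ and hence $h$ are recovered. The clustering phase is identical to that of \twos{} and runs in $O(1)$ rounds; the BFS phase picks roots locally with no communication and then runs all the BFS searches in parallel via Holzer--Wattenhofer in $O(h+D)$ rounds, using the fact (from the proof of Lemma~\ref{lemma: 2s size}) that the number of roots is $O(h)$ w.p.\ $1-o(n^{-c})$, together with the bounds $c\log n\le h\le n^{3/4}$ noted in the proof of Lemma~\ref{lemma: 2p size}.

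The one genuinely new ingredient is the path-buying phase, which in \twop{} is simpler than in \twos{}: instead of one path per source--cluster pair, we just need, for each pair $(u,v)\in\calP$, to measure the number of missing edges on a fixed shortest $u$--$v$ path and, if that count is at most $\tfrac{2c^2n\log^2 n}{h^2}$, add that path to $H$. The approach is to run a BFS search from every node appearing in $\calP$, i.e.\ from the $\tau(\calP)$ relevant nodes, in parallel; by Holzer--Wattenhofer this takes $O(\tau(\calP)+D)$ rounds, and as in the path-buying analysis of \twos{} each such BFS is augmented with a per-path counter of missing edges (an edge $(x,\mathrm{parent}(x))$ in the BFS tree is ``missing'' iff it was not added to $H$ in the clustering/BFS phases, which each endpoint knows locally), which does not change the asymptotic cost. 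When the search from $u$ reaches $v$, node $v$ learns $\delta_G(u,v)$ and the number of missing edges on \emph{the} shortest path (the one inside the BFS tree of $u$); if this is within the threshold, $v$ decides to buy that path. Then, exactly as in \twos{}, we replay the BFS searches backwards, sending a ``buy'' bit up each selected tree so that every edge on the chosen path gets added to $H$; this costs another $O(\tau(\calP)+D)$ rounds.

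Summing the phases gives $O(D)+O(1)+O(h+D)+O(\tau(\calP)+D)=O(\tau(\calP)+\size{\calP}^{1/3}\log^{2/3}n+D)$, with the stated failure probability $o(n^{-c})$ coming entirely from the bound on the number of BFS roots (all other phases are deterministic in their running time). I would then note that this is exactly the bound claimed in the lemma. The main subtlety to get right is the accounting around which node initiates and which node decides to buy: a pair $(u,v)\in\calP$ only needs \emph{one} of $u,v$ to run a BFS, and the decision and the backward ``buy'' pass are consistently anchored at that endpoint, so the total number of BFS searches is $\tau(\calP)$ rather than $\size{\calP}$ or $2\size{\calP}$ — this is what makes the running time depend on $\tau(\calP)$ and not on $\size{\calP}$, and it is the one place where the argument is not a literal copy of the \twos{} proof. (Separately, Theorem~\ref{thm: 2p} follows by combining this lemma with the simple BFS-from-each-relevant-node algorithm handling the regime $\tau(\calP)<2c^2\size{\calP}^{1/3}\log^{2/3}n$, so that $h=O(\tau(\calP))$ may be assumed here, absorbing the $h$ term into $\tau(\calP)$.)
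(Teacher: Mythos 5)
Your proposal is correct and follows essentially the same route as the paper: clustering and BFS are inherited from the \twos{} implementation in $O(h+D)$ rounds, and path buying is handled by running $\tau(\calP)$ parallel BFS searches (one per node appearing in $\calP$) augmented with missing-edge counters, then replaying them backwards with ``buy'' bits, for another $O(\tau(\calP)+D)$. One small correction: the ``subtlety'' you flag---arranging that only one of $u,v$ initiates a BFS so as to avoid $2\size{\calP}$ searches---is unnecessary and is not what the paper does; running a BFS from \emph{every} node appearing in $\calP$ already yields exactly $\tau(\calP)$ searches by definition of $\tau$, and the paper simply tolerates that a pair $(u,v)$ may end up buying two shortest paths (one found from $u$, one from $v$), which changes neither the asymptotic edge count nor the round complexity.
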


\begin{proof}
We can implement the clustering and path buying phases in
$O(h+D)= O(\tau(\calP) +D)$ rounds
with success probability~$1-o\left(n^{-c}\right)$,
as seen in the proof of Lemma~\ref{lemma: 2s complexity}.
In order to count missing edges in paths,
we run a BFS search in $G$ from each node appearing in $\calP$.
Then,
the BFS search is run backwards,
and is used to add the ``cheap'' paths:
for a pair $(u,v)$ in $\calP$,
if the BFS from $v$ arrives at $u$
traversing at most $\frac{2c^2 n\log^2 n}{h^2}$ missing edges,
then $u$ sends back a ``buy'' message up the tree,
and the path is added.
We may end up adding two shortest path for a pair
$(u,v)\in \calP$,
but this does not affect the asymptotic number of edges
or the time complexity.
This phase is implemented in $O(\tau(\calP) + D)$ rounds,
by running the $\tau(\calP)$ BFS searches in parallel.
\end{proof}

\subsection{A $(+4)$-Pairwise Spanner}

We present an algorithm for constructing
a $(+4)$-pairwise spanner,
with the parameters described by the following theorem.

\begin{theorem}
\label{thm: 4p}
Given a graph $G$ on $n$ nodes
and a set $\calP$ of pairs,
a $(+4)$-pairwise spanner
with $O\left(n\size{\calP}^{2/7}\log^{6/7}n \right)$ edges
can be constructed in $O(\tau(\calP)+D)$ rounds
in the \cgst{} model w.h.p.
\end{theorem}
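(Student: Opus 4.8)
The plan is to follow the template already used for \twop{} --- clustering, BFS trees rooted at a random subset of the cluster centers, and a path-buying phase --- but instantiated with the larger contraction parameter $h=\size{\calP}^{2/7}\log^{6/7}n$ and with a path-buying rule that spends the extra $+2$ of allowed stretch to lower the per-pair cost. As in \twop{}, the easy regime is disposed of first: if $\tau(\calP)<2c^2\size{\calP}^{2/7}\log^{6/7}n$, I would simply add a BFS tree from every node appearing in $\calP$, which preserves all relevant distances exactly, uses $O(n\tau(\calP))=O(n\size{\calP}^{2/7}\log^{6/7}n)$ edges, and runs in $O(\tau(\calP)+D)$ rounds; so from now on assume $h=O(\tau(\calP))$. \fourp{} then runs the clustering phase of \twos{} with threshold $h$, picks each cluster center to be a BFS root with probability $\Theta(h^2/(n\log n))$ so that $\Theta(h)$ BFS trees are added w.h.p., and finally runs a path-buying phase in which, for each $(u,v)\in\calP$, the shortest $u$-$v$ path $\rho$ is either bought outright when it has few missing edges, or --- when $\rho$ is only moderately cheap --- is replaced by a short $O(1)$-additive connection between the centers $c_i$ and $c_j$ of the first and last clusters that $\rho$ meets, this connection being supplied with the help of an auxiliary random subsampling so that its purchase cost is provably small.

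For the size bound I would follow the proof of Lemma~\ref{lemma: 2p size} phase by phase. The clustering phase adds $O(n)$ edges in its first part and, since $c\log n\le h\le n^{3/4}$ --- the lower bound coming, as after Theorem~\ref{thm: 2p}, from the main-regime assumption, the fact $\tau(\calP)\le 2\size{\calP}$, and the choice of $h$ --- it adds $O(nh)$ edges for unclustered nodes w.h.p.; the BFS phase adds $O(nh)$ edges because there are $O(h)$ roots w.h.p.; and the path-buying phase adds $O(\size{\calP}\cdot t)$ edges, where $t$ is the largest number of missing edges a purchased object may contain. The whole gain of the $(+4)$ construction over the $(+2)$ one is that the detour option lets $t$ be taken as $\Theta\!\bigl(n\log^3 n/h^{5/2}\bigr)$ --- a factor $\Theta(\sqrt h/\log n)$ smaller than the $\Theta(n\log^2 n/h^2)$ forced by the $(+2)$ analysis. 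With this $t$ the path-buying phase contributes $O\!\bigl(\size{\calP}\,n\log^3 n/h^{5/2}\bigr)$ edges, which equals $\Theta(nh)$ exactly when $h^{7/2}=\Theta(\size{\calP}\log^3 n)$, i.e.\ $h=\Theta(\size{\calP}^{2/7}\log^{6/7}n)$, so the three phases sum to $O(n\size{\calP}^{2/7}\log^{6/7}n)$ edges w.p.\ $1-O(n^{-c+2})$.

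The stretch bound I would prove by cases on a fixed shortest path $\rho$ between $(u,v)\in\calP$, in the style of Lemmas~\ref{lemma: 2p stretch},~\ref{lemma: 2s stretch} and~\ref{lemma: 4ap stretch}. If $\rho$ has more missing edges than the BFS phase can miss --- so that it traverses more than $\Theta(n\log^2 n/h^2)$ clusters --- then w.h.p.\ some center on $\rho$ was chosen as a BFS root, and routing through it gives a $(+2)$ detour, exactly as in Lemma~\ref{lemma: 2p stretch}. If $\rho$ was bought outright, the stretch is $0$. In the remaining case let $x$ be the first clustered node on $\rho$ (going from $u$ to $v$) and $y$ the last, with cluster centers $c_i,c_j$; the part of $\rho$ from $u$ to $x$ and the part from $y$ to $v$ lie in $H$, since every other node on them is unclustered, and the edges $(x,c_i)$ and $(y,c_j)$ lie in $H$, so $\delta_H(x,y)\le \delta_H(x,c_i)+\delta_H(c_i,c_j)+\delta_H(c_j,y)=\delta_H(c_i,c_j)+2$. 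Prepending $(x,c_i)$ and appending $(y,c_j)$ to the $x$-to-$y$ part of $\rho$ exhibits a $G$-walk from $c_i$ to $c_j$ of length $\delta_G(x,y)+2$, so installing a shortest path between $c_i$ and $c_j$ gives $\delta_H(c_i,c_j)\le\delta_G(x,y)+2$ and hence $\delta_H(x,y)\le\delta_G(x,y)+4$; adding the exactly preserved prefix and suffix, whose $G$-lengths sum with $\delta_G(x,y)$ to $\delta_G(u,v)$ because $x,y$ lie on $\rho$, yields $\delta_H(u,v)\le\delta_G(u,v)+4$. A union bound over $\calP$ keeps the failure probability $O(n^{-c})$.

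Finally, the round complexity follows the proof of Lemma~\ref{lemma: 2p complexity}: parameter dissemination takes $O(D)$ rounds and clustering $O(1)$; the $O(h)$ BFS trees are built in $O(h+D)$ rounds with the parallel-BFS algorithm of Holzer and Wattenhofer~\cite{HolzerW12}, augmented to mark tree edges and to carry a missing-edge counter; and the path-buying phase runs BFS from the $\tau(\calP)$ nodes appearing in $\calP$ --- not from the cluster centers, of which there may be many more than $\tau(\calP)$ --- lets each cluster center decide locally, from the distance-and-counter information that reaches it, which prefixes, suffixes and detours to buy, and propagates ``buy'' bits backward along the BFS trees, all in $O(\tau(\calP)+D)$ rounds. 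Since $h=O(\tau(\calP))$ in the main regime, the total is $O(\tau(\calP)+D)$, establishing Theorem~\ref{thm: 4p}. I expect the main obstacle to be the path-buying rule itself: one must define the ``moderately cheap'' regime and the detour so that (i) every $\calP$-pair not served by an outright purchase or a single BFS tree is served by a $(+4)$ detour w.h.p., (ii) the total purchase cost is $O\!\bigl(\size{\calP}\,n\log^3 n/h^{5/2}\bigr)$ rather than the $(+2)$ bound, and (iii) the detours can be found and installed within $O(\tau(\calP)+D)$ rounds using only BFS searches rooted at $\calP$-nodes --- unlike the sequential constructions, we can afford neither searches rooted at the many cluster centers nor the ``value'' computations they rely on.
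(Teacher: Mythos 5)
Your high-level plan — clustering and BFS as in \twos{}, then spend the extra $+2$ of stretch on a detour through cluster centers so that the per-pair path-buying cost drops from $\Theta(n\log^2 n/h^2)$ to $\Theta(n\log^3 n/h^{5/2})$ — is exactly the paper's, and your size and round-complexity calculations are correct \emph{given} such a rule. But the path-buying rule, which you flag yourself as the main obstacle, is where the gap is genuine, and the mechanism you sketch doesn't close it.

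Your stretch argument routes the detour through the centers $c_i,c_j$ of the \emph{first and last} clusters on $\rho$ and requires a shortest $c_i$-$c_j$ path to be present in $H$. Those two centers are determined by $\rho$, not chosen by you; if your ``auxiliary random subsampling'' keeps only a sparse fraction of the centers, then a fixed $c_i$ is almost surely not sampled and no $c_i$-$c_j$ path gets installed, while if you must install paths for every pair $(c_i,c_j)$ that arises this way, the subsampling buys nothing and the size bound collapses. The paper's fix is a separate \emph{prefix-suffix buying} phase: for each $(u,v)\in\calP$ it buys the first $\ell$ and last $\ell$ \emph{missing} edges of $\rho$, with $\ell=n\log^3 n/h^{5/2}$. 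This makes $u$ exactly connected, inside $H$, to $\Omega(\ell)$ clusters near the start of $\rho$ (and symmetrically for $v$), so that a random subsample $A$ of centers at rate $\Theta(\log n/\ell)$ w.h.p.\ contains \emph{some} center incident on the prefix — that is the detour anchor $c_1$, not the center of the first cluster. Even then your case analysis has no branch for the detour path $\sigma$ (from $c_1$ toward $C_2$) being itself expensive; the paper's rule only installs a $c_1$-to-$x$ path, $x\in C_2$, if it has $\le 2c^2n\log^2 n/h^2$ missing edges, and when none does it argues that $\sigma$ must cross many clusters, so w.h.p.\ a BFS root lies on $\sigma$ and gives $+4$ through the BFS tree. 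Finally, your requirement (iii) — that the detours be found by BFS rooted only at $\calP$-nodes — is both unnecessary and unworkable: BFS from $\calP$-nodes gives no information about shortest paths between cluster centers, and the paper instead runs BFS from the sampled centers in $A$, which is affordable because $\size{A}=O(\size{\calP}^{3/7}\log^{2/7}n)=O(\tau(\calP))$ under the standing assumption $\size{\calP}\ge\log^4 n$.
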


If $\size{\calP} < \log^4 n$,
the $(+2)$-pairwise spanner from Theorem~\ref{thm: 2p}
is sparser than the one promised by Theorem~\ref{thm: 4p},
and can be constructed in the same running time.
Otherwise, Lemmas~\ref{lemma: 4p size} and \ref{lemma: 4p stretch} show the claimed size and stretch of \fourp{} below, which together with Lemma~\ref{lemma: 4p complexity}, which analyzes the running time of its distributed implementation, proves Theorem~\ref{thm: 4p}.

\subsubsection{\fourp}

Input: a graph $G=(V,E)$; a set of pairs $\calP\subseteq V\times V$\\
Output: a subgraph $H$\\
Initialization:
$n=|V|$,
$h=\size{\calP}^{2/7}\log^{6/7}n$,
$\ell=\frac{n\log^3 n}{h^{5/2}}$ and $H=(V,\emptyset)$

\paragraph{Clustering and BFS}
Run clustering and add BFS trees from selected
cluster centers, as in \twos.

\paragraph{Prefix-Suffix Buying}
For each pair $(u, v) \in \calP$,
let $\rho$ be a shortest path from $u$ to $v$.
Add to $H$
the first $\ell$ missing edges
and the last $\ell$ missing edges in $\rho$.

\paragraph{Choosing Cluster Centers}
Construct a set $A$ of cluster centers by adding to it
each cluster center independently w.p.
$\frac{16 c\log n}{\ell}$.

\paragraph{Path Buying}
For each pair $(c_1,c_2)\in A\times A$:
fix a set of paths containing
a single shortest path from $c_1$ to each $x\in C_2$;
omit all paths with more than $\frac{2c^2 n\log^2 n}{h^2}$
missing edges in $H$;
if any paths are left,
add to $H$ the shortest among them.

\subsubsection{Analysis of \fourp}

\begin{lemma}
\label{lemma: 4p size}
Given a graph $G=(V,E)$ with $\size{V}=n$
and a set $\calP\subseteq V\times V$,
\fourp{} outputs a graph $H=(V,E')$,
$E'\subseteq E$,
with $|E'| =O\left(n\size{\calP}^{2/7}\log^{6/7}n \right)$
edges w.p. at least $1-O\left(n^{-c+2}\right)$.
\end{lemma}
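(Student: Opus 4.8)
The plan is to bound the number of edges contributed by each of the four phases of \fourp{} separately and then sum. Since the clustering and BFS phases are run exactly as in \twos, I would invoke the proof of Lemma~\ref{lemma: 2s size}: these two phases add $O(nh)$ edges with probability at least $1-O(n^{-c+2})$, which is applicable because in the regime $\size{\calP}\ge\log^4 n$ under consideration the choice $h=\size{\calP}^{2/7}\log^{6/7}n$ satisfies $c\log n\le h\le n^{3/4}$. From the same proof I also extract the bound $\size{\calC}\le \frac{4cn\log n}{h}$ on the number of cluster centers, holding with probability $1-o(n^{-c})$; this feeds into the path-buying estimate below.

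Second, the prefix-suffix buying phase adds at most $2\ell$ edges per pair in $\calP$, hence $O(\ell\,\size{\calP})$ in total; substituting $\ell=\frac{n\log^3 n}{h^{5/2}}$ and using the identity $h^{7/2}=\size{\calP}\log^3 n$ gives $\ell\,\size{\calP}=nh$. Third, for the path-buying phase the key quantity is $\size{A}$. Conditioning on $\size{\calC}\le\frac{4cn\log n}{h}$, each cluster center enters $A$ independently with probability $\frac{16c\log n}{\ell}$, so $\E[\size{A}]\le\frac{64c^2 n\log^2 n}{h\ell}=O\!\left(\frac{h^{3/2}}{\log n}\right)$ after substituting $\ell$. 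The point that makes a Chernoff bound effective here is that $\size{\calP}\ge\log^4 n$ forces $h\ge\log^2 n$, whence $\E[\size{A}]=\Omega(\log n)$ and $\size{A}=O(h^{3/2}/\log n)$ with probability $1-O(n^{-c})$. Each of the at most $\size{A}^2$ pairs in $A\times A$ contributes at most $\frac{2c^2 n\log^2 n}{h^2}$ edges, for a total of $O\!\left(\frac{h^3}{\log^2 n}\cdot\frac{n\log^2 n}{h^2}\right)=O(nh)$.

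Summing the four contributions yields $O(nh)=O(n\size{\calP}^{2/7}\log^{6/7}n)$, and a union bound over the failure events of the phases gives overall failure probability $O(n^{-c+2})$, dominated by the clustering step. I expect the main obstacle to be the estimate on $\size{A}$: one must make sure the sampling probability $\frac{16c\log n}{\ell}$ is at most $1$ --- which holds precisely in the range where the target bound is nontrivial, while in the complementary range one takes $A=\calC$ and checks directly that the path-buying phase still adds $O(nh)$ edges --- and one must carry the conditioning on $\size{\calC}$ through cleanly so that the Chernoff bound on $\size{A}$ can be combined with the earlier events via a single union bound.
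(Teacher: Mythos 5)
Your proposal is correct and follows the paper's own decomposition: clustering and BFS give $O(nh)$ edges by invoking the proof of Lemma~\ref{lemma: 2s size}; prefix--suffix buying gives $O(\size{\calP}\ell)=O(nh)$; and path buying is bounded by a Chernoff bound on $\size{A}$ followed by $\size{A}^2\cdot O(n\log^2 n/h^2)=O(nh)$ --- the only cosmetic difference is that you condition on $\size{\calC}$ before bounding $\E[\size{A}]$, whereas the paper computes $\E[\size{A}]$ unconditionally over the composed Bernoulli sampling. Your remark about the regime where $16c\log n/\ell>1$ (so that $A=\calC$) and the check that the path-buying contribution remains $O(nh)$ there is a correct point that the paper leaves implicit, but it does not change the route.
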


\begin{proof}
The clustering and BFS phases
add $O(nh)$ edges to the graph
w.p. at least $1-O\left(n^{-c+2}\right)$,
as seen in the proof of Lemma~\ref{lemma: 2s size},
as long as $c\log n <h <n^{3/4}$.
The first inequality comes from the discussion
below the statement of Theorem~\ref{thm: 4p},
and the second is immediate.

In the prefix-suffix buying phase,
at most $O(\ell)$ edges are bought for each pair in $\calP$,
for a total of
$O\left(\size{\calP} \cdot \frac{n\log^3n}{h^{5/2}}\right)
= O\left(n\size{\calP}^{2/7}\log^{6/7}n \right)$ edges.

Finally, in the path buying phase
we add to $H$ at most $\size{A}^2$ paths,
with $O\left(\frac{n\log^2n}{h^2} \right)$
missing edges in each.
Each node is chosen to be a cluster center w.p.
$\frac{c\log n}{h}$,
and then to enter $A$ w.p.\ $\frac{16 c\log n}{\ell}$,
so $\E[\size{A}]
= n \cdot \frac{c\log n}{h} \cdot \frac{16 c\log n}{\ell}
= \frac{16 c^2 n \log^2n}{h\ell}
= 16 c^2 \size{\calP}^{3/7} \log^{2/7} n$.
A Chernoff bound implies
$\Pr\left[
\size{A} > 64 c^2 \size{\calP}^{3/7} \log^{2/7}n \right]
\leq \exp\left(-16 c^2 \size{\calP}^{3/7} \log^{2/7} n \right)
= O(n^{-c})$;
the last equality holds under the assumption
$\size{\calP}\geq \log n$,
as discussed below the statement of Theorem~\ref{thm: 2p}.
Hence,
the number of edges added in the path buying step is
$O\left(\left(\size{\calP}^{3/7} \log^{2/7} n \right)^2 \cdot
 \frac{n\log^2n}{h^2} \right)
=O(\left(n\size{\calP}^{2/7}\log^{6/7}n \right)$ edges.
In total, $H$ has
$O(\left(n\size{\calP}^{2/7}\log^{6/7}n \right)$ edges
w.p. at least $1-O\left(n^{-c+2}\right)$.
\end{proof}

\begin{lemma}
\label{lemma: 4p stretch}
Given a graph $G=(V,E)$ with $|V|=n$,
\fourp{} outputs a graph $H$
satisfying $\delta_H(u,v) \leq \delta_G(u,v) +4$
for each pair of vertices $(u,v)\in \calP$
w.p. at least $1-O\left(n^{-c+2}\right)$.
\end{lemma}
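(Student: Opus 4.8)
The plan is to fix a pair $(u,v)\in\calP$ and a shortest path $\rho$ from $u$ to $v$ in $G$, and to split into cases according to how many missing edges $\rho$ has in $H$ after the clustering and BFS phases. First I would handle the easy regime: if $\rho$ has few missing edges, then (using the bound on cluster count along a shortest path, exactly as in Lemma~\ref{lemma: 2s stretch}) $\rho$ traverses fewer than $\frac{c^2 n\log^2 n}{h^2}$ clusters; but more relevantly, the prefix-suffix buying phase has already bought the first $\ell$ and last $\ell$ missing edges of $\rho$, so only the ``middle'' missing edges can still be absent. I would argue that after prefix-suffix buying, either $\rho$ is entirely in $H$, or the remaining missing portion is sandwiched between a clustered node $c_1$'s cluster (reached within the first $\ell$ missing edges) and a clustered node $c_2$'s cluster (reached within the last $\ell$ missing edges), with $c_1,c_2$ lying on many clusters in between so that with high probability at least one of their centers ended up in the set $A$.

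The heart of the argument is the middle case. Let $x$ be a clustered node on $\rho$ within the first $\ell$ missing edges from $u$ and $y$ a clustered node within the last $\ell$ missing edges from $v$, with clusters $C_1,C_2$ and centers $c_1,c_2$. The sub-path from $u$ to $x$ and the sub-path from $y$ to $v$ are in $H$ (all their missing edges were bought), so $\delta_H(u,x)=\delta_G(u,x)$ and $\delta_H(y,v)=\delta_G(y,v)$. I then want the $c_1$-to-$C_2$ path buying to have contributed: this needs (i) both $c_1\in A$ and $c_2\in A$, which I would obtain with high probability by showing $\rho$ must traverse $\Omega(\ell/\log n)$ clusters between $x$ and $y$ (else a sub-path could be shortcut through $H$ via the BFS trees, contradicting that $\rho$ is shortest and has few missing edges there), so among those cluster centers at least one enters $A$ — and by symmetry we can name it, say, $c_1$, and pick $c_2$ analogously, or more carefully argue one center on each side enters $A$; and (ii) the chosen shortest path from $c_1$ to $y\in C_2$ has at most $\frac{2c^2 n\log^2 n}{h^2}$ missing edges, so it is not discarded — this follows because $\delta_G(c_1,y)\le\delta_G(x,y)+1$ and the portion of $\rho$ from $x$ to $y$ had at most $\frac{2c^2 n\log^2 n}{h^2}$ missing edges by the case assumption (or, in the other case, because a BFS tree root near $c_1$ or $c_2$ kicks in). Once the $c_1$–$C_2$ path $P$ of length $\le\delta_G(x,y)+O(1)$ is in $H$, the triangle inequality gives
\begin{align*}
\delta_H(u,v)
&\le \delta_H(u,x)+\delta_H(x,c_1)+\delta_H(c_1,C_2\text{-rep})+\delta_H(\cdot,y)+\delta_H(y,v)\\
&\le \delta_G(u,x)+1+(\delta_G(x,y)+1)+2+\delta_G(y,v)
\end{align*}
where the ``$+2$'' accounts for moving from the representative of $C_2$ to $y$ within the cluster, and one checks the total is $\delta_G(u,v)+4$ after absorbing the $+1$'s correctly (the claimed stretch forces the bookkeeping to give exactly $4$, not $5$, so I would be careful to route through $c_1$ on one side only and stay inside $G$-distances on the other).

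The remaining case — $\rho$ has more than $\frac{2c^2 n\log^2 n}{h^2}$ missing edges in $H$ after clustering — is handled as in Lemma~\ref{lemma: 2s stretch} and Lemma~\ref{lemma: 2p stretch}: such a $\rho$ traverses more than $\frac{c^2 n\log^2 n}{h^2}$ clusters, so with probability $1-O(n^{-c})$ at least one of their centers was selected as a BFS root, and routing through that center and a node of its cluster on $\rho$ yields stretch $+2\le+4$. A union bound over all $|\calP|\le n^2$ pairs, together with the $O(n^{-c})$-type failure probabilities of the BFS-root events and the $A$-membership Chernoff bounds from Lemma~\ref{lemma: 4p size}, gives the overall failure probability $O(n^{-c+2})$.

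I expect the main obstacle to be the careful case analysis in the middle regime: pinning down exactly which cluster centers can be guaranteed (w.h.p.) to lie in $A$, and verifying that the bought $c_1$–$C_2$ path is not too long and not discarded for having too many missing edges, while keeping the additive loss at $+4$ rather than $+5$ or $+6$. In particular, the interplay between the prefix/suffix length $\ell=\frac{n\log^3 n}{h^{5/2}}$, the missing-edge budget $\frac{2c^2 n\log^2 n}{h^2}$, and the sampling rate $\frac{16c\log n}{\ell}$ into $A$ has to line up so that the ``many clusters in the middle'' claim produces the needed concentration — that is the delicate quantitative point, and it is where I would spend the most care.
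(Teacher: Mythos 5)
Your high-level plan---prefix-suffix buying handles the ends, sampling into $A$ gives centers $c_1,c_2$ whose clusters touch the prefix and suffix, and a bought $c_1$-to-$C_2$ path or a BFS root closes the middle gap to within $+4$---matches the paper's argument. However, your step (ii) contains a genuine gap. You claim the algorithm's chosen shortest path from $c_1$ to a node of $C_2$ has at most $\frac{2c^2 n\log^2 n}{h^2}$ missing edges ``because $\delta_G(c_1,y)\le\delta_G(x,y)+1$ and the portion of $\rho$ from $x$ to $y$ had at most $\frac{2c^2 n\log^2 n}{h^2}$ missing edges.'' That inference is invalid: the missing-edge count is a property of a particular path, and the $c_1$-to-$C_2$ shortest path $\sigma$ the algorithm considers need not resemble the $x$-to-$y$ sub-path of $\rho$ at all, so the latter having few missing edges tells you nothing about $\sigma$. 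The paper instead case-splits directly on $\sigma$: if $\sigma$ has more than $\frac{2c^2 n\log^2 n}{h^2}$ missing edges, then $\sigma$ traverses $\ge \frac{c^2 n\log^2 n}{h^2}$ clusters and, w.h.p.\ as in Lemma~\ref{lemma: 2s stretch}, one of them has its center $c_i$ chosen as a BFS root, and routing $u'\to c_1\to c_i\to v'$ yields $\delta_H(u',v')\le\delta_G(u',v')+4$; otherwise $\sigma$ is eligible in the path-buying phase and some path from $c_1$ into $C_2$ of length $\le\delta_G(c_1,v')\le\delta_G(u',v')+1$ is in $H$, after which routing $u\to u'\to c_1\to x\to v'\to v$ gives $+4$. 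Your parenthetical ``or, in the other case, because a BFS tree root near $c_1$ or $c_2$ kicks in'' shows you sense the right dichotomy, but it must be made on $\sigma$, not inherited from $\rho$.

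A secondary imprecision concerns membership in $A$: you argue ``$\rho$ must traverse $\Omega(\ell/\log n)$ clusters between $x$ and $y$,'' but what is needed (and what the sampling rate $\frac{16c\log n}{\ell}$ is tuned for) is that the prefix of $\rho$ containing its first $\ell$ missing edges is incident on at least $\ell/2$ clusters, and likewise the suffix; then a Chernoff bound shows, w.h.p.\ over all pairs, that $A$ contains at least one prefix-cluster center and at least one suffix-cluster center, which is what lets you pick $c_1$ and $c_2$. The easy case is also cleaner than your phrasing: the right threshold is $2\ell$ missing edges (in which case prefix-suffix buying adds all of $\rho$ and the stretch is $0$), not the $\frac{2c^2 n\log^2 n}{h^2}$ budget, which only governs $\sigma$.
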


\begin{proof}
Let $(u,v)\in \calP$ be an arbitrary pair of nodes,
and let $\rho$ be an arbitrary shortest path
from $u$ to $v$.
If $\rho$ has at most $2\ell$ missing edges in $H$
after the clustering phase,
it is added to $H$ in the prefix-suffix buying phase
and $\delta_G(u,v) = \delta_H(u,v)$.

Otherwise, the prefix of $\rho$ with $\ell$ missing edges
is incident on at least $\ell/2$ clusters.
Each cluster center is added to $A$ independently
w.p.\ $\frac{16 c\log n}{\ell}$
so the expected number of clusters in $A$ which are also
incident on the prefix is $8c\log n$,
and a Chernoff bound implies that the probability
that less than $4c\log n$ of the centers of these
clusters are chosen to $A$ is at most
$O(\exp(-c\log n)) = O\left(n^{-c}\right)$.
The same argument shows that the suffix of $\rho$
is incident on a cluster in $A$,
and a union bound implies that all prefixes and suffixes
are incident on clusters in $A$ w.p.\ at least
$1- O\left(n^{-c+2}\right)$.

Let $c_1$ be a center of a cluster in $A$ which is
incident on the prefix of $\rho$,
and $c_2$ a center of a cluster incident
on the suffix of $\rho$.
Let $u'$ and $v'$ be nodes in $\rho \cap C_1$ and
$\rho \cap C_2$ respectively,
and let $\sigma$ be a path between $c_1$ and $v'$ in $G$.

If the number of edges of $\sigma$ missing in $H$
after the clustering phase is more than
$\frac{2c^2 n\log^2 n}{h^2}$,
then $\sigma$ is incident on at least
$\frac{c^2 n\log^2 n}{h^2}$ clusters.
In this case, a cluster incident on $\sigma$ is a source
of a BFS tree w.p. at least $1-O\left(n^{-c}\right)$,
as seen in the proof of Lemma~\ref{lemma: 2s stretch}.
Let $C_i$ be such a cluster,
then after adding the BFS trees it holds that
$\delta_H(u',v') \leq 1+ \delta_H(c_1,v')
\leq  1+ \delta_H(c_1,c_i) + \delta_H(c_i,v')
\leq \delta_G(c_1, v') +3
\leq \delta_G(u',v') +4$,
which implies
$\delta_H(u,v) \leq
\delta_H(u,u')+ \delta_H(u',v')+ \delta_H(v',v)
\leq
\delta_G(u,u')+ \delta_G(u',v')+ 4+  \delta_G(v',v)
=\delta_G(u,v)+4$.

If $\sigma$ has less than $\frac{2c^2 n\log^2 n}{h^2}$
missing edges
then a path between $c_1$ and some $x\in C_2$
is added to $H$ in the path buying phase,
satisfying $\delta_H(c_1,x) \leq \delta_G(c_1,v')
\leq \delta_G(u',v')+1$.
Hence,
$\delta_H(u,v) \leq
\delta_H(u,u')+ \delta_H(u',c_1)+ \delta_H(c_1,x)+
        \delta_H(x,v') + \delta_H(v',v)
\leq \delta_G(u,u')+ 1+ \delta_G(u',v')+1+ 2+ \delta_G(v',v) =\delta_G(u,v)+4$,
as required.
\end{proof}

\subsubsection{Implementing \fourp}
\begin{lemma}
\label{lemma: 4p complexity}
\fourp{} can be implemented in
$O\left(\tau(\calP) + D\right)$ rounds in the \cgst{} model
w.p. at least $1-o\left(n^{-c}\right)$.
\end{lemma}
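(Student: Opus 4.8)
The plan is to follow the same blueprint as the implementation lemmas for \twos{} and \twop{}, handling each phase of \fourp{} in turn and showing that every phase costs $O(\tau(\calP)+D)$ rounds, so that the total is $O(\tau(\calP)+D)$ as well. First I would note that, exactly as in the proof of Lemma~\ref{lemma: 2s complexity}, a preprocessing step gathers the global parameters $n$, $\size{\calP}$, $\tau(\calP)$ (hence $h$ and $\ell$) along a BFS tree rooted at the node of minimal identifier and broadcasts them back, in $O(D)$ rounds. The clustering and BFS phases are then run verbatim as in \twos{}: clustering is $O(1)$ rounds, and the BFS phase runs $O(h)$ parallel BFS searches via the Holzer--Wattenhofer algorithm in $O(h+D)$ rounds, with the edge-marking modification already described in Lemma~\ref{lemma: 2s complexity}. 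Under the assumption $\size{\calP}\geq \log^4 n$ from the discussion after Theorem~\ref{thm: 4p}, we have $h=\size{\calP}^{2/7}\log^{6/7}n = O(\tau(\calP))$ whenever $h=O(\tau(\calP))$; the complementary regime $h=\Omega(\tau(\calP))$ is handled, as in Lemma~\ref{lemma: 2s complexity}, by falling back to the union of BFS trees from all nodes appearing in $\calP$, which exactly preserves the relevant distances, uses $O(n\tau(\calP))=O(nh)$ edges, and runs in $O(\tau(\calP)+D)$ rounds.

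For the prefix-suffix buying phase I would reuse the distance-and-missing-edge counting machinery from Lemma~\ref{lemma: 2p complexity}: run a BFS search in $G$ from each of the $\tau(\calP)$ nodes appearing in $\calP$, each augmented with a counter of missing edges along the tree path; this is $O(\tau(\calP)+D)$ rounds. For a pair $(u,v)$, the node $v$ now knows the number of missing edges on the shortest path to $u$ and, more importantly, running the BFS backwards, the first $\ell$ and last $\ell$ missing edges on $\rho$ can be identified and marked "buy" as the messages propagate along the tree — each node on the path locally knows its ordinal position in the missing-edge sequence from the forwarded counter, so it can decide whether it lies among the first $\ell$ or last $\ell$ missing edges. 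This again costs $O(\tau(\calP)+D)$ rounds. The choosing-cluster-centers step needs no communication.

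The remaining phase is path buying over pairs in $A\times A$, and this is where the implementation differs from the earlier algorithms: here the sources are the centers in $A$ rather than a fixed externally given set, and by Lemma~\ref{lemma: 4p size} we have $\size{A}=O(\size{\calP}^{3/7}\log^{2/7}n)$ w.h.p. I would run a BFS from each center in $A$ (with a missing-edge counter), in $O(\size{A}+D)$ rounds; then each clustered node $x$ reports to its own cluster center, for every source $c_1\in A$, the distance and missing-edge count of the $c_1$-to-$x$ path, which takes $O(\size{A})$ rounds (a node sends $\size{A}$ values to its center). Each center in $A$ then locally selects the shortest admissible path to each cluster it is paired with, and the BFS searches are executed backwards with "buy" bits to install the chosen paths, again in $O(\size{A}+D)$ rounds. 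Since $\size{A}=O(\size{\calP}^{3/7}\log^{2/7}n)$ and one checks that $\size{\calP}^{3/7}\log^{2/7}n = O(h) = O(\tau(\calP))$ (using $\tau(\calP)\geq h$ in the relevant regime, together with $\size{\calP}\le\tau(\calP)^2$ — if this inequality fails we are in the fallback regime above), this phase is also $O(\tau(\calP)+D)$ rounds.

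The main obstacle I expect is the bookkeeping needed to confirm that $\size{A}$ and $h$ are both $O(\tau(\calP))$ in the regime where we actually run \fourp{} rather than the BFS-union fallback; this requires carefully tracking the relation between $\size{\calP}$ and $\tau(\calP)$ (namely $\size{\calP}\le \tau(\calP)^2$ and $\tau(\calP)\le 2\size{\calP}$) and splitting into cases just as in the proof of Lemma~\ref{lemma: 2s complexity}. The other technical point is verifying that the backward-BFS "buy"-bit mechanism correctly handles the prefix/suffix selection — in particular that a node can determine its missing-edge ordinal locally from the counter value carried by the BFS message — but this is a routine extension of the counter technique already used for \twos{} and \twop{}. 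Collecting the per-phase bounds and taking a union bound over the (polynomially many) failure events gives success probability $1-o(n^{-c})$, completing the proof.
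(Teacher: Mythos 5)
Your proof follows the same structure as the paper's: reuse the clustering/BFS implementation from \twos{}, implement prefix-suffix buying by a backward BFS carrying a missing-edge counter from each node appearing in $\calP$, implement $A\times A$ path buying with parallel BFS from the centers in $A$ plus reporting to cluster centers, and then close with bookkeeping that $h$, $\size{A}$ and $\tau(\calP)$ give $O(\tau(\calP)+D)$. The one concrete flaw is in that bookkeeping: you claim $\size{\calP}^{3/7}\log^{2/7}n = O(h)$, but with $h = \size{\calP}^{2/7}\log^{6/7}n$ the ratio is $\size{\calP}^{1/7}/\log^{4/7}n$, which is $\Omega(1)$ under the standing assumption $\size{\calP}\geq\log^4 n$ --- so in fact $h = O(\size{A})$, the reverse of what you wrote. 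The final conclusion $\size{A}=O(\tau(\calP))$ is nonetheless true, but the correct chain is: $\size{\calP}\geq\log^4 n$ gives $\log n\leq\size{\calP}^{1/4}$, hence $\size{A}=O(\size{\calP}^{3/7}\cdot\size{\calP}^{1/14})=O(\size{\calP}^{1/2})$, and also $h=O(\size{\calP}^{2/7}\cdot\size{\calP}^{3/14})=O(\size{\calP}^{1/2})$; since $\tau(\calP)=\Omega(\size{\calP}^{1/2})$, both are $O(\tau(\calP))$.

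A second, smaller point: the fallback for $h=\Omega(\tau(\calP))$ that you borrow from Lemma~\ref{lemma: 2s complexity} never triggers here. As the computation above shows, once $\size{\calP}\geq\log^4 n$ (which is the regime in which \fourp{} is run at all, per the discussion after Theorem~\ref{thm: 4p}) we have $h=O(\tau(\calP))$ unconditionally. The paper instead handles small $\size{\calP}$ by switching to the $(+2)$-pairwise construction of Theorem~\ref{thm: 2p} when $\size{\calP}<\log^4 n$. Your BFS-union fallback would be correct if it were needed, but it is aimed at a case the parameter constraints already exclude, and stating the tautology ``$h=O(\tau(\calP))$ whenever $h=O(\tau(\calP))$'' suggests you had not worked out that the case distinction is vacuous.
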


\begin{proof}
We can implement the clustering and path buying phases in
$O(h+D)$ rounds
with success probability~$1-o\left(n^{-c}\right)$,
as seen in the proof of Lemma~\ref{lemma: 2s complexity}.

The prefix-suffix buying phase is implemented as follows:
run a BFS from each $u$ appearing in $\calP$,
counting missing edges on each path.
Roll back the BFS, and when passing a node $v$
such that $(u,v)\in\calP$,
it sends a ``buy-suffix'' message with a counter
initiated to $\ell$;
when a node receives such a counter
it adds the edge to the parent to $H$ and
decreases the counter by $1$,
unless the edge is already on $H$;
if the counter reaches another node in $v'$
satisfying $(u,v')\in\calP$,
it is set to $\ell$ again.
When the counter is $0$, no more edges are bought
but the counter is passed up the tree,
until it arrives at a node whose count of missing edges
in the original BFS was $\ell$.
This node replaces it with a simple ``buy'' message,
adds the edge to its parent to $H$ and sends it to the parent;
each receiver of the ``buy'' message does the same,
all the way to the tree root.
This phase takes $O(\tau(\calP) +D)$ rounds.

The choice of $A$ is simple, requiring no communication.

The path buying phase is implemented similarly to
its implementation in \twos{},
in the proof of Lemma~\ref{lemma: 2s complexity}.
Measuring distances and counting missing edges
on a path from each $c_1\in A$ to each other node
is done by running a BFS from each cluster center with
the appropriate counters;
each clustered node reports its cluster's center
the above parameters in $O(\size{A})$ rounds;
each $c_2\in A$ then chooses which paths to buy,
and reports the appropriate node in its cluster;
running the BFS searches backwards,
each node may send ``buy'' messages up the trees,
as decided by $c_2$.
This phase takes $O(\size{A} +D)$ rounds.

The proof of Lemma~\ref{lemma: 4p size} implies
$\size{A}=O\left(\size{\calP}^{3/7}\log^{2/7}n\right)$
w.p. at least $1-O\left(n^{-c}\right)$.
This, together with the assumption
$\size{\calP}\geq \log^4(n)$ and with the choice of $h$,
imply $O\left(h+\size{A}+\tau(\calP) +D\right)
=O\left(\tau(\calP) +D\right)$;
hence, the above implementation takes
$O\left(\tau(\calP) +D\right)$ rounds with the same probability.
\end{proof}

\subsection{Subsetwise Spanners}
Recall that a $(+\beta)$-subsetwise spanner
for a set $S\subseteq V$ is a subgraph $H$ of $G$ satisfying
$\delta_H(u,v) \leq \delta_G(u,v) +2$
for all pairs $(u,v) \in S\times S$.
We show how to build such
spanners for $\beta=2$ and $\beta=4$,
with $O(n\size{S}^{2/3} \log^{2/3}n)$ edges
and $O(n\size{S}^{4/7} \log^{6/7}n)$
edges respectively,
in $O(\size{S}+D)$ rounds w.h.p.

The algorithms follow immediately from \twop{}
and \fourp{}:
set $\calP = S\times S$ and run \twop{} or \fourp{}.
The number of edges follows from the fact
$\size{\calP} =\size{S}^2$,
and the running time from
$\tau(\calP) = \size{S}$.

Finally, note that in the case
$\size{S}>n^{3/5}\log^{1/5}n$,
\twos{} gives a sparser spanner than \twop{}
in the same running time,
and with a stretch of $+2$
for all $S\times V$ pairs.
Similarly, when $\size{S}>n^{7/10}\log^{-1/10}n$,
\fourap{} gives a sparser spanner than \fourp{}
in a shorter running time,
with a stretch of $+4$ on all pairs of nodes
in the graph.

\begin{corollary}
\label{cor: 24sub}
Given a graph $G$ on $n$ nodes
and a set $S$ of nodes,
a $(+2)$-subsetwise spanner
with $O\left(\min \set{
n\size{S}^{2/3} \log^{2/3}n,
n^{5/4} \size{S}^{1/4} \log^{3/4} n
}\right)$ edges
and a $(+4)$-subsetwise spanner
with $O\left(\min \set{
n\size{S}^{4/7} \log^{6/7}n,
n^{7/5} \log^{4/5} n
}\right)$ edges,
can both be constructed in at most  $O(\size{S}+D)$ rounds
in the \cgst{} model w.h.p.
\end{corollary}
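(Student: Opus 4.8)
The plan is to observe that a $(+\beta)$-subsetwise spanner for $S$ is exactly a $(+\beta)$-pairwise spanner for the pair set $\calP = S \times S$, and, moreover, that any $(+\beta)$-sourcewise spanner for $S$ (resp.\ any $(+\beta)$-all-pairs spanner) is a fortiori a $(+\beta)$-subsetwise spanner for $S$, since $S \times S \subseteq S \times V \subseteq V \times V$. Thus, for each $\beta\in\set{2,4}$, there are two candidate constructions to draw from, and the $\min$ in the statement corresponds to running whichever of the two yields fewer edges.

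First I would instantiate the pairwise constructions. Running \twop{} on $\calP = S\times S$ and invoking Theorem~\ref{thm: 2p}: since $\size{\calP} = \size{S}^2$ and $\tau(\calP) = \size{S}$, this produces a $(+2)$-pairwise spanner --- hence a $(+2)$-subsetwise spanner --- with $O\!\left(n\size{\calP}^{1/3}\log^{2/3}n\right) = O\!\left(n\size{S}^{2/3}\log^{2/3}n\right)$ edges in $O(\tau(\calP)+D) = O(\size{S}+D)$ rounds, w.h.p. Identically, \fourp{} on $\calP = S\times S$ with Theorem~\ref{thm: 4p} gives a $(+4)$-subsetwise spanner with $O\!\left(n\size{\calP}^{2/7}\log^{6/7}n\right) = O\!\left(n\size{S}^{4/7}\log^{6/7}n\right)$ edges in $O(\size{S}+D)$ rounds, w.h.p.

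Next I would bring in the alternative constructions. Theorem~\ref{thm: 2s} directly yields a $(+2)$-sourcewise --- hence $(+2)$-subsetwise --- spanner with $O\!\left(n^{5/4}\size{S}^{1/4}\log^{3/4}n\right)$ edges in $O(\size{S}+D)$ rounds, and Theorem~\ref{thm: 4ap} yields a $(+4)$-all-pairs --- hence $(+4)$-subsetwise --- spanner with $O\!\left(n^{7/5}\log^{4/5}n\right)$ edges, but in $O(n^{3/5}\log^{1/5}n+D)$ rounds, which is not a priori $O(\size{S}+D)$. To deliver the $\min$ bound within the claimed round complexity, I would first have all nodes learn $n$ and $\size{S}$ by gathering these values along a BFS tree in $O(D)$ rounds, exactly as in the preprocessing step of \twos{}; the nodes then jointly run whichever of the two candidate algorithms has the smaller edge bound. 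The step to verify --- and the only subtle one --- is that whenever the \fourap{} bound $n^{7/5}\log^{4/5}n$ is the smaller of the two $(+4)$ bounds, i.e.\ $n^{7/5}\log^{4/5}n = O\!\left(n\size{S}^{4/7}\log^{6/7}n\right)$, one necessarily has $\size{S} = \Omega\!\left(n^{7/10}\log^{-1/10}n\right)$, whence $n^{3/5}\log^{1/5}n = O(\size{S})$ for all sufficiently large $n$ (since $\log^{3/10}n = o(n^{1/10})$); so in the regime where \fourap{} is actually invoked it, too, runs in $O(\size{S}+D)$ rounds. An analogous (but round-complexity-free) comparison justifies switching to \twos{} only when $\size{S} > n^{3/5}\log^{1/5}n$.

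The remainder of the argument is bookkeeping, so the main --- and essentially only --- obstacle is the running-time check just described: ensuring that the faster-in-edges all-pairs construction is called only in the parameter range where its $O(n^{3/5}\log^{1/5}n+D)$ cost is absorbed into $O(\size{S}+D)$. Taking a union bound over the $O(1)$ many high-probability events from the invoked algorithms then gives both claimed spanners in $O(\size{S}+D)$ rounds, w.h.p.
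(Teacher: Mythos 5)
Your proposal is correct and takes essentially the same route as the paper: reduce subsetwise to pairwise via $\calP = S\times S$, then swap in the sourcewise/all-pairs constructions in the regime where they are sparser. The only thing you spell out more explicitly than the paper is the round-count check for \fourap{} (that $n^{7/5}\log^{4/5}n \le n|S|^{4/7}\log^{6/7}n$ forces $|S| = \Omega(n^{7/10}\log^{-1/10}n) = \Omega(n^{3/5}\log^{1/5}n)$, so the $O(n^{3/5}\log^{1/5}n + D)$ running time of \fourap{} is absorbed into $O(|S|+D)$), which the paper treats as an immediate observation; your calculation matches the paper's stated thresholds $|S| > n^{3/5}\log^{1/5}n$ and $|S| > n^{7/10}\log^{-1/10}n$ exactly.
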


\subsection{A $(+8)$-All-Pairs Spanner}

Recall that a subgraph $H$ of $G$ is a
$(+8)$-\emph{all-pairs} spanner if
$\delta_H(u,v) \leq \delta_G(u,v) +8$
for all pairs $(u,v)\in V\times V$.
We present an algorithm, based on \fourp{},
which builds a $(+8)$-all-pairs spanner
and has the properties guaranteed by the next theorem.

\begin{theorem}
\label{thm: 8ap}
Given a graph $G$ on $n$ nodes,
a $(+8)$-all-pairs spanner
with $O\left(n^{15/11} \log^{10/11}n\right)$ edges
can be constructed in $O(n^{7/11} \log^{1/11} n + D)$  rounds
in the \cgst{} model w.h.p.
\end{theorem}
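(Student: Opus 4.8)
The plan is to mimic the reduction from \fourap{} to \twos{}: build a $(+8)$-all-pairs spanner by running \fourp{} with the pair set $\calP$ taken to be $\calC \times \calC$, where $\calC$ is the set of cluster centers produced by the clustering phase. Concretely, \eightap{} would first run the same clustering phase as in \twos{}, then invoke the remaining phases of \fourp{} (BFS, prefix-suffix buying, choosing $A$, path buying) with $\calP = \calC\times\calC$. Since the clustering phase leaves every high-degree node clustered w.h.p., an arbitrary shortest path $\rho$ between $u,v\in V$ meets a clustered node as soon as it is not entirely in $H$; letting $x$ be the first clustered node with center $c_x$ and $y$ the last clustered node with center $c_y$, the subpaths $u\leadsto x$ and $y\leadsto v$ lie entirely in $H$. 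The $(+4)$-pairwise guarantee applied to the pair $(c_x,c_y)\in\calC\times\calC$ gives $\delta_H(c_x,c_y)\le \delta_G(c_x,c_y)+4$, and two applications of the triangle inequality (one for $c_x$–$x$ and one for $c_y$–$y$, each costing $+1$, plus $\delta_G(c_x,c_y)\le\delta_G(x,y)+2$) yield $\delta_H(u,v)\le\delta_G(u,v)+8$. This is the stretch lemma.

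For the size bound I would reuse Lemma~\ref{lemma: 4p size} with $\size{\calP}=\size{\calC}^2$ and $\tau(\calP)=\size{\calC}$. By the analysis in Lemma~\ref{lemma: 2s size}, $\size{\calC}\le \frac{4cn\log n}{h_0}$ w.h.p., where $h_0$ is the clustering parameter; here the clustering parameter and the internal \fourp{} parameter $h$ must be chosen consistently. Setting $h_0$ so that $\size{\calC}=\tilde\Theta(n/h_0)$ and then $h = \size{\calP}^{2/7}\log^{6/7}n = \tilde\Theta((n/h_0)^{4/7})$, the total edge count is $O(nh)$ from clustering/BFS plus $O(n\size{\calP}^{2/7}\log^{6/7}n)=O(n\cdot (n/h_0)^{4/7}\cdot\mathrm{polylog})$ from the \fourp{} part. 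Balancing these two terms (roughly $h_0 \sim (n/h_0)^{4/7}$, i.e.\ $h_0\sim n^{4/11}$ up to polylogs) gives $O(n^{15/11}\log^{10/11}n)$ edges; this is a routine optimization of the single free exponent, which I would carry out explicitly to pin down the log powers.

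For the running time, I would reuse Lemma~\ref{lemma: 4p complexity}: running \fourp{} with $\calP=\calC\times\calC$ takes $O(\tau(\calP)+D)=O(\size{\calC}+D)$ rounds, and since $\size{\calC}=\tilde\Theta(n/h_0)=\tilde\Theta(n^{7/11})$ with the balanced choice of $h_0$, we get $O(n^{7/11}\log^{1/11}n + D)$ rounds. One subtlety is that $\calC$ is not known in advance, but each cluster center learns it is a center locally, so "each $c\in\calC$ runs a BFS" is well defined, and the Holzer–Wattenhofer multi-BFS routine handles $\size{\calC}$ simultaneous sources in $O(\size{\calC}+D)$ rounds exactly as before; similarly, the subset $A\subseteq\calC$ is chosen without communication. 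The main obstacle, as with \fourap{}, is bookkeeping: I must verify that all the probabilistic assumptions inside \fourp{}'s analysis (notably $c\log n \le h \le n^{3/4}$ and $\size{\calP}\ge\log^4 n$) hold for the derived parameters, and that replacing the abstract pair set $\calP$ by the randomly generated set $\calC\times\calC$ does not break the independence used in the BFS-root and $A$-selection Chernoff bounds — here I would note that the clustering coins, the BFS-root coins, and the $A$-selection coins are mutually independent, so conditioning on the (high-probability) event that $\size{\calC}$ is within its expected range leaves the later phases' analyses intact.
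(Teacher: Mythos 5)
Your proposal is correct and follows the same route as the paper: run the clustering of \twos{}, then feed $\calP=\calC\times\calC$ into the remaining phases of \fourp{}, prove the $(+8)$ stretch by decomposing a shortest path at its first and last clustered nodes and invoking the $(+4)$-pairwise guarantee on the corresponding cluster centers, and obtain the size and round bounds by substituting $\size{\calP}=\size{\calC}^2=\tilde O((n/h)^2)$ and $\tau(\calP)=\size{\calC}=\tilde O(n/h)$ into Lemmas~\ref{lemma: 4p size} and~\ref{lemma: 4p complexity}. Your explicit remark that the outer clustering parameter and the internal \fourp{} parameter must agree is exactly what the paper's fixed choice $h=n^{4/11}\log^{10/11}n$ achieves (one checks that $h=\size{\calP}^{2/7}\log^{6/7}n$ with $\size{\calP}=\Theta(n^2\log^2 n/h^2)$ is a fixed point), and your note about the mutual independence of the clustering, BFS-root, and $A$-selection coins is a correct and worthwhile point that the paper leaves implicit.
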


Lemmas~\ref{lemma: 8ap size} and \ref{lemma: 8ap stretch} provide the required size and stretch of \eightap{} below, while Lemma~\ref{lemma: 8ap complexity} gives the running time f its distributed implementation, proving Theorem~\ref{thm: 8ap}.

\subsubsection{\eightap}
Input: $G=(V,E)$\\
Output: a subgraph $H$\\
Initialization:
$n=|V|$, $h= n^{4/11} \log^{10/11}n$, and $H=(V,\emptyset)$

\paragraph{Clustering}
Run clustering as in \fourp{}.

\paragraph{Rest of \fourp{}}
Run the rest of \fourp{}
on all pairs of cluster centers,
i.e.\ $\calP = \calC\times\calC$.

\subsubsection{Analysis of \eightap}

\begin{lemma}
\label{lemma: 8ap size}
Given a graph $G=(V,E)$ with $|V|=n$,
\eightap{} outputs a graph $H=(V,E')$,
$E'\subseteq E$,
with $|E'| = O\left(n^{15/11} \log^{10/11}n\right)$ edges
w.p. at least $1-O(n^{-c+2})$.
\end{lemma}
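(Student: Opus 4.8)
The plan is to reduce the size analysis of \eightap{} to that of \fourp{} (Lemma~\ref{lemma: 4p size}), exactly as \fourap{} was reduced to \twos{}. The algorithm runs the clustering phase with parameter $h=n^{4/11}\log^{10/11}n$, and then runs the remainder of \fourp{} with the pair set $\calP=\calC\times\calC$, where $\calC$ is the set of cluster centers produced. So I would first bound $\size{\calC}$: each of the $n$ nodes becomes a cluster center independently with probability $\frac{c\log n}{h}$, so $\E[\size{\calC}]=\frac{cn\log n}{h}$, and a Chernoff bound gives $\size{\calC}\le \frac{4cn\log n}{h}$ with probability $1-o(n^{-c})$ (this is the same computation already carried out in the proof of Lemma~\ref{lemma: 2s size}, and one needs $h<n^{3/4}$, which holds here). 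Hence $\size{\calP}=\size{\calC}^2 = O\!\left(\frac{n^2\log^2 n}{h^2}\right)$ with high probability.

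Next I would feed this into Lemma~\ref{lemma: 4p size}. That lemma requires $c\log n < h < n^{3/4}$; both hold for $h=n^{4/11}\log^{10/11}n$ (and one should note $\size{\calP}=\size{\calC}^2 \ge \log^4 n$ w.h.p., so the regime assumption behind \fourp{} is also satisfied — indeed $\size{\calC}\to\infty$). Lemma~\ref{lemma: 4p size} then says the number of edges output is $O\!\left(n\,\size{\calP}^{2/7}\log^{6/7}n\right)$ with probability $1-O(n^{-c+2})$. Substituting $\size{\calP}=O\!\left(\frac{n^2\log^2 n}{h^2}\right)$ gives
\[
O\!\left(n\cdot \left(\frac{n^2\log^2 n}{h^2}\right)^{2/7}\log^{6/7}n\right)
= O\!\left(\frac{n^{11/7}\log^{10/7}n}{h^{4/7}}\right).
\]
Now plug in $h=n^{4/11}\log^{10/11}n$: the $n$-exponent becomes $\frac{11}{7}-\frac{4}{7}\cdot\frac{4}{11}=\frac{11}{7}-\frac{16}{77}=\frac{121-16}{77}=\frac{105}{77}=\frac{15}{11}$, and the $\log$-exponent becomes $\frac{10}{7}-\frac{4}{7}\cdot\frac{10}{11}=\frac{10}{7}-\frac{40}{77}=\frac{110-40}{77}=\frac{70}{77}=\frac{10}{11}$, yielding $O\!\left(n^{15/11}\log^{10/11}n\right)$ as claimed. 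Finally I would union-bound the two failure events (the $\size{\calC}$ concentration bound and the conclusion of Lemma~\ref{lemma: 4p size}) to get overall success probability $1-O(n^{-c+2})$.

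The one genuinely substantive point — as opposed to arithmetic bookkeeping — is checking that the hypotheses of Lemma~\ref{lemma: 4p size} are actually met when its pair set is the \emph{random} set $\calC\times\calC$ rather than a fixed $\calP$ given in advance. This is a mild conditioning issue: one conditions on the high-probability event $\size{\calC}\le \frac{4cn\log n}{h}$ and on $\size{\calC}$ being large enough, and then applies Lemma~\ref{lemma: 4p size} treating $\calC$ as fixed (the clustering randomness is independent of the BFS-root, prefix-suffix, and $A$-selection randomness used in the rest of \fourp{}, so nothing breaks). I do not expect any real obstacle here — this is exactly the same move used for \fourap{} in Lemma~\ref{lemma: 4ap size} — but it is the step worth stating carefully rather than waving through.
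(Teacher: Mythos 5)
Your proof is correct and takes essentially the same route as the paper's: bound $\size{\calC}$ via Chernoff, set $\size{\calP}=\size{\calC}^2$, and substitute into Lemma~\ref{lemma: 4p size}; the paper merely states the substitution without unwinding the exponents as you do, and also doesn't raise the (genuine, if mild) conditioning point you flag about applying Lemma~\ref{lemma: 4p size} to the random pair set $\calC\times\calC$.
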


\begin{proof}
We follow the outline of previous proofs.
By the proof of Lemma~\ref{lemma: 2s size},
the BFS phase adds $O(nh)$ edges, and
$\size{\calC} \leq \frac{4cn\log n}{h}$,
w.p. at least $1-O\left(n^{-c+2} \right)$.
In \eightap{}, $\size{\calP} = \size{\calC}^2$,
and substituting
$\size{\calP} = O\left(\frac{n^2\log^2 n}{h^2}\right)$
and $h= n^{4/11} \log^{10/11}n$
in Lemma~\ref{lemma: 4p size}
gives that the graph created by \eightap{}
contains $O(n^{15/11} \log^{10/11}n)$ edges
w.p. at least $1-O\left(n^{-c+2} \right)$.
\end{proof}

\begin{lemma}
\label{lemma: 8ap stretch}
Given a graph $G=(V,E)$ with $|V|=n$,
\eightap{} outputs a graph $H$
satisfying $\delta_H(u,v) \leq \delta_G(u,v) +8$
for each pair of vertices $(u,v)\in V\times V$
w.p. at least $1-O(n^{-c})$.
\end{lemma}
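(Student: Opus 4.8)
The plan is to bootstrap the $(+4)$-pairwise guarantee that \fourp{} supplies on the pair set $\calP=\calC\times\calC$ into a $(+8)$ bound for \emph{all} pairs, in exact analogy with the way the $(+4)$-all-pairs stretch of \fourap{} was obtained from the $(+2)$-sourcewise guarantee of \twos{} (Lemma~\ref{lemma: 4ap stretch}). First I would fix an arbitrary pair $(u,v)\in V\times V$ together with a shortest path $\rho$ between $u$ and $v$ in $G$. If $\rho$ is incident on no clustered node, then all of its edges survive in $H$, since each unclustered node contributes all its incident edges in the clustering phase, and hence $\delta_H(u,v)=\delta_G(u,v)$, with nothing further to prove.

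Otherwise I would let $x$ be the first clustered node on $\rho$ and $y$ the last one when $\rho$ is traversed from $u$ to $v$, and let $c_i$ and $c_j$ be their respective cluster centers; here the degenerate situations $x=u$, $y=v$, $x=y$, or $c_i=c_j$ are all allowed and only make the final estimate smaller. The prefix of $\rho$ up to $x$ consists of unclustered nodes apart from $x$ itself, so it lies entirely in $H$, giving $\delta_H(u,x)=\delta_G(u,x)$, and symmetrically $\delta_H(y,v)=\delta_G(y,v)$; the clustering edges $(x,c_i)$ and $(y,c_j)$ belong to $H$, so $\delta_H(x,c_i)\le 1$ and $\delta_H(y,c_j)\le 1$; and since $(c_i,c_j)\in\calC\times\calC=\calP$, Lemma~\ref{lemma: 4p stretch} yields $\delta_H(c_i,c_j)\le\delta_G(c_i,c_j)+4$. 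Combining $\delta_G(c_i,c_j)\le\delta_G(c_i,x)+\delta_G(x,y)+\delta_G(y,c_j)=\delta_G(x,y)+2$ with the fact that $x$ and $y$ lie on the shortest path $\rho$, the triangle inequality chains these five segments into $\delta_H(u,v)\le\delta_G(u,x)+1+\bigl(\delta_G(x,y)+2+4\bigr)+1+\delta_G(y,v)=\delta_G(u,v)+8$.

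For the probability bound I would note that the clustering phase carries no failure event relevant to stretch — every unclustered node keeps all its edges and every clustered node keeps the edge to its center — so it is enough to condition on the clustering outcome (which fixes $\calP=\calC\times\calC$) and then invoke Lemma~\ref{lemma: 4p stretch}: the event that $\delta_H(c,c')\le\delta_G(c,c')+4$ holds simultaneously for all $(c,c')\in\calC\times\calC$ has probability at least $1-O(n^{-c+2})$, and on that event the displayed estimate above holds for every $(u,v)\in V\times V$. Rescaling the constant $c$ exactly as in the preceding subsections then gives the stated $1-O(n^{-c})$.

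I do not anticipate a genuine obstacle. The arithmetic is routine; the only things needing care are the degenerate configurations noted above (which only improve the bound) and the single conceptual observation that a detour through the two cluster centers costs exactly $+2$ in $G$, which, added to the $+4$ that \fourp{} already provides, is precisely what produces the final $+8$. The closest thing to a subtlety is that $\calP=\calC\times\calC$ is itself a random set, but this is dispatched, as above, by conditioning on the clustering coin tosses before applying Lemma~\ref{lemma: 4p stretch}.
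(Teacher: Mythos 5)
Your proof is correct and follows essentially the same route as the paper's: split $\rho$ at the first and last clustered nodes $x,y$, use the surviving prefix and suffix, the two clustering edges of cost $1$ each, the $(+4)$ bound from Lemma~\ref{lemma: 4p stretch} on $(c_i,c_j)\in\calC\times\calC$, and $\delta_G(c_i,c_j)\le\delta_G(x,y)+2$ to assemble the $+8$. The only additions you make beyond the paper are the explicit handling of degenerate configurations and the explicit conditioning on the clustering coins before invoking Lemma~\ref{lemma: 4p stretch}; both are sound and, if anything, tidy up details the paper leaves implicit.
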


\begin{proof}
Let $(u,v)\in V\times V$ be an arbitrary pair of nodes,
and set a shortest path $\rho$ in $G$ between them.

If $\rho$ is not incident on any clustered node,
then all its nodes are unclustered
and all its edges are present in $H$.
Otherwise,
let $x$ be the first clustered node on $\rho$,
when traversing it from $u$ to $v$,
and let $y$ the last clustered node on $\rho$.
Let $C_1$ be the cluster containing $x$,
and $C_2$ the cluster containing $y$.
The sub-paths of $\rho$ from $u$ to $x$ and from $y$ to $v$
exist in $H$,
as all the nodes on these sub-path except for $x$ and $y$
are unclustered.
By Lemma~\ref{lemma: 4p stretch},
$\delta_H(c_1, c_2) \leq \delta_G(c_1, c_2) +4$
as $c_1,c_2\in \calC = S$;
moreover, $\delta_G(c_1, c_2) \leq  \delta_G(x,y) +2$
by the triangle inequality.
Finally,
\begin{equation*}
\begin{split}
\delta_H(u,v)
& \leq \delta_H(u,x) + \delta_H(x,c_1) + \delta_H(c_1,c_2) + \delta_H(c_2,y) + \delta_H(y,v)\\
& \leq \delta_G(u,x) + 1 \delta_G(x,y) + 6 + \delta_G(y,v)+1
  =    \delta_G(u,v) +8
\end{split}
\end{equation*}
as desired.
\end{proof}

\subsubsection{Implementing \eightap{}}
Running \eightap{} is
done by executing \fourp{}
with a specific set $\calP$;
thus, their running times are identical,
as stated in the next lemma.

\begin{lemma}
\label{lemma: 8ap complexity}
\eightap{} can be implemented in the \cgst{} model
in $O(n^{7/11} \log^{1/11} n + D)$ rounds
w.p. at least $1-o(n^{-c})$.
\end{lemma}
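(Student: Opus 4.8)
The plan is to observe that \eightap{} is literally \fourp{} run with the pair set $\calP = \calC \times \calC$, where $\calC$ is the set of cluster centers chosen in the (shared) clustering phase, so the running time analysis should reduce to that of Lemma~\ref{lemma: 4p complexity} after substituting the right value of the parameter $\size{\calP}$, exactly as \fourap{} reduced to \twos{} in Lemma~\ref{lemma: 4ap complexity} and \eightap{}'s size bound reduced to \fourp{}'s in Lemma~\ref{lemma: 8ap size}. First I would recall from the proof of Lemma~\ref{lemma: 2s size} that w.p. at least $1-O(n^{-c+2})$ we have $\size{\calC} \le \frac{4cn\log n}{h}$, so that $\size{\calP} = \size{\calC}^2 = O\!\left(\frac{n^2 \log^2 n}{h^2}\right)$ and $\tau(\calP) = \size{\calC} = O\!\left(\frac{n\log n}{h}\right)$. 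I would also note that the clustering phase itself takes only $O(1)$ rounds plus the $O(D)$ preprocessing to disseminate $n$ and $h$, and that the remaining phases of \fourp{} (prefix--suffix buying, choosing $A$, path buying) are exactly the ones implemented in Lemma~\ref{lemma: 4p complexity}, with the caveat that the set $\calP = \calC \times \calC$ is not given to the nodes in advance but is known locally (each center knows it is in $\calC$, hence each pair of centers knows it is a relevant pair), which suffices for running the BFS searches from the $\tau(\calP)$ relevant nodes.

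Next I would plug the bound $\tau(\calP) = O\!\left(\frac{n\log n}{h}\right)$ and $\size{\calP} = O\!\left(\frac{n^2 \log^2 n}{h^2}\right)$ into the conclusion of Lemma~\ref{lemma: 4p complexity}: that lemma gives a running time of $O(\tau(\calP) + D)$ rounds, so with $h = n^{4/11}\log^{10/11} n$ we get $\tau(\calP) = O\!\left(n^{7/11}\log^{1/11} n\right)$, which is the claimed bound. One should double-check that the internal quantity $\size{A} = O\!\left(\size{\calP}^{3/7}\log^{2/7} n\right)$ from Lemma~\ref{lemma: 4p size}, which bounds the running time of the path-buying subphase inside \fourp{}, is also $O\!\left(n^{7/11}\log^{1/11} n + D\right)$ after substitution; indeed $\size{\calP}^{3/7} = O\!\left(n^{6/7}/h^{6/7}\right) = O\!\left(n^{6/7 - 24/77}\right)$ which is a lower power of $n$ than $n^{7/11}$, so this term is absorbed. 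I would also verify that $h$ lies in the admissible range $c\log n < h < n^{3/4}$ required by the clustering analysis (true for $h = n^{4/11}\log^{10/11} n$ once $n$ is large), and that the assumption $\size{\calP} \ge \log^4 n$ used in \fourp{} holds here, i.e. $\frac{n^2\log^2 n}{h^2} = \Theta\!\left(n^{14/11}\log^{2/11} n\right) \ge \log^4 n$ for large $n$.

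The proof is essentially bookkeeping, so the main thing to be careful about — and the one genuine subtlety rather than a routine calculation — is the interaction of the probabilistic events. The bound $\size{\calC} \le \frac{4cn\log n}{h}$ and the bound $\size{A} = O\!\left(\size{\calP}^{3/7}\log^{2/7} n\right)$ each hold only w.h.p., and the latter was stated conditionally in terms of $\size{\calP}$, which is itself a random variable here; I would make sure to condition on the good clustering event first (which fixes $\calP$ up to the stated upper bound) and then apply the Chernoff bound for $\size{A}$, so that a union bound over the $O(1)$-many bad events still gives failure probability $o(n^{-c})$ — matching the claim in the lemma. Beyond that, one closing remark: since \eightap{} takes no input set and the parameters $n, h$ are universal constants of the algorithm, the $O(D)$ preprocessing merely computes $n$ (and hence $h$) at every node, after which the clustering and the inherited \fourp{} phases run as analyzed, completing the proof of Lemma~\ref{lemma: 8ap complexity} and hence, together with Lemmas~\ref{lemma: 8ap size} and~\ref{lemma: 8ap stretch}, of Theorem~\ref{thm: 8ap}.
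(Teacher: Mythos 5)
Your proof is correct and follows essentially the same route as the paper: reduce to Lemma~\ref{lemma: 4p complexity} by observing that $\tau(\calP)=\size{\calC}=O(n\log n/h)$ w.h.p.\ (from the proof of Lemma~\ref{lemma: 2s size}) and substituting $h=n^{4/11}\log^{10/11}n$. The paper's proof is terser and leaves the side conditions ($h$ in the admissible range, $\size{\calP}\ge\log^4 n$, $\size{A}$ being a lower-order term, and the conditioning on the good clustering event) implicit, whereas you verify them explicitly; this is additional diligence, not a different argument.
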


\begin{proof}
The lemma follows from the proof of Lemmas~\ref{lemma: 2s size} and~\ref{lemma: 4p complexity}:
in \eightap{}, $\tau(\calP)= \size{\calC}$,
and by the proof of
Lemma~\ref{lemma: 2s size},
$\size{\calC} =O\left( \frac{n\log n}{h} \right)$
w.p. at least $1-O(n^{-c+2})$.
Substituting this and $h = n^{4/11} \log^{10/11} n$
in Lemma~\ref{lemma: 4p complexity},
we get that the algorithm completes
in $O(n^{7/11} \log^{1/11} n + D)$ rounds
with the desired probability.
\end{proof}

\section{Lower Bounds}
\label{sec:lowerbounds}
In this section we prove lower bounds on the number
of rounds that are needed for constructing
spanners in the \cgst{} model.
All previous lower bounds for the distributed
construction of spanners~\cite{Pettie10}
use an indistinguishability argument:
while many edges should be omitted from the graph
in order to create a sparse spanner,
there are few edges that must not be omitted.
However, in order to distinguish these few edges
from the rest,
some nodes must learn a considerable part of the graph.
In a nutshell, the heart of the proof is that information must travel
a constant portion of the diameter $D$,
and thus the lower bound is $\Omega(D)$.

The lower bounds from \cite{Pettie10} apply also to the LOCAL model,
where the message sizes are unbounded.
Here,
we present the first lower bound
that is specific for the \cgst{} model.
As in previous lower bounds for the CONGEST model,
our proof uses a reduction from a communication complexity problem.
However, previous lower bounds used reductions
either from the equality problem~\cite{PelegR99}
or from set-disjointness, e.g.,%
~\cite{SarmaHKKNPPW12, DruckerKO13, FrischknechtHW12, HolzerP14, Censor-HillelGK14, GhaffariK13}.
These seem unsuitable for our purposes,
and hence we diverge from this approach and define
a new communication complexity problem
we call {\em partial complement}.
We bound the communication complexity
of this problem from below, using information theory.

We first prove a lower bound for the construction of a
$(+2)$-pairwise spanner.
Then, we generalize the bound for the construction of an $(\alpha,\beta)$-pairwise spanner,
for any $\alpha \geq 1, \beta \geq 0$.

\subsection{A Communication Complexity Problem}

Let $m,p$ be two positive integers so that $p \leq m/3$.
The \emph{partial complement} communication problem,
denoted \pc{}, is defined as follows:
\alice{} has a set $x \subseteq \set{1,\ldots, m}$
of size $\size{x} = p$,
and \bob{} has to output a set $y \subseteq \set{1,\ldots, m}$
of size $\size{y} = m/2$
so that $x \cap y = \emptyset$.
Note that the goal of this communication problem
is to compute a relation, not a function.
In this section we prove that
the randomized communication complexity
of the partial complement problem is high
(for formal definitions
in communication complexity see the textbook~\cite{KushilevitzN1996}).

\begin{theorem}
\label{thm:R>p}
If $\pi$ is a $(1/3)$-error randomized protocol
computing \pc{} then the length of $\pi$
is at least $p/100$.
\end{theorem}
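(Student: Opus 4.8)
The plan is to use an information-theoretic argument: fix a hard distribution on \alice{}'s input and show that any correct deterministic protocol (obtained from $\pi$ by fixing the public randomness, via the easy direction of Yao's principle) must reveal $\Omega(p)$ bits of information about $x$, hence the transcript — and therefore the communication — has length $\Omega(p)$. First I would let $x$ be a uniformly random $p$-subset of $\set{1,\ldots,m}$, and let $\Pi = \Pi(x)$ denote the transcript of the (now deterministic) protocol on input $x$ against this distribution. Since messages are binary strings, $|\pi| \ge \ent(\Pi) \ge \I(x;\Pi)$, so it suffices to lower bound the mutual information $\I(x;\Pi)$ by $p/100$ (adjusting the error/constant as needed, absorbing the randomness-fixing step).

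The key step is to argue that from the transcript, \bob{} — who knows $\Pi$ but not $x$ — outputs a fixed set $y = y(\Pi)$ of size $m/2$ with $x \cap y = \emptyset$ with probability $\ge 2/3$. Condition on a transcript $\tau$ that occurs with reasonable probability and on which \bob{} succeeds: then $x$, conditioned on $\Pi = \tau$, is supported on $p$-subsets of the complement $\set{1,\ldots,m}\setminus y(\tau)$, a set of size $m/2$. Thus $\ent(x \mid \Pi = \tau) \le \log \binom{m/2}{p}$, whereas unconditionally $\ent(x) = \log \binom{m}{p}$. Averaging over $\tau$ (and handling the $\le 1/3$ fraction of bad transcripts crudely, e.g.\ bounding the conditional entropy there by $\log\binom{m}{p}$ and paying only a constant factor), we get
\[
\I(x;\Pi) = \ent(x) - \ent(x\mid \Pi) \ \gtrsim\ \log\binom{m}{p} - \log\binom{m/2}{p}.
\]
Finally I would estimate this gap: $\binom{m}{p}\big/\binom{m/2}{p} = \prod_{i=0}^{p-1}\frac{m-i}{m/2-i} \ge 2^{p}$ since each factor is at least $2$ (using $i \le p-1 < m/2$), so the gap is at least $p$ bits, and after accounting for the constant-factor losses from the error probability and the randomness-fixing we land at $p/100$.

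The main obstacle is handling the error probability cleanly: \bob{} is only required to be correct with probability $2/3$, so on a constant fraction of inputs (or transcripts) the output set $y$ may intersect $x$, and on those transcripts the conditional support of $x$ is not confined to a set of size $m/2$. The fix is to be careful about whether we condition on the event ``protocol succeeds'' (a function of $(x, \text{randomness})$) versus conditioning on the transcript value, and to use convexity/averaging so that the ``good'' transcripts — which together carry probability $\ge 2/3 - o(1)$ and on which the entropy drop is the full $\log\binom{m}{p} - \log\binom{m/2}{p}$ — dominate the bound. A clean way is: $\ent(x \mid \Pi, W) \le \log\binom{m/2}{p}$ where $W$ is the success indicator restricted to $W=1$, then $\ent(x\mid\Pi) \le \ent(x\mid\Pi,W) + \ent(W) \le \Pr[W{=}1]\log\binom{m/2}{p} + \Pr[W{=}0]\log\binom{m}{p} + 1$, and plug in $\Pr[W{=}0]\le 1/3$. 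Everything else — the $2^p$ ratio bound and the reduction to deterministic protocols — is routine.
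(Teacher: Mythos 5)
Your proof is correct, and at its core it is the same information-theoretic argument as the paper's: lower-bound $\I(x;\Pi)$ by observing that on the success event the transcript (through \bob's output $y(\Pi)$) confines $x$ to a universe of size $m/2$, which costs $\Omega(p)$ bits of entropy, and then pay only an additive $1$ for conditioning on the success indicator. Where you diverge is in the hard distribution and the entropy accounting. The paper takes $x$ to be a \emph{product} of $p$ independent uniform picks $X_1,\dots,X_p$ from $p$ disjoint intervals of length $\lfloor m/p\rfloor$; it then partitions the intervals by how much $y$ covers them (the sets $S_1(y), S_2(y)$), proves a small counting lemma that $\size{S_2(y)} \geq \frac{4}{9}p - \frac{10}{9}$, and sums the per-coordinate entropy drops via sub-additivity. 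You instead take $x$ uniform over all $p$-subsets of $\set{1,\dots,m}$, bound the conditional entropy globally by the log of the support size $\log\binom{m/2}{p}$, and estimate the gap $\log\binom{m}{p} - \log\binom{m/2}{p} \geq p$ by noting each termwise ratio $\frac{m-i}{m/2-i}$ is at least $2$. Your route sidesteps both the interval structure and the $S_1/S_2$ counting, and is arguably cleaner here; the paper's product structure is more modular (it isolates each coordinate's contribution), which can help in more refined information-complexity arguments, but that extra machinery is not needed for this statement. Your handling of the error by conditioning on $W$ and paying $\ent(W) \leq 1$ is the same device the paper uses with its indicator $E$, and the reduction to a deterministic protocol (fixing the randomness against a fixed input distribution) is likewise the same. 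One small remark: \bob's output is a function of the transcript alone here simply because \bob{} has no input, so you don't even need the paper's observation that one may assume only \alice{} speaks.
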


The proof uses information theory.
The basic idea is to prove that any protocol that solves
\pc{} must contain $\Omega(p)$ bits
of information about the set $x$.
We now define the basic notions we use in the proof
(see the textbook~\cite{CoverT2006} for more background
and useful properties).

The \emph{entropy} of a random variable $X$ is defined as
$$\ent(X) = \sum_x \Pr[X=x] \log(1/ \Pr[X=x]).$$
It is well-known that entropy is maximal
for the uniform distribution:
if $X$ is an element chosen at random
from a set $T$ of size $t$,
then, by convexity,
\begin{equation} \label{eq: unifom entropy}
\ent[X]  =   \sum_{x\in T} \Pr[X=x] \log(1/\Pr[X=x])
\leq \log t.
\end{equation}
Shannon's coding theorem says that $\ent[X]$ is
a lower bound on the expected length of a
prefix-free encoding of $X$.
The entropy of a random variable $X$ conditioned on
a random variable $Y$ is defined as
$$\ent(X|Y) = \ent(X,Y) - \ent(Y).$$
The \emph{mutual information} between $X$ and $Y$ is
$$\I[X;Y] = \ent[X] - \ent[X\vert Y].$$
Intuitively, it measures the reduction in uncertainty about $X$
given the value of $Y$.

\begin{proof}[Proof of Theorem~\ref{thm:R>p}]
Let $\pi$ be a randomized protocol computing \pc{}
with $c$ bits of communication and error probability $1/3$.
Note that we can assume without loss of generality that only \alice{} speaks in $\pi$.
Below, we feed into $\pi$ a random set $X$,
so we may assume that $\pi$ is deterministic.
Denote by $\Pi = \Pi(x)$
the transcript of $\pi$ with input $x$.
Denote by $Y = Y(x)$ the subset of $\{1,\ldots,m\}$
of size $m/2$ that \bob{} outputs
when receiving message $\Pi(x)$.

The distribution on $X$ is defined as follows.
Let $I_1,\ldots,I_p$
be $p$ consecutive intervals
in $\set{1, \ldots, m}$, each of size $k = \lfloor m/p \rfloor$.
From each interval $I_i$,
choose one element $X_i$ uniformly at random and independently.
Set
$$X = \{X_1,\ldots,X_p\}.$$

First, we compute the entropy of $X$.
For each $i$, we have $\ent[X_i] = \log k$.
Thus, independence implies that
\[\ent[X] = \sum_{i=1}^p \ent[X_i]
        = p \log k.
\]

Second, we analyze
the entropy of $X$ conditioned on $Y$
and on the protocol outputting the correct answer.
Let $E$ be the indicator of the event
$$\{x \subset \{1,\ldots,m\} : |x|=p , \ Y(x) \cap x = \emptyset\}.$$
Note that since the protocol has an error probability bounded by $1/3$, it holds that $\Pr[E=1]\geq 2/3$.
Given $y$ of size $m/2$, split the intervals into two sets:
\[
S_1(y) = \set{i \in \{1,\ldots,p\} :  \size{I_i\cap y}\leq \frac{k}{10}} ,
\
S_2(y) = \set{i \in \{1,\ldots,p\} :  \size{I_i\cap y}> \frac{k}{10}}
.\]
Since $\size{S_1(y)} + \size{S_2(y)} = p$, it holds that
\begin{align*}
\frac{m}{2}
& = \size{y} \\
& \leq
\size{S_1(y)}\frac{k}{10} + \size{S_2(y)} k
+  k \\
& \leq
\frac{pk}{10} + \size{S_2(y)}\frac{9k}{10}
+ k,
\end{align*}
which by a simple calculation implies
$$\size{S_2(y)}\geq \frac{4}{9} p - \frac{10}{9}.$$
In addition, for each $y$ and
$i\in S_1(y)$,
using~\eqref{eq: unifom entropy}, we have
$$\ent[X_i \vert Y = y , E=1 ] \leq \log k.$$
For each $y$ and $i\in S_2(y)$,
conditioned on $E=1$ and $Y=y$,
the element $X_i$ is chosen
from $I_i \setminus y$,
whose size is at most
$k - \frac{k}{10} = \frac{9k}{10}$.
Thus, for each $y$ and $i\in S_2(y)$,
$$\ent[X_i \vert Y=y,E=1] \leq \log \frac{9k}{10}.$$
We can conclude this part as follows:\footnote{Note
the difference between conditioning on
an event and conditioning on a random variable.}
\begin{align*}
\ent[X\vert Y,E=1]&
\leq \sum_y \Pr[Y=y \vert E=1]
\left( \sum_{i\in S_1(y)} \ent[X_i \vert Y = y, E=1]
                +\sum_{i\in S_2(y)} \ent[X_i \vert Y =y , E=1]\right)
                \tag{sub-additivity}\\
 &\leq \sum_y \Pr[Y=y \vert E=1]
        \left((p - \size{S_2(y)}) \log k
               +   \size{S_2(y)} \log \frac{k}{10}\right) \\
&\leq  \sum_y \Pr[Y=y \vert E=1]      \left(   p \log k
                -   \size{S_2(y)}
                  \log \frac{10}{9} \right)\\
&\leq            p  \left( \log k
                -   \frac{4}{9} \log \frac{10}{9} \right) +2.
\end{align*}
Thus, using~\eqref{eq: unifom entropy},
 \begin{align*}
\ent[X\vert Y,E]
&=\Pr[E = 1] \cdot \ent[X\vert Y,E = 1] +
                \Pr[E = 0] \cdot \ent[X\vert Y,E = 0] \\
&\leq \frac{2}{3} \left( p  \left( \log k
                -   \frac{4}{9} \log \frac{10}{9} \right) +2 \right) +
                \frac{1}{3} \cdot p \log k \tag{$\Pr[E=1] \geq 2/3$} \\
& \leq 2+p \log k
                -   \frac{4p}{9} \log \frac{10}{9} .
\end{align*}

We now have
\begin{align*}
\I[X;Y]
& = \I[X;Y,E] - \I[E;X|Y]
\tag{chain rule} \\
& = \ent[X] - \ent[X|Y,E] - \I[E;X|Y] \\
& \geq \frac{4p}{9} \log \frac{10}{9}  - 3,
\end{align*}
where the last inequality holds
since $E$ contains at most one bit of information.
Therefore,
\begin{align*}
c &\geq \ent(\Pi) \tag{Shannon's coding theorem}
\\
    & \geq \I(X;\Pi) \\
    & \geq \I(X;Y) \tag{information processing inequality} \\
    & \geq \frac{4p}{9} \log \frac{10}{9}  - 3  \\
    & \geq p/100 \tag{we may assume $p > 100$};
\end{align*}
where $c$ is the length of $\pi$, completing the proof.
\end{proof}

\subsection{A Lower Bound for constructing a $(+2)$-Pairwise Spanner}
\label{sec: +2 lb}

In this section we prove the following lower bound.

\begin{theorem}
\label{thm: 2p lb}
There is a constant $c > 0$ so that the following holds.
Any distributed protocol for the \cgst{} model
with success probability at least $2/3$ which,
given a graph with $n$ nodes
and a set of $p \leq c n^{3/2}$ pairs of nodes,
outputs a $(+2)$-pairwise spanner with at most $c n^{3/2}$ edges,
must take $\Omega( \frac{p}{n\log n} )$
rounds to complete.
The lower bound holds
even for graphs with constant diameter.
\end{theorem}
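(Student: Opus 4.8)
The plan is to reduce \pc{} to the distributed construction of a $(+2)$-pairwise spanner, using Theorem~\ref{thm:R>p} as the communication lower bound. The reduction must build a graph $G$ of constant diameter on $n$ nodes, together with a set $\calP$ of roughly $p$ pairs, such that: (i) \alice{} controls a collection of $m = \Theta(n^{3/2})$ ``candidate'' edges indexed by $x$, (ii) the positions of these edges determine a collection of short paths whose existence in $G$ is forced by the pairs in $\calP$, and (iii) from the edges of \emph{any} valid $(+2)$-pairwise spanner $H$, \bob{} can recover a set $y$ of size $m/2$ that is disjoint from $x$ — namely, $y$ will be (a subset of) the candidate edges that are \emph{absent} from $H$, and the spanner guarantee will force every edge of $x$ to survive in $H$.

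The construction I have in mind is the standard ``incidence graph'' layout. Take a set $A$ of $\Theta(\sqrt n)$ nodes and a set $B$ of $\Theta(n)$ nodes, plus a small constant number of auxiliary hub nodes so that the diameter stays $O(1)$. The $m = |A|\cdot|B| = \Theta(n^{3/2})$ potential edges between $A$ and $B$ are identified with $\{1,\dots,m\}$; \alice{}'s set $x$ names which of these edges are actually present in $G$, and $|x| = p$. For each present edge $(a,b)$ we place the pair $(a,b)$ (or a pair of nearby designated endpoints forcing a length-$1$ connection) into $\calP$, so $|\calP| = p$. The key geometric point: if the graph is arranged so that the only path of length $\le 3$ between $a$ and $b$ is the direct edge itself (e.g. every alternative route through the hubs has length $\ge 4$), then a $(+2)$-spanner is forced to keep the direct edge $(a,b)$, because $\delta_G(a,b)=1$ and $\delta_H(a,b)\le 3$ would otherwise fail. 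Hence every edge indexed by $x$ lies in $E(H)$. Since $|E(H)| \le cn^{3/2} = m/2$ (choosing constants so this holds), the set $y$ of $A$–$B$ edges \emph{not} in $H$ has size $\ge m/2$, and $y \cap x = \emptyset$; \bob{} outputs any $m/2$-subset of it.

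Next comes the simulation/communication accounting. In the distributed protocol, \alice{} simulates all nodes and edges ``on her side'' (the $A$–$B$ bipartite part, which is the only part depending on $x$) and \bob{} simulates the rest; the cut between the two parts has $O(\sqrt n)$ edges (the $A$ side, or the hub interface — whichever is smaller), actually we want the interface to be $O(\sqrt n)$ nodes. Each round of the \cgst{} protocol crosses this cut with $O(\sqrt n \cdot \log n)$ bits in each direction, so an $R$-round protocol yields a communication protocol for \pc{} of length $O(R \sqrt n \log n)$. At the end, \bob{} knows which of his simulated edges are in $H$; by a short final exchange ($O(\sqrt n \log n)$ bits, or folded into the simulation) \bob{} learns enough to determine the $A$–$B$ edges of $H$ — wait, those are on \alice{}'s side, so instead \alice{} sends \bob{} the $m/2$-subset description, or more cleanly we arrange that \bob{} outputs and this costs nothing extra because the spanner edges incident to \bob{}'s nodes are known to \bob{}. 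Combining with Theorem~\ref{thm:R>p}, $O(R\sqrt n\log n) \ge p/100$, giving $R = \Omega\!\left(\frac{p}{\sqrt n \log n}\right)$.

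I expect two places to need care. First, the exponent: the theorem claims $\Omega(p/(n\log n))$, not $\Omega(p/(\sqrt n\log n))$, so the cut must carry $\Theta(n)$ edges' worth of $\log n$-bit messages per round — meaning the interface is a linear-size node set, not $\sqrt n$. This suggests the real construction puts \emph{most} of the graph on one side with only the bipartite ``gadget'' part split, but the gadget part itself has $n^{3/2}$ edges spread so that the bisection-like cut is $\Theta(n)$; I would design $G$ as $\Theta(\sqrt n)$ disjoint ``stars/bicliques'' each contributing $\Theta(n)$ edges, with \alice{} and \bob{} splitting each gadget, so the cut is $\Theta(\sqrt n)\cdot\Theta(\sqrt n) = \Theta(n)$ — this is the main obstacle and the detail that must be gotten exactly right. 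Second, maintaining \emph{constant} diameter while forbidding short alternative paths: I would add a constant-depth tree of hubs connecting everything (diameter $O(1)$) but give each hub edge enough ``length'' via subdivision-free gadgets, or more simply ensure alternative routes pass through $\ge 2$ hub levels so their length is $\ge 4 > 3$. Checking that no unintended length-$\le 3$ path appears — across gadgets, through hubs, or using other present $x$-edges — is the fiddly verification the proof will have to carry out, and it constrains how densely the $m$ candidate edges can be packed, which is exactly why the threshold $p \le cn^{3/2}$ (and $|E(H)| \le cn^{3/2}$) shows up.
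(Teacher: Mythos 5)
Your high-level plan — reduce \pc{} to spanner construction, have the spanner guarantee force \alice's candidate edges to survive, and have \bob{} output a large complementary set — is the right plan, and it is what the paper does. But the specific construction you sketch has a gap that the paper's construction is carefully designed to avoid, and I don't think your version can be patched without essentially rediscovering the paper's idea.

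The problem is that you let the \emph{graph itself} depend on \alice's input $x$: the $m$ candidate $A$--$B$ edges are present or absent according to $x$. This breaks the simulation in one of two ways. If the candidate edges cross the \alice--\bob{} cut (say \alice{} simulates $A$ and \bob{} simulates $B$), then every node of $B$ already knows in round zero which candidate edges are incident to it — in the \cgst{} model initial knowledge includes one's neighbor list — so \bob{} learns $x$ for free and the reduction is vacuous. If instead you hand the entire bipartite gadget to \alice{} (as you actually write), the candidate edges never appear on \bob's side, so at the end of the simulation \bob{} has no way to determine which of them are in $H$; you notice this (``wait, those are on \alice's side'') and the fallback of \alice{} sending the $m/2$-subset description costs $\Theta(n^{3/2})$ bits, which destroys the bound. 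There is no cheap final exchange: the whole point of the communication lower bound is that $\Omega(p)$ bits \emph{must} be conveyed, and your construction routes exactly that information to the wrong player's private side.

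The paper's construction resolves both issues in one stroke by keeping $G$ \emph{fixed} (independent of $x$) and encoding $x$ only in the set of pairs $\calP$, which a node knows about iff it participates. Concretely: $G'$ is a projective-plane incidence graph (girth $6$, diameter $3$, $\Theta(n^{3/2})$ edges), and to each node $v_i'$ of $G'$ one attaches a pendant node $v_i$. \alice{} simulates the pendant nodes $V_A$; \bob{} simulates $V_B = V(G')$ and all of $G'$'s edges. The set $x$ selects which $G'$-edges $\{v_i',v_j'\}$ induce a pair $(v_i,v_j)\in\calP$; these pairs live entirely on \alice's side, so \bob{} has no initial information about $x$. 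For each selected pair, $\dist_G(v_i,v_j)=3$, so a $(+2)$-spanner has $\dist_H(v_i',v_j')\le 3$, and girth $6$ forces $\{v_i',v_j'\}\in H$; since those edges live on \bob's side, \bob{} reads off $y$ directly. The cut consists of the $n/2$ pendant edges, giving $O(n\log n)$ bits per round and hence $\Omega(p/(n\log n))$ rounds. This also dissolves your ``fiddly verification'' about unintended short paths and about keeping constant diameter: the projective-plane incidence graph supplies girth $6$ and diameter $3$ simultaneously, with no auxiliary hub gadgets to analyze.
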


The theorem implies a lower bound of
$\Omega\left(\sqrt{n}/\log n\right)$
on the number of rounds needed
for an algorithm in the \cgst{} model
to output a $(+2)$-pairwise spanner,
when $\size{\calP} = \Theta(n^{3/2})$.
For comparison, the time for constructing such a spanner
using \twop{} can (roughly) vary between $n^{3/4}$ and $n$,
depending on the structure of $\calP$.

The graph $G$ for which the lower bound is proved
is defined as follows.
Let $n$ be such that there is a finite projective plane with
$n/4$ points and $n/4$ lines.
Let $G'$ be the point-line incidence graph with $n/2$ nodes
(see e.g.~\cite[\S 4.5]{Matousek2002}).
The graph $G'$ has $\Theta(n^{3/2})$ edges, girth
$6$ and diameter $3$.\footnote{The \emph{girth} of a graph
is the length of the shortest simple cycle in it.}
We denote the nodes of $G'$ by
$V_B= \{v_1',\ldots, v_{n/2}'\}$.
The graph $G$
consists of $G'$,
an additional $n/2$ nodes denoted
$V_A = \{v_1,\ldots, v_{n/2}\}$,
and an additional $n/2$ edges of the form $(v_i, v_i')$.

In the pairwise spanner we construct,
we wish to approximately preserve distances between pairs of nodes in $V_A$,
i.e.\ $\calP\subseteq V_A\times V_A$.
The main observation is that,
since the girth of $G'$ is $6$,
if $e' = \{v'_i,v'_j\}$
is an edge of $G'$ then the following holds.
If $(v_i, v_j)\in \calP$
then any $(+2)$-pairwise spanner
must contain the edge $e'$,
as otherwise the distance is stretched from $3$ to $7$,
which exceeds the required $+2$ stretch.
On the other hand, if $(v_i, v_j)\notin \calP$ then the
edge $e'$ can be safely omitted from the spanner.

\begin{proof}[Proof of Theorem~\ref{thm: 2p lb}]
Fix a distributed protocol $\sigma$ for constructing
a $(+2)$-pairwise spanner with at most $m/2$ edges.
Let $G$ be the graph described
above,
and denote the edges of $G'$ by $e_1,\ldots, e_m$.

We describe a reduction
from \pc{} to $\sigma$.
Assume \alice{} has a set
$x \subseteq \set{1,\ldots, m}$ of size $p$,
and \bob{} has to output a set
$y \subseteq\set{1,\ldots, m}$ of size $m/2$
satisfying $x\cap y = \emptyset$.
\alice{} and \bob{} simulate $\sigma$
on the graph $G$ with the set of pairs
$$\calP= \set{(v_i,v_j) : \exists k\in x \
e_k=\{v_i', v_j'\}}.$$
That is, a pair $(v_i,v_j)$ is in $\calP$ if the corresponding
pair $(v_i',v_j')$ is an edge $e_k$ whose index $k$
is in $x$.
\alice{} simulates the nodes in $V_A$,
and \bob{} simulates the nodes $V_B$ and the edges among them.
To simulate communication on edges of the form $(v_i, v_i')$,
\alice{} and \bob{} communicate.
Note that $\calP$ contains only pairs of nodes
that are simulated by \alice{}.

The spanner constructed is a subgraph $H$ of $G$
with at most $m/2$ edges,
satisfying $\delta_H(v_i,v_j)\leq \delta_G(v_i,v_j)+2$
for all $(v_i, v_j)\in \calP$.
For each such pair, by definition of $\calP$, we have
$\delta_G (v_i, v_j) = 3$,
which implies
 $\delta_H (v_i, v_j) \leq 5$
 and $\delta_H (v_i', v_j') \leq 3$.
The fact that $G'$ has girth $6$ implies
that the edge $\{v_i', v_j'\}$ must be in $H$.
Let $$y= \set{k : e_k\in E_G\setminus E_H}.$$
The spanner size implies $\size{y}\geq m/2$,
while the above discussion implies $x\cap y = \emptyset$.
Thus, \bob{} can output a subset of $y$ of size $m/2$,
solving the communication complexity problem.

By the communication complexity lower bound,
\alice{} and \bob{} must communicate $\Omega(p)$ bits
during the simulation.
The number of edges they simulate together is $n/2$,
and $O(\log n)$ bits are sent over each edge
at each round.
Thus, the protocol must take
$\Omega\left( \frac{\size{\calP}}{n\log n} \right)$
rounds to complete.
\end{proof}

\subsection{Generalization: A Lower bound for constructing an $(\alpha,\beta)$-Pairwise Spanner}

Recall that an $\left(\alpha,\beta\right)$-pairwise spanner for a graph $G$
and a set $\calP$ of pairs of nodes
is a subgraph $H$ of $G$ satisfying
$\delta_H(u,v)\leq \alpha \delta_G(u,v)+\beta$
for every $(u,v)\in\calP$.

To obtain our lower bound for any $\alpha \geq 1, \beta \geq 0$,
we first study the tradeoff between the girth
and number of edges in a graph.
The most relevant claim for this question is Erd\H{o}s' girth conjecture:

\begin{conjecture}[Erd\H{o}s' Girth Conjecture~\cite{Erdos64}]
For every $g$ there is a constant $c$
such that there exists a graph on $n$ nodes
with girth $g$ and $cn^{1+\frac{1}{\ceil{g/2}-1}}$ edges.
\end{conjecture}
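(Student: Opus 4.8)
The plan is to first fix the target and then attack the construction head on. The exponent in the conjecture is exactly what the Moore bound permits: a graph with minimum degree $d$ and girth $g$ has $\Omega(d^{\ceil{g/2}-1})$ vertices, so an $n$-vertex graph of girth $g$ has average degree $O(n^{1/(\ceil{g/2}-1)})$ and hence only $O(n^{1+1/(\ceil{g/2}-1)})$ edges; a proof of the conjecture must therefore exhibit, for every $g$, graphs meeting this bound up to a constant factor. The cases $g\le 8$ are already settled — $g\in\{3,4\}$ trivially, $g\in\{5,6\}$ by the point--line incidence graph of a projective plane (exactly the graph $G'$ used in Section~\ref{sec: +2 lb}), and $g\in\{7,8\}$ by the incidence graph of a generalized quadrangle — as is $g\in\{11,12\}$, via the incidence graph of a generalized hexagon; the genuinely open cases are $g\in\{9,10\}$ and every $g\ge 13$.

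Next I would record the two generic approaches and pinpoint where each stalls. The probabilistic deletion method --- take $G(n,p)$ with $p\asymp n^{-1+1/(g-1)}$, note that the expected number of cycles of length less than $g$ is a vanishing fraction of the expected number of edges, and delete one edge from each --- yields a graph of girth $g$ with $\Theta(n^{1+1/(g-1)})$ edges; this is short of the target by essentially a factor of two in the denominator of the exponent, and the loss is intrinsic to working over an unstructured host. The algebraic approach does better: the incidence graphs of finite generalized $m$-gons realize the bound exactly, but the Feit--Higman theorem forbids thick finite generalized $m$-gons for $m\notin\{2,3,4,6,8\}$, and even $m=8$ fails to attain the conjectured density, so this route is capped. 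The intermediate algebraic families defined by polynomial equations over $\mathbb{F}_q$ --- the Lazebnik--Ustimenko graphs $D(k,q)$, the Wenger graphs, and their relatives --- are $q$-regular on $\Theta(q^k)$ vertices with girth $\ge k+O(1)$, i.e.\ $\Theta(n^{1+1/(g-O(1))})$ edges for girth $g$: once again a constant factor off, the denominator being roughly $g$ rather than the required $g/2$.

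The route I would pursue is to close this factor on the algebraic side, through one of three sub-targets. First, prove a stronger girth bound for $D(k,q)$-type graphs: a short cycle corresponds to a nontrivial solution of a bounded subsystem of the defining equations, whose ``cascade'' structure over $\mathbb{F}_q$ might be exploited --- by a rank or resultant argument --- to rule out any cycle of length much below $2k$; taking $k\approx g/2$ this would yield the target density $\Theta(n^{1+1/(\ceil{g/2}-1)})$ and settle the conjecture. Second, search for new incidence geometries that relax the generalized-polygon axioms just enough to evade Feit--Higman while keeping Moore-optimal density --- for instance incidence structures extracted from higher-rank buildings or from flag varieties --- aiming first at a single new girth such as $g=10$ or $g=14$ and then at a uniform family. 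Third, try to \emph{repair} a lower-girth extremal graph: since for odd $g$ a girth-$(g-1)$ graph is allowed strictly more edges than the girth-$g$ target, one could start from the dense, explicit generalized-polygon incidence graph of girth $g-1$ and delete a sub-constant fraction of edges to destroy all shortest cycles, provided those cycles are sparse and well separated.

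The main obstacle is that each of these sub-targets is itself a recognized open problem. Determining the girth of $D(k,q)$ beyond the known $k+O(1)$ is the long-standing Lazebnik--Ustimenko girth problem; producing a genuinely new Moore-dense high-girth geometry is, in effect, the conjecture restated; and the repair idea provably loses too much --- a short calculation shows, for example, that randomly sparsifying the generalized-quadrangle incidence graph down to $\Theta(n^{5/4})$ edges still leaves $\omega(n^{5/4})$ eight-cycles, so the number of edges one must delete swamps the number one may keep, and no cleverer deletion is known to do better. In short, this plan reduces Erd\H{o}s' girth conjecture to the construction of one new family of dense graphs of arbitrary girth --- precisely the content of the conjecture --- but it isolates the sharper girth analysis of polynomial incidence graphs over $\mathbb{F}_q$ as the most concrete and promising avenue.
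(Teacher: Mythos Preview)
The statement you were asked to prove is a \emph{conjecture}, and the paper does not prove it: it is quoted verbatim as an open problem, used only as context for the lower-bound constructions in Section~\ref{sec:lowerbounds}. There is therefore no ``paper's own proof'' to compare against. Your proposal is not a proof either --- it is a survey of known partial results and possible lines of attack, and you say so yourself in the final paragraph: each of your three sub-targets ``is itself a recognized open problem,'' and your plan ``reduces Erd\H{o}s' girth conjecture to the construction of one new family of dense graphs of arbitrary girth --- precisely the content of the conjecture.'' That is an honest assessment, but it means there is no argument here to review for correctness; what you have written is a research prospectus, not a proof.

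One factual point worth flagging: your list of settled cases does not match the paper's. You write that $g\in\{9,10\}$ are genuinely open while $g\in\{7,8,11,12\}$ are settled via generalized quadrangles and hexagons; the paper's Theorem~\ref{thm: girth edges tradeoff} (cited from \cite{Matousek2002}) instead records $g\in\{3,4,5,6,9,10\}$ as the cases where the conjectured density is known to be achieved. Whichever list one takes, the conclusion is the same --- the conjecture is open for infinitely many $g$ --- but if you intend to rely on specific settled girths you should reconcile your claims with the cited source.
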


For example,
for $g=3$, the complete graph on $n$ nodes has roughly $n^2/2$ edges,
and for $g=4$ the full bipartite graph
has $n^2/4$ edges.
For $g=5$ and $g=6$ there exist a graph with $n^{3/2}$ edges,
which we used in the last section.

The conjecture is known to be true
for a few values of $g$,
while for the other values there are constructions
with slightly less edges:
\begin{theorem}[see, e.g.{~\cite[\S 15.3]{Matousek2002}}]
\label{thm: girth edges tradeoff}
For every $g\geq 3$ and $n\geq 2$
there is a graph on $n$ vertices with girth $g$ and
$\Omega\left(n^{1+\frac{4}{3g-10}} \right)$ edges.
For $g\in \set{3,4,5,6,9,10}$ there is a constant $c$
such that for every $n\geq 2$
there is a graph on $n$ vertices with girth $g$
and $cn^{1+\frac{1}{\ceil{g/2}-1}}$ edges.
\end{theorem}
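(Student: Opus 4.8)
The plan is to prove the two halves by different means. For general $g$ I would use the probabilistic \emph{deletion} (alteration) method to get a clean bound of the right shape, and then lean on explicit dense high-girth constructions to push the exponent up to the value claimed; for $g\in\set{3,4,5,6,9,10}$ I would simply exhibit the classical extremal graphs from incidence geometry. Throughout, ``girth $g$'' is read as ``girth at least $g$''; forcing the girth to equal $g$ exactly costs only a disjoint $g$-cycle ($O(g)$ extra vertices and edges), and the finitely many small $n$ are absorbed into the hidden constant.

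For the general bound, first I would sample $G\sim G(n,p)$, let $X$ be its number of edges and $Y$ its number of cycles of length at most $g-1$, and fix $p$ afterwards. Bounding the number of potential $i$-cycles in $K_n$ by $n^i$ gives $\E[X]=\binom{n}{2}p=\Theta(n^2p)$ and $\E[Y]\leq\sum_{i=3}^{g-1}n^ip^i$. Taking $p=c\,n^{-1+1/(g-2)}$ with $c=c(g)>0$ small enough makes every term satisfy $c^in^{i/(g-2)}\leq c^3n^{1+1/(g-2)}$, so that $\E[Y]\leq\frac{1}{2}\E[X]$; by linearity of expectation some outcome then has $X-Y=\Omega(n^2p)=\Omega(n^{1+1/(g-2)})$, and deleting one edge from each cycle of length at most $g-1$ in that outcome leaves a graph of girth at least $g$ with $\Omega(n^{1+1/(g-2)})$ edges. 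This already proves a bound of the stated form, but with the weaker exponent $1/(g-2)$ rather than $4/(3g-10)$. To reach the claimed exponent I would not push the probabilistic argument further but instead appeal to the explicit dense high-girth families (incidence graphs of generalized polygons, and graphs of Lazebnik--Ustimenko--Woldar and Lubotzky--Phillips--Sarnak type) together with the usual padding argument to cover all $n$; this is the quantitative content recorded in \cite[\S 15.3]{Matousek2002}.

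For the listed values of $g$ the extremal graphs are classical: $K_n$ for $g=3$, with $\Theta(n^2)$ edges; the balanced complete bipartite graph for $g=4$, which is triangle-free with $n^2/4$ edges; the point--line incidence graph of a finite projective plane of order $q$ (or its polarity graph) for $g=5,6$, which is $\Theta(q)$-regular on $\Theta(q^2)$ vertices of girth $6$, hence has $\Theta(n^{3/2})=\Theta(n^{1+1/(\ceil{g/2}-1)})$ edges; and for the remaining values the incidence graphs of the corresponding generalized polygons, the regular bipartite graphs attaining the Moore bound $cn^{1+1/(\ceil{g/2}-1)}$. In each case the girth is read off the incidence axioms (two distinct points share at most one line, and so on), the structures exist whenever $q$ is a prime power, and arbitrary $n$ is handled by taking the largest admissible $q$ and padding with isolated vertices, which only affects the constant.

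The step I expect to be the real obstacle is the sharp general exponent $4/(3g-10)$: the deletion method is elementary but only yields $1/(g-2)$, and closing the gap requires the explicit algebraic constructions, and it is their girth---not their edge count---that is the genuinely technical point; the complete-graph, complete-bipartite, and projective-plane cases are routine by comparison.
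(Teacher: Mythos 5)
The paper does not prove this theorem; it is a black-box citation to Matou\v{s}ek's textbook, so there is no ``paper proof'' to compare against, and your attempt is best judged on its own. Your deletion-method argument for the general case is correctly set up and correctly identified as yielding only the exponent $1/(g-2)$; deferring to the explicit Lazebnik--Ustimenko--Woldar-type constructions for the sharper $4/(3g-10)$ (and absorbing arbitrary $n$ by padding) is exactly the standard route, and it is fair to flag it rather than reprove it. Your identifications for $g=3,4,5,6$ ($K_n$, balanced complete bipartite, polarity graph, projective-plane incidence graph) are all correct.

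The genuine gap is the sentence handling ``the remaining values'' $g\in\{9,10\}$ via ``incidence graphs of the corresponding generalized polygons.'' The incidence graph of a generalized $d$-gon has girth $2d$, so girth $10$ would require a thick generalized $5$-gon, and the Feit--Higman theorem rules those out: thick generalized $d$-gons exist only for $d\in\{2,3,4,6,8\}$. There is no generalized pentagon, and indeed Erd\H{o}s' girth conjecture for $k=4$ (girth $\geq 10$ with $\Omega(n^{5/4})$ edges) is open. The set of girths for which the conjecture is actually verified is $\{3,4,5,6,7,8,11,12\}$, realized by $K_n$, $K_{n/2,n/2}$, projective planes, generalized quadrangles, and generalized hexagons (odd girths piggyback on the next even one plus a disjoint odd cycle). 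The theorem's set $\{3,4,5,6,9,10\}$ is almost certainly a transcription error for something like $\{3,4,5,6,8,12\}$; your proof inherits that slip and then compounds it by asserting a construction that does not exist. To make the sketch sound you should either correct the index set or explicitly say that the $g=9,10$ cases cannot be handled by any known construction.
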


Theorem~\ref{thm: 2p lb} and its proof extend to
$(\alpha,\beta)$-pairwise spanners for any $\alpha\geq 1$
and $\beta\geq 0$,
with the appropriate choice of $G'$.
The only thing left is to make sure the diameter is constant
even if the diameter of $G'$ is not.
Bounding the diameter also allows us to derive a lower bound
conditioned on Erd\H{o}s' girth conjecture,
a conjecture which claims nothing about the diameter
of the graph.

\begin{theorem}
\label{thm: beta-p lb}
Let $\alpha\geq 1, \beta\geq 0$ be constants, and $g=3\alpha+\beta$.
There is a constant $c = c(g)$ so that the following holds.
Any distributed protocol for the \cgst{} model
with success probability at least $2/3$ which,
given a graph with $n$ nodes
and a set of $p \leq c n^{1+\frac{4}{3g-10}}$
pairs of nodes,
outputs an $(\alpha,\beta)$-pairwise spanner with at most $c n^{1+\frac{4}{3g-10}}$ edges,
must take $\Omega\left( \frac{p}{n\log n} \right)$
rounds to complete.
The lower bound holds even for graphs with diameter
$O(g)$.

For $g\in \set{3,4,5,6,9,10}$,
and for any constant $g$ if Erd\H{o}s' girth conjecture is true,
the bound on $p$ and on the number of edges
can be replaced by $cn^{1+\frac{1}{\ceil{g/2}-1}}$.
\end{theorem}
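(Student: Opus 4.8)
The plan is to imitate the proof of Theorem~\ref{thm: 2p lb}, replacing the projective‑plane incidence graph (girth $6$, diameter $3$) by an extremal graph of girth $g$, and re‑checking that deleting one of its edges breaks the $(\alpha,\beta)$‑guarantee rather than just the $(+2)$‑guarantee.

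First I would fix the parameters. Set $g = 3\alpha+\beta$ (a constant; if $3\alpha+\beta$ is not an integer, take any integer $g$ with $g+1 > 3\alpha+\beta$, e.g.\ $g = \ceil{3\alpha+\beta}$). By Theorem~\ref{thm: girth edges tradeoff} there is a graph $G'$ on $n/2$ vertices $V_B = \set{v_1',\ldots,v_{n/2}'}$ with girth $g$ and $m = \size{E(G')} = \Omega\!\left(n^{1+\frac{4}{3g-10}}\right)$ edges; for $g\in\set{3,4,5,6,9,10}$, or for every $g$ under Erd\H{o}s' girth conjecture, I would instead use $m = \Omega\!\left(n^{1+\frac{1}{\ceil{g/2}-1}}\right)$. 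As in Theorem~\ref{thm: 2p lb}, let $G$ be $G'$ together with a new independent set $V_A = \set{v_1,\ldots,v_{n/2}}$ and the pendant edges $\set{v_i,v_i'}$, so that $\size{V(G)} = n$ up to rescaling; enumerate $E(G')$ as $e_1,\ldots,e_m$.

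The core step is the girth calculation. If $e_k = \set{v_i',v_j'}$ then $\dist_G(v_i,v_j) = 3$, since $v_i$ and $v_j$ have degree $1$ in $G$. If $e_k$ is absent from a subgraph $H$ of $G$, then every $v_i'$–$v_j'$ path in $H$ stays inside $V_B$ (the nodes of $V_A$ are pendants) and, together with $e_k$, closes a cycle of $G'$, hence has length at least $g-1$; consequently $\dist_H(v_i,v_j) \ge 2 + (g-1) = g+1 > 3\alpha+\beta = \alpha\,\dist_G(v_i,v_j) + \beta$. Therefore any $(\alpha,\beta)$‑pairwise spanner for a pair set $\calP$ containing $(v_i,v_j)$ must keep $e_k$, whereas $e_k$ may be dropped when $(v_i,v_j)\notin\calP$. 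With this in hand the reduction from \pc{} is verbatim that of Theorem~\ref{thm: 2p lb}: given $x\subseteq\set{1,\ldots,m}$ with $\size{x}=p$, put $\calP = \set{(v_i,v_j) : \exists k\in x,\ e_k=\set{v_i',v_j'}}$; \alice{} simulates $V_A$ and \bob{} simulates $V_B$ together with the edges of $G'$, they run the spanner protocol on $G$ by exchanging $O(\log n)$ bits per round over each of the $O(n)$ shared pendant edges, and \bob{} outputs a size‑$m/2$ subset of $y = \set{k : e_k\notin E_H}$. A spanner with at most $m/2$ edges gives $\size{y}\ge m/2$, the girth calculation gives $x\cap y=\emptyset$, and taking the constant $c$ small enough that $cn^{1+\frac{4}{3g-10}}\le m/3$ makes the \pc{} instance legal. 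Theorem~\ref{thm:R>p} then forces $\Omega(p)$ bits of communication, yielding the $\Omega(p/(n\log n))$ round lower bound.

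The step I expect to be the real obstacle is bounding the diameter of $G$ by $O(g)$, which is automatic in Theorem~\ref{thm: 2p lb} (there $G'$ is the projective plane, of diameter $3$) but not for an arbitrary girth‑$g$ graph. One cannot patch this by adding shortcut edges to $G$: any new short $v_i$–$v_j$ path would let a spanner avoid $e_k$ and destroy the reduction, so the small diameter has to be intrinsic to $G'$. I would therefore argue that the extremal graphs invoked above can be taken with diameter $O(g)$ — for the exceptional girths this is classical (incidence graphs of generalized polygons and polarity graphs have diameter $\Theta(g)$), and the general girth‑$g$ constructions behind Theorem~\ref{thm: girth edges tradeoff}, as well as the graphs conjectured by Erd\H{o}s, are likewise of diameter $O(g)$. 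Then the diameter of $G$ is at most that of $G'$ plus $2$, i.e.\ $O(g) = O(1)$, which completes the proof, including the conditional statement under Erd\H{o}s' girth conjecture obtained from the sharper value of $m$.
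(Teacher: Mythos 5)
There is a genuine gap in the proposal, and it is precisely at the step you flag as "the real obstacle."

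Your girth calculation and the reduction from \pc{} are correct, but your resolution of the diameter issue is wrong. You assert that the extremal girth-$g$ graphs behind Theorem~\ref{thm: girth edges tradeoff}, and in particular the graphs posited by Erd\H{o}s' conjecture, can be taken with diameter $O(g)$, but this is not given by those sources, and the paper explicitly disclaims it: the conjecture ``claims nothing about the diameter of the graph.'' Asserting an intrinsic $O(g)$ diameter is therefore not a proof, and the conditional part of the theorem would be entirely unsupported. More importantly, your claim that ``one cannot patch this by adding shortcut edges to $G$\ldots so the small diameter has to be intrinsic to $G'$'' is a misconception, and it is exactly the misconception the paper's construction is designed to avoid. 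The paper bounds the diameter extrinsically: it adds a single new hub node $u$ and, from every $v_i'$, a vertex-disjoint path of $\floor{g/2}$ fresh internal nodes to $u$. This forces $\mathrm{diam}(G)=O(g)$ regardless of $\mathrm{diam}(G')$, while \emph{not} creating any harmful shortcut: any $v_i'$--$v_j'$ route through $u$ has length at least $2\bigl(\floor{g/2}+1\bigr) \ge g+1 > g$, so it is too long to let a spanner evade a missing edge of $G'$, and since the paths are vertex-disjoint the girth does not decrease. The key new idea in this theorem relative to Theorem~\ref{thm: 2p lb} is exactly this ``long enough to be useless, short enough to shrink the diameter'' gadget, and your proposal omits it. Without it, the theorem as stated (in particular its diameter clause and its conditional clause) does not follow.
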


The theorem implies a lower bound of
$\Omega\left(n/\log(n)\right)$ rounds
for any algorithm in the \cgst{} model
which outputs pairwise preserver
with $o(n^2)$ edges,
when $\size{\calP} = \Theta(n^2)$.
The trivial algorithm that builds a BFS tree
from any node appearing in $\calP$
runs in $O(n)$ rounds and returns
a spanner with $O(n^2)$ edges.
Hence, we cannot expect to asymptotically improve
upon the running time and upon the number of edges simultaneously.

For $(+4)$-pairwise spanners,
the theorem implies a lower bound of
$\Omega(n^{4/11}/\log n)$ rounds
for $\size{\calP} = \Theta(n^{15/11})$.
\fourp{} constructs such a spanner
in roughly $O(n^{107/77}\log^{6/7}n)$ rounds.
Assuming Erd\H{o}s' conjecture,
we can choose $\calP$ satisfying
$\size{\calP} = \Theta(n^{4/3})$,
and get a lower bound of $\Omega(n^{1/3}/\log n)$
rounds
and an algorithm running in
$O(n^{29/21}\log^{6/7}n)$ rounds.

The graph $G$ for which the lower bound is proved is defined
similarly to the graph in the proof of Theorem~\ref{thm: 2p lb},
with an extra construction that ensures that the diameter of $G$
is constant.
Let $G'$ be a graph on $n'$ nodes with girth $g=3\alpha+\beta$
and $m$ edges,
where $m=m(n',g)$ is the maximal possible number of edges
for these $n'$ and $g$.
Let $v_1',\ldots, v_{n'}'$ be the nodes of $G'$.
Add to $G'$ a new node $u$
and connect each node $v_i'$ to $u$ by a disjoint path of
$\floor{g/2}$ new nodes.
This increases the number of nodes in $G'$ and the number
of edges by a multiplicative $g$ factor,
does not decrease the girth,
and ensures the diameter is $O(g)$.

The lower bound graph $G$ consists of
$G'$, the node $u$,
the nodes and paths connecting each $v_i'$ to $u$,
and for each $v_i'$ another node $v_i$
connected to it by an edge $(v_i, v_i')$.
Let $V_A = \set{v_1,\ldots, v_{n'}}$ and $V_B$
be the set of all other nodes.

The pairwise spanner we construct approximately preserves
distances between pairs of nodes in $V_A$,
i.e.\ $\calP\subseteq V_A\times V_A$.
Since the girth of $G'$ is $3\alpha + \beta$,
if $e' = \{v'_i,v'_j\}$ is an edge in $G'$ and
$(v_i, v_j)\in \calP$,
then any $(\alpha,\beta)$-pairwise spanner must
contain the edge $e'$:
in $G$, the distance between $v_i$ and $v_j$ is $3$;
and if there is a path in $H$ connecting $v_i$ and $v_j$
with length at most $3\alpha+\beta$,
then this path connects $v_i'$ and $v_j'$ in $G'$
with $3\alpha+\beta -2$ edges,
which together with $e'$ closes a cycle
of length $3\alpha+\beta -1 <g$ in $G'$.
On the other hand, if $(v_i, v_j)\notin \calP$ then the
edge $e'$ can be omitted from the spanner.

\begin{proof}[Proof of Theorem~\ref{thm: beta-p lb}]
Fix a distributed protocol $\sigma$ for constructing
an $(\alpha,\beta)$-pairwise spanner.
Let $G$ be the graph described above,
and denote the edges of $G'$ by $e_1,\ldots, e_m$.
Theorem~\ref{thm: girth edges tradeoff} gives the bound
$m \geq 2c n^{1+\frac{4}{3g-10}}$ for some constant $c>0$.

We use the same reduction from \pc{} to $\sigma$:
\alice{} has a set $x \subseteq \set{1,\ldots, m}$
of size $p$,
and \bob{} has to output a set
$y \subseteq\set{1,\ldots, m}$ of size $m/2$
satisfying $x\cap y = \emptyset$.
They simulate $\sigma$ on the graph $G$ and the set of pairs
$\calP= \set{(v_i,v_j) : \exists k\in x \
e_k=\{v_i', v_j'\}}$;
\alice{} simulates the nodes in $V_A$
and \bob{} simulates the nodes $V_B$ and the edges among them.
To simulate communication on edges of the form $(v_i, v_i')$,
\alice{} and \bob{} communicate.
The bound on $p$ in the theorem statement comes from the fact that $\size{\calP}\leq m/2$.

The spanner constructed is a subgraph $H$ of $G$
with at most $m/2$ edges,
satisfying $\delta_H(v_i,v_j)\leq \alpha\cdot \delta_G(v_i,v_j)+\beta$
for all $(v_i, v_j)\in \calP$.
For each such pair, the edge $\{v_i', v_j'\}$ must be in $H$,
as explained above.
Let $y= \set{k : e_k\in E_G\setminus E_H}.$
The spanner size implies $\size{y}\geq m/2$,
while the above discussion implies $x\cap y = \emptyset$.
Thus, \bob{} can output a subset of $y$ of size $m/2$,
solving the communication complexity problem.

By the communication complexity lower bound,
\alice{} and \bob{} must communicate $\Omega(p)$ bits
during the simulation.
The number of edges they simulate together is at most $n$,
and $O(\log n)$ bits are transferred over each edge
at each round.
Thus, the protocol must take
$\Omega\left( \frac{\size{\calP}}{n\log n} \right)$
rounds to complete.
\end{proof}

\section{Discussion}
\label{sec:discussion}
This paper presents various algorithms for computing sparse purely additive spanners in the CONGEST model.
Our algorithms exhibit tradeoffs between the running time and the sparsity of the constructed spanners. By choosing different values for the parameter $h$,
one can obtain a spanner with the same stretch in a smaller number of rounds but at the expense of increasing the density.
This tradeoff is an important direction for future work.

Our lower bound uses a new communication complexity problem, and leverages the distributed nature of the system by using the fact that each node initially only knows the pairs in $\calP$ to which it belongs\footnote{In fact, our lower bound holds even if all nodes in pairs in $\calP$ know all of $\calP$.}. That is,
the topology of the graph used for the lower bound reduction is known completely to both \alice{} and \bob{}, regardless of their inputs to the \pc{} instance, while the uncertainty about the identity of the pairs in $\calP$ is what makes the problem hard.
While it might be unnatural to assume that other nodes know about these pairs, it is theoretically interesting to ask whether one can design faster distributed constructions given this information.

Finally, we believe that our new lower bound technique can be useful for proving additional lower bounds in the CONGEST model, as it diverges from reducing to the set-disjointness problem.

\paragraph{Acknowledgements}
We thank Merav Parter for a helpful discussion on the lower bound,
and the anonymous referees of DISC 2016 for valuable comments.

\let\oldbibliography\thebibliography
\renewcommand{\thebibliography}[1]{%
  \oldbibliography{#1}%
  \setlength{\itemsep}{0pt}%
}

\bibliographystyle{alpha}
\bibliography{Bibliography}

\end{document}